\documentclass[11pt,a4paper]{article}

\pdftrailerid{}
\newif\ifanon
\anonfalse

\usepackage[margin=1in]{geometry}
\usepackage[T1]{fontenc}
\usepackage{lmodern}
\usepackage{microtype}
\usepackage{amsmath,amssymb,amsthm,mathtools}
\usepackage{array}
\usepackage{booktabs}
\usepackage[ruled,vlined]{algorithm2e}
\usepackage[font=small,labelfont=bf,tableposition=top]{caption}
\usepackage{placeins}
\usepackage{tikz}
\usetikzlibrary{arrows.meta,positioning}
\usepackage{xcolor}
\definecolor{linkblue}{rgb}{0,0.25,0.6}
\usepackage[colorlinks=true,linkcolor=linkblue,citecolor=linkblue,urlcolor=linkblue,bookmarksnumbered=true,pdfusetitle]{hyperref}
\usepackage{orcidlink}
\usepackage{aliascnt}
\usepackage{cleveref}

\newcommand{\sharedtheorem}[2]{%
  \newaliascnt{#1}{theorem}%
  \newtheorem{#1}[#1]{#2}%
  \aliascntresetthe{#1}}
\theoremstyle{plain}
\newtheorem{theorem}{Theorem}[section]
\sharedtheorem{lemma}{Lemma}
\sharedtheorem{proposition}{Proposition}
\sharedtheorem{corollary}{Corollary}

\numberwithin{equation}{section}

\crefname{theorem}{Theorem}{Theorems}
\Crefname{theorem}{Theorem}{Theorems}

\crefname{lemma}{Lemma}{Lemmas}
\Crefname{lemma}{Lemma}{Lemmas}

\crefname{proposition}{Proposition}{Propositions}
\Crefname{proposition}{Proposition}{Propositions}

\crefname{corollary}{Corollary}{Corollaries}
\Crefname{corollary}{Corollary}{Corollaries}

\crefname{algorithm}{Algorithm}{Algorithms}
\Crefname{algorithm}{Algorithm}{Algorithms}

\newcommand{\Z}{\mathbb{Z}}
\newcommand{\E}{\mathbb{E}}
\newcommand{\Prb}{\mathbb{P}}
\newcommand{\Fq}{\mathbb{F}_q}

\DeclareMathOperator{\Var}{Var}
\DeclareMathOperator{\Cov}{Cov}

\newcommand{\ignore}[1]{}

\newif\ifshowcomments
\showcommentsfalse
\ifshowcomments
\newcommand{\pnote}[1]{[{\footnotesize \textsf{\color{cyan}{\bf Prashant:} { {#1}}}}]}
\newcommand{\hnote}[1]{[{\footnotesize \textsf{\color{magenta}{\bf Haoxing:} { {#1}}}}]}
\newcommand{\cans}[1]{[{\footnotesize \textsf{\color{blue}{\bf Answer:} { {#1}}}}]}
\else
\newcommand{\pnote}[1]{}
\newcommand{\hnote}[1]{}
\newcommand{\cans}[1]{}
\fi

\title{%
  Provably Subexponential Algorithms for Bounded-Error LWE with Few Samples%
  \ifanon\else
    \thanks{%
      This work was supported by the National Research Foundation, Singapore, under its NRF Fellowship programme, award no.~NRF-NRFF14-2022-0010.%
    }%
  \fi
}
\ifanon
  \author{}
\else
  \author{%
    Haoxing Lin\,\orcidlink{0000-0001-9594-1871}
    and
    Prashant Nalini Vasudevan\,\orcidlink{0000-0001-6880-795X}\\[0.5ex]
    National University of Singapore\\
    \texttt{haoxingl@comp.nus.edu.sg}, \texttt{prashant@comp.nus.edu.sg}%
  }
  \hypersetup{pdfauthor={Haoxing Lin, Prashant Nalini Vasudevan}}
\fi
\date{}

\begin{document}

\maketitle

\begin{abstract}

\pnote{Please verify everything below, and fix if needed.} \cans{I have clarified the modulus regime, and prior-work comparison, and included the larger error range allowed for bounded secrets.}

We show that a variant of the Wagner/BKW-style algorithm for LWE of Kirchner and Fouque~\cite{KF15} solves decision and search LWE with secrets of size $n$ over prime moduli $q=n^{\kappa+o(1)}$, for any fixed $\kappa>0$, with errors bounded by $\exp((\ln n)^{o(1)})$, in time $2^{O(n/\ln\ln n)}$ given $n+O(n/\ln\ln n)$ original samples. Further, if each coordinate of the secret is also bounded by $\exp((\ln n)^{o(1)})$, the same time bound holds with errors bounded by $n^{o(1)}$ and only $O(n/\ln\ln q)$ samples.

Earlier, it was known that Kirchner and Fouque's algorithm could solve LWE with these parameters in subexponential time with a much larger number of samples $m = \Omega(n\log{n})$~\cite{HKM18}. Ducas et al.~\cite{DEL25} showed that an adaptation of this algorithm could solve the corresponding dual, the $\mathrm{SIS}^{\infty}$ problem, with a suitable norm bound $q/\operatorname{polylog}(n)$ and $m=n+\omega(n/\ln\ln n)$ variables in subexponential time. Whereas their analysis involves interpreting the algorithm in terms of chains of lattices, our approach is quite elementary and is based on bounding the lower moments of a Fourier statistic.

\end{abstract}

\newpage
\tableofcontents
\newpage

\section{Introduction}
\label{sec:paper-introduction}

Learning with errors (LWE) asks us to recover a secret from noisy linear equations, or to distinguish those equations from uniformly random data. Its connection to worst-case lattice problems and its use in public-key encryption make the resources needed to solve it an important cryptographic question~\cite{Reg05}. Time and memory are two such resources; the number of equations used is another. For example, Regev's public-key encryption scheme publishes a finite collection of noisy equations under one secret. For this reason, it is important to understand the complexity of the problem given a limited number of equations. We study this complexity for bounded errors and polynomial prime moduli. 

% A small error bound can help an algorithm, but the benefit depends on both the number of equations and the secret's range. A secret with polylogarithmically bounded coordinates has at most $2^{O(n\ln\ln n)}$ candidates, whereas an arbitrary field secret has $q^n=2^{\Theta(n\ln n)}$ candidates. This difference suggests that the two secret models may admit different requirements on the number of original samples. Counting candidates alone, however, gives neither an efficient recovery procedure nor a guarantee that combinations of the original equations behave independently. Our aim is to obtain a provable subexponential algorithm while keeping the number of original samples small in each model.

\subsection{Our results}
\label{sec:paper-results}

% \pnote{Can we really say that we give this algorithm? Because the algorithm is essentially from that paper from 2015 or so, right? The best we can say is that they gave the algorithm and made a heuristic claim (did they also give a wrong proof?), Ducal et al. gave a proof that it works for SIS, and we prove that it also works for LWE. Right? Whichever is the case, the theorem statement doesn't have to change, but we have to be very clear about the extent of our contributions and not claim too much.} \cans{I have credited the prior work here. Our construction differs from KF15 and DEL25 in its reuse of entries and independent subsampling of compatible pairs, as explained in Prior work.}

Building on the Wagner/BKW approach to LWE~\cite{KF15} and its provable SIS variant~\cite{DEL25}, we prove that a variant of the $k$-tree algorithm gives decision and search algorithms using a sublinear number of original samples for secrets bounded by $\exp((\ln n)^{o(1)})$ and errors bounded by $n^{o(1)}$, and only a sublinear excess over the secret dimension for arbitrary secrets and errors bounded by $\exp((\ln n)^{o(1)})$. With these sample counts, our algorithms succeed with high probability for every fixed secret-error pair satisfying the stated bounds (and not just random secrets and errors).

\paragraph{Model and theorem.} In LWE, the input consists of $m$ noisy linear equations $b_j=\langle a_j,s\rangle+e_j\bmod q$, with unknown secret $s\in\mathbb{F}_q^n$ and error $e_j \in \mathbb{F}_q$. We collect their public vectors $a_j$ as the columns of $A\in\Fq^{n\times m}$ and their right-hand sides in $b\in\Fq^m$, so $b=A^\top s+e\bmod q$, where $\Fq=\Z/q\Z$. The \emph{search} problem is to recover $s\bmod q$; the \emph{decision} problem is to distinguish such $(A,b)$ from the case where $b$ is uniform independently of $A$. We take $A$ to be uniformly random and fix the secret-error pair $(s,e)$ independently of $A$ (note that the error coordinates need not be independent). We denote by $E_s,E_e\ge0$ the bounds on the absolute values of each coordinate of the secret and error, respectively. % Time is counted in bit operations (bit time), and space in bits; a worst-case bound holds for every input and every outcome of the random choices. We use $\ln$ for the natural logarithm and $\log_2$ for the binary logarithm.

% Keep the hyperlink destination with the visible first line of the statement.
\begin{theorem}[LWE with few samples]
\leavevmode\phantomsection\label{thm:headline-lwe}Fix $\kappa>0$ and $C>\min\{1,1/\kappa\}$, and let $q=n^{\kappa+o(1)}$ run through primes. Consider the LWE problem with $m$ equations, secret of length $n$, and modulus $q$, with $E_s$ and $E_e$ being bounds on the absolute values of the secret and error coordinates, respectively. There is a decision (resp. search) algorithm for LWE in each of the following regimes that succeeds with advantage (resp. probability) $1-o(1)$ for all large enough $n$:
\begin{itemize}
\item $E_s\le\exp((\ln n)^{o(1)})$, $E_e\le n^{o(1)}$, and $m\ge Cn/\ln\ln q$ \pnote{Why is the bound on $E_e$ different from the bounds on $E_s$ and $E_e$ in the point below? Please check this carefully and make sure you understand the reason for the difference, as it looks suspicious} \cans{Verified. In the first regime, $E_e=n^{o(1)}$ still gives leaf width $Q=q^{1/2+o(1)}$, whereas the tighter secret bound preserves the running-time constant through the filter widths. In the second regime, normal form makes the original errors the new secret, so they must satisfy this tighter bound.}
\item $E_e\le\exp((\ln n)^{o(1)})$ and $m\ge n+Cn/\ln\ln q$
\end{itemize}
The expected running time of these algorithms is $2^{(1/2+o(1))n/\ln\ln q}$, and the same bound holds with probability at least $1-1/n$. They work for uniformly random $A$ and any (even worst-case) $s$ and $e$ fixed independently of $A$, as long as they satisfy the above bounds. % Here the $o(1)$ term and the minimum dimension for these guarantees may depend on $C$, $\kappa$, and the fixed asymptotic growth of the bounds. \pnote{Is there no dependence of the running time of the algorithm on the choice of $C$, $\kappa$, and $\beta$?} \cans{The $1/2$ constant is the same for every fixed $C,\kappa$ and fixed asymptotic growth of the bounds. Their dependence is in the $o(1)$ term and in how large $n$ must be. I have made this explicit and specified the expected and high-probability time guarantees.}
\end{theorem}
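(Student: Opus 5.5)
The plan is to run a Kirchner--Fouque-style $k$-tree (Wagner/BKW) reduction on the columns of $A$, producing many short combinations $\sum_j c_j a_j$ whose first coordinates vanish modulo $q$ and whose remaining coordinates are small. From these we read off a Fourier statistic. Concretely, write $x\in\Z^m$ for a combination vector with small entries, built as a binary tree of depth $t\approx\log_2 k$. At each level we pair up entries whose partial sums agree in a block of coordinates, after rounding to a grid of width determined by a filter. The output vectors $x$ then satisfy $Ax\equiv u \pmod q$ with $u$ small, $\|x\|_\infty\le 2^t$ in the reuse-of-entries variant, and $\langle b,x\rangle=\langle s,u\rangle+\langle e,x\rangle$.

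When $E_s,E_e$ are as stated, $|\langle s,u\rangle+\langle e,x\rangle|$ is at most a bias-preserving fraction of $q$. Therefore $\E[\cos(2\pi\langle b,x\rangle/q)]\ge\varepsilon$ in the LWE case, whereas it is essentially $0$ when $b$ is uniform. Decision follows by averaging $N\gg\varepsilon^{-2}$ such statistics. Search follows in the standard way: run decision on guessed coordinates, or recover $s$ coordinate-by-coordinate by modulus switching through the Fourier peak. The second regime is reduced to the first via normal form. Use $n$ of the samples to make $A$ invertible on its first block, which happens with probability $1-o(1)$ for prime $q$. Rewrite the problem as LWE whose secret is the first $n$ errors (bounded by $\exp((\ln n)^{o(1)})$) with $m-n\ge Cn/\ln\ln q$ remaining samples. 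This explains why the tighter error bound is required there.

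Parameter choice is the next step. Take the number of levels $t=\log_2\ln\ln q-O(\log\log\log q)$, so that $2^t\approx \ln\ln q/\text{polylog}$ and leaves contain $2^{(1/2+o(1))n/\ln\ln q}$ candidate vectors. Each level eliminates about $n/t'$ coordinates with roughly $\log_2 q$ bits per coordinate. This is the usual balance, and it yields the $2^{(1/2+o(1))n/\ln\ln q}$ list sizes and hence the time bound. The few original samples are handled as follows. The leaves are not the original columns. Instead, they are all combinations $\sum_{j\in S} c_j a_j$ with $c_j\in\{-B,\dots,B\}$ over small subsets of the $m$ columns. There are enough of these because $(2B+1)^{m}$ exceeds the required list size once $m\ge Cn/\ln\ln q$ and $C>\min\{1,1/\kappa\}$; this is exactly where the constant constraint enters. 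The width $B$, and the growth $2^tB$ of $\|x\|_\infty$, are absorbed by the bounds $E_s\le\exp((\ln n)^{o(1)})$ and $E_e\le n^{o(1)}$, keeping the total noise below $q^{1/2+o(1)}$-type thresholds. The expected time bound plus Markov's inequality, or a truncated repetition, gives the high-probability bound.

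The main obstacle is the correlation analysis. The output vectors share leaves and are not independent, and $(s,e)$ is worst-case, so we cannot invoke independent-sample arguments. I would bound the first and second moments of $Z=\sum_{x}\omega^{\langle b,x\rangle}$ over the randomness of $A$ (and of the independent subsampling of compatible pairs), conditioned on the tree structure. $\E Z$ is governed by the bias per output. $\Var Z$ expands into pairs $(x,x')$. Pairs with $x\neq \pm x'$ contribute only when $A(x-x')$ collides, which for uniform $A$ over prime $q$ has probability about $q^{-n}$ times the number of compatible filters. Pairs sharing subtrees are controlled by the subsampling, which limits how many outputs contain a given leaf. Showing $\Var Z=o((\E Z)^2)$ then gives decision by Chebyshev. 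I expect the delicate part to be bounding the contribution of linearly dependent output vectors: there $x-x'$ is nonzero but lies in a small-dimensional set, and one must count such coincidences against the filter widths.
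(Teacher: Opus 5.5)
Your outline has the paper's skeleton: a $k$-tree with subsampled compatible pairs, the Fourier statistic $Z$, Chebyshev and Markov bounds at a threshold, coordinate-guess search on the same samples, normal form from $n$ reserved samples, and Markov for the runtime tail. Its quantitative core fails in two places.

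\textbf{The parameter schedule.} Your schedule cannot give the exponent $(\tfrac12+o(1))n/\ln\ln q$. A merge of two lists of size $N=2^L$ stays near size $N$ only if its filter costs at most $L$ bits. So level $i$ absorbs at most $L/\ell_i$ rows, where $\ell_i=\log_2(q/W_i)$ is the per-row filtering cost, and hence $n\le L\sum_i\ell_i^{-1}$. With roughly $\log_2 q$ bits per coordinate and $t\le\log_2\ln\ln q$ levels, this forces $L\ge n\log_2 q/t$, which is superexponential. Coarser filters do not rescue such a shallow tree. Keeping the secret's bias loss, of order $\sum_iM_iW_i^2\|s^{(i)}\|_2^2/q^2$, below $L/2$ (for worst-case $s$) forces $\ell_i\gtrsim\tfrac12\log_2\ln\ln q$ on a constant fraction of the rows. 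So $O(\log\ln\ln q)$ levels still give $L=\Omega(n)$.

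The paper instead takes polynomially many lists, $k\approx\min\{d,q\}/\log_2 n$, so the depth is $r=\Theta(\log n)$; this is limited by the error loss $O(dk/q)$ and by $2k\ll d$. Its filter widths are $W_i\approx q/(f\sqrt{M_i})$ with $f=\Theta(\max\{1,E_s\}\log n)$, widening toward the root. Then $\ell_i\approx\log_2 f+\tfrac12\log_2 M_i$ is small, and the harmonic-type sum $H=\sum_i\ell_i^{-1}\approx2\ln\ln q$ gives $L\approx n/H$.

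\textbf{The sample condition and the variance.} You place the condition $C>\min\{1,1/\kappa\}$ in the wrong step, and this hides the idea your variance analysis is missing. Having enough leaf combinations only needs $(2B+1)^m\ge N$, i.e.\ $m\log_2 Q\ge L$. That already holds for $m$ of order $n/(\ln\ln q\,\log q)$.

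The real constraint comes from linear dependencies among the at most $2k$ leaf vectors of a pair of leaf selections. Because all filters use the same matrix, a rank deficiency $\nu$ can raise the probability of passing every filter by up to $\prod_i\rho_i^{-\nu}\le N^{r\nu}$. The paper makes this rare enough by sampling every leaf as a dense, uniformly random combination $(1,X_2,\ldots,X_d)$ of all $d$ equations. Then each new leaf lies in the span of its predecessors with probability at most $Q^{-(d-1-2k)}$. The resulting requirement is $4kN^{r+1}\le Q^{d-1-2k}$, roughly $\tfrac12 d\ln q\ge L\ln k$ with $\ln k\approx\min\{1,\kappa\}\ln n$. This is what produces $d\ge Cn/\ln\ln q$ with $C>\min\{1,1/\kappa\}$. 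Your leaves, all combinations over small subsets of columns, contain many deterministic linear relations and would break this bound.

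Likewise, your variance mechanism (a pair contributes only when $A(x-x')$ collides, with probability about $q^{-n}$) is not the right one, since under LWE every term has a nonzero mean. What decorrelates two disjoint selections is the following: when their $2k$ leaf vectors are linearly independent, the node images under the uniform submatrices are exactly independent and uniform in the filter boxes given survival. The covariance therefore factorizes off the event $\{\nu>0\}$. On that event the weighted bound $\E[N^{r\nu}\mathbf 1_{\{\nu>0\}}]\le 2\varepsilon$ controls it. Selections sharing leaf entries are handled by giving each maximal shared subtree weight $N^{-1}$ relative to the baseline. You flag the dependent case as the delicate part but do not supply this localization-plus-rarity argument. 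Without it, neither $\Var[Z]\le16kN$ nor the uniform-case bound $\E[|Z|^2]\le10kN$ follows.
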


We prove \Cref{thm:headline-lwe} at the end of Section~\ref{sec:paper-reductions}.
% The fixed-pair guarantee also holds after averaging over any joint distribution of secret-error pairs satisfying these bounds and independent of $A$; error coordinates may be correlated. The resource bounds cover the entire search computation. \Cref{prop:reduction-resources} proves the runtime bound, termination with probability $1$, and worst-case space bound. The sample condition with fixed $C>1$ is sufficient for our analysis; its optimality remains open. 
The core is a decision algorithm for bounded secrets and errors. We obtain the search algorithm by adapting Regev's coordinate-guess reduction~\cite{Reg05}, and reduce arbitrary-secret instances to bounded-secret ones using the normal-form transformation of Applebaum--Cash--Peikert--Sahai~\cite{ACPS09}. The second regime has the tighter error bound because normal form makes the original errors the new secret, which must satisfy the first regime's secret bound. The algorithm is essentially a careful application of Wagner's $k$-tree algorithm~\cite{Wag02}. The formal decision algorithm is \Cref{alg:bounded-input-decision}, with its correctness stated in \Cref{prop:bounded-input-decision}; \Cref{sec:paper-reductions} gives the extensions to search and arbitrary secrets, and \Cref{prop:reduction-resources} bounds the resources of the full algorithms.

\paragraph{The $k$-tree algorithm.} Wagner's $k$-tree algorithm addresses the generalized birthday problem: choose one entry from each of $k$ lists of vectors so that their sum is zero~\cite{Wag02}. Its idea is to build candidate sums in stages. For four lists of binary vectors, with addition modulo two, first combine lists 1 and 2 and lists 3 and 4, keeping only sums that vanish on a chosen block of coordinates. Then combine the resulting two lists to find sums that vanish on the remaining coordinates as well. When $k$ is a power of two, these pairwise merges form a binary tree, with the input lists at its leaves and the final list at its root. Filtering limits intermediate list growth, while sums remain zero on coordinates processed at earlier levels.

For bounded-secret LWE decision, we use the first $d=\min\{m,n\}$ original equations and write $A,b,e$ for their restrictions. The goal is to find coefficient vectors $z$ that control error amplification and the secret's contribution through $Az$, yielding a detectable LWE signal. We draw $k$ initial lists at level zero, each with $N$ independently sampled coefficient vectors. Each vector specifies one weight for each equation. When combining two lists, adding entry vectors $x,y$ adds the corresponding weighted equations. At level $i$, the submatrix $A_i$ formed from $n_i$ rows of $A$ filters these sums, keeping only those for which every coordinate of $A_i(x+y)$ lies in the centered interval $\{-R_i,\ldots,R_i\}\bmod q$ (for a choice of $R_i$'s described later). We then independently subsample the passing sums, keeping each with probability $p_i$. Different levels use disjoint sets of rows, and the parameters are chosen to keep intermediate list sizes around the target size $N$.

\Cref{fig:paper-four-leaf-construction} illustrates the case $k=4$. First merge lists 1 and 2, and lists 3 and 4; then merge the two resulting lists. At each merge, every compatible pair with one entry from the left list and one from the right list is enumerated. Every entry remains available for every compatible pair, so one entry can contribute to many combinations. Each final entry is the sum of exactly one coefficient vector from each of the $k$ initial lists.

\begin{figure}[htbp]
\centering
\begin{tikzpicture}[x=1cm,y=1cm,>=Stealth,font=\small,leaf/.style={draw,rounded corners=2pt,align=center,text width=2.65cm,inner sep=4pt},merge/.style={draw,rounded corners=2pt,fill=black!3,align=center,text width=5.3cm,inner sep=4pt}]
  \foreach \j/\x in {1/-5.25,2/-1.75,3/1.75,4/5.25}
    \node[leaf] (leaf\j) at (\x,0) {$\mathcal L_{\j}$: initial list\\$N$ vectors};
  \node[merge] (left) at (-3.5,2.0) {$\mathcal L_{12}$: merge $\mathcal L_1$ and $\mathcal L_2$\\all pairs, one entry from each list\\filter with $A_1$, then randomly subsample};
  \node[merge] (right) at (3.5,2.0) {$\mathcal L_{34}$: merge $\mathcal L_3$ and $\mathcal L_4$\\all pairs, one entry from each list\\filter with $A_1$, then randomly subsample};
  \node[merge] (root) at (0,4.0) {$\mathcal L_{\mathrm{root}}$: merge $\mathcal L_{12}$ and $\mathcal L_{34}$\\all pairs, one entry from each list\\filter with $A_2$, then randomly subsample};
  \draw[->] (leaf1.north) -- (left.south west);
  \draw[->] (leaf2.north) -- (left.south east);
  \draw[->] (leaf3.north) -- (right.south west);
  \draw[->] (leaf4.north) -- (right.south east);
  \draw[->] (left.north) -- (root.south west);
  \draw[->] (right.north) -- (root.south east);
\end{tikzpicture}
\caption{The construction with four initial lists. Each merge adds one entry from each incoming list, filters the sum, and independently decides whether to keep it. Using an entry leaves it available for other pairs; equal vector values produced by different pairs remain separate entries.}
\label{fig:paper-four-leaf-construction}
\end{figure}

\FloatBarrier

\paragraph{The decision algorithm.} A coefficient vector $z\in\Fq^d$ combines the equations as:
\begin{equation}
\label{eq:intro-weight-identity}
 \langle z,b\rangle=\langle z,e\rangle+\langle Az,s\rangle\pmod q.
\end{equation}
The two terms are the combined error and the secret's contribution through the combined public vector. The initial coefficient interval controls error amplification, while the filters control the secret contribution. We map each combined right-hand side to a point on the complex unit circle and sum these points over the final list $\mathcal L_{\mathrm{root}}$:
\begin{equation}
\label{eq:intro-statistic}
 Z(b)=\sum_{z\in\mathcal L_{\mathrm{root}}}\exp(2\pi\mathrm{i}\langle z,b\rangle/q).
\end{equation}
Every list entry contributes separately, including duplicate vector values. The lists are generated using $A$ and fresh randomness, without using $b$. The construction ensures that all final coefficient vectors are nonzero, so when $b$ is uniform, every entry contributes a point with mean zero.
% Editorial note: The nonzero-vector invariant is stated above; its construction is explained in Section 3.
% Former explanatory sentence: Fixing the first coordinate of each initial vector to one, with $k<q$, ensures that all final vectors are nonzero.

For intuition, suppose the final list leads to the above $Z(b)$ being a sum of $N$ independent terms with a common mean of magnitude $N^{-o(1)}$. Their sum would have mean magnitude $N^{1-o(1)}$ and fluctuations of order $\sqrt N$. Under uniform data, its mean would be zero and its typical magnitude of order $\sqrt N$. A threshold $T$ near $N^{3/4}$ would separate these scales. The actual output terms are dependent; our analysis proves that this separation survives their dependence. \Cref{alg:tree-decision} summarizes the algorithm.

\begin{algorithm}[htbp]
\LinesNumbered
\caption{Bounded-secret LWE decision: informal outline}
\label{alg:tree-decision}
\KwIn{$d$ original equations, written as $A\in\Fq^{n\times d}$ and $b\in\Fq^d$; parameters, including list sizes, intervals, submatrices, and probabilities, chosen as in \Cref{tab:parameter-choices}.}
Draw $k=2^r<q$ initial lists of $N$ coefficient vectors $(1,X_2,\ldots,X_d)\bmod q$, where $X_2,\ldots,X_d$ are sampled independently and uniformly from $\{-B,\ldots,B\}$\;
\For{$i=1,\ldots,r$}{
  \ForEach{merge of two lists at level $i$}{
    Enumerate all pairs $(x,y)$ with $x$ from the left list, $y$ from the right list, and $A_i(x+y)\in[-R_i,R_i]^{n_i}\bmod q$\;
    Independently keep each sum $x+y$ in the parent list with probability $p_i$\;
    Leave both $x,y$ available for other pairs, and keep equal sums from different pairs as separate entries\;
  }
}
Compute $Z(b)$ from the final list by \eqref{eq:intro-statistic}\;
Output \textsc{LWE} if $|Z(b)|\ge T$, and \textsc{uniform} otherwise, with $T$ near $N^{3/4}$\;
\end{algorithm}

% \Cref{alg:tree-decision} summarizes the test. It specifies which pairs contribute; \Cref{sec:paper-resources} gives an implementation that finds compatible pairs efficiently.

\paragraph{Controlling dependence.} Independently sampled coefficient vectors still combine the same original equations, so the resulting terms need not be independent. We bound the Fourier statistic's mean and fluctuations under the actual dependent output distribution. We bound its second moment by tracking which merge decisions two outputs share when they reuse the same intermediate entry. We also show that accidental linear dependencies are rare enough to offset the greater chance of passing the filters. Sampling the remaining coordinates independently from a wide uniform interval makes these dependencies rare. The moment bounds in \Cref{thm:reusable-tree-moments} make the preceding separation rigorous without assuming independent outputs.

% \paragraph{Organization.} \Cref{sec:paper-reductions} derives the decision and search guarantees from the bounded-secret decision algorithm. \Cref{sec:paper-decision-tree} gives the formal construction and test. \Cref{sec:paper-moments} bounds dependence, \Cref{sec:paper-parameters} chooses the parameters and controls the signal, and \Cref{sec:paper-resources} gives the exact implementation and resource bounds.

\FloatBarrier
\subsection{Prior work}
\label{sec:paper-related-work}

\paragraph{LWE Hardness and the role of parameters.} Ajtai introduced the short integer solution problem (SIS), which asks for a short nonzero integer vector in the kernel of a matrix modulo $q$, and gave a worst-case to average-case reduction; Micciancio--Regev developed this connection using Gaussian measures~\cite{Ajtai96,MR07}. Regev introduced LWE and its quantum hardness reduction, and Brakerski--Langlois--Peikert--Regev--Stehl\'e established classical hardness for polynomial moduli~\cite{Reg05,BLP13}. For binary errors, Micciancio--Peikert prove hardness with suitably restricted samples and a change in lattice dimension~\cite{MP13}, whereas Arora--Ge's linearization method gives polynomial-time recovery from $m\ge c n^2$ samples, for a sufficiently large constant $c$, with independent uniform binary errors and polynomial prime moduli $q>2$~\cite{AG11}. Thus both the error bound and the number of equations matter.

\paragraph{BKW and algorithms with restricted number of equations.} BKW solves learning parity with noise (LPN), the binary-field analogue of LWE, with independent noise of any fixed rate below $1/2$ in time and samples $2^{O(n/\log n)}$~\cite{BKW03}. It combines equations to eliminate blocks of coordinates at the cost of increasing the noise. Lyubashevsky reduces the original samples to $n^{1+\epsilon}$, for any fixed $\epsilon>0$, at time $2^{O(n/\log\log n)}$ by generating combinations of the available equations~\cite{Lyu05}. These results illustrate a time--sample tradeoff over the binary field.

For LWE, lazy modulus switching combines equations whose public vectors agree after rounding to a smaller modulus, trading small residual coordinates for fewer buckets~\cite{AFFP14}. Kirchner--Fouque generalize this quantization and vary its precision between stages~\cite{KF15}; coded-BKW with sieving controls the accumulated error~\cite{GJMS17}. Further work improves practical implementations~\cite{BGJMSW20} and time--memory tradeoffs~\cite{WBLW25}. Guo, M\aa rtensson, and Stankovski Wagner experimentally study dependence from BKW-style combinations and sample amplification~\cite{GMSW21}. These experiments do not give an asymptotic guarantee for every fixed bounded secret-error pair, as we seek here.

Kirchner--Fouque claim subexponential time with a linear number of original samples and small secrets and errors~\cite{KF15}. For polylogarithmic secret and error bounds, polynomial moduli, and $m=\Theta(n)$, their claimed decision bound is $2^{(1/2+o(1))n/\ln\ln q}$~\cite[\href{https://arxiv.org/abs/1506.02717v4}{full version}, Cor.~6]{KF15}. As explained by Ducas--Engelberts--Loyer, Herold--Kirshanova--May identify a flaw in the linear-sample narrow-error proof and resolve it for a sufficiently large constant multiple of $n\log_2 n$ samples~\cite{HKM18,DEL25}. The gap concerns joint closeness to independent samples after amplification under a shared random matrix. For bounded secrets, we match the claimed running time with sublinear original samples. Our theorem proves correctness of our construction, not of the Kirchner--Fouque algorithm.

\paragraph{Wagner's algorithm for SIS and the obstacle for LWE.} Ducas--Engelberts--Loyer's Gaussian analysis of Wagner's algorithm~\cite{DEL25} builds on discrete-Gaussian techniques developed for shortest and closest vector problems~\cite{ADRS15,ADS15,AS18,ALS21,ACKS21}. They view BKW as Wagner's generalized birthday method~\cite{Wag02} on a dual lattice followed by the Aharonov--Regev distinguisher~\cite{AR05}. Their sampler lifts Gaussian vectors through a chain of projected lattices, with auxiliary superlattices keeping the lifts short. Gaussian randomized rounding lets them prove conditional similarity to independent samples, a notion from Aggarwal--Li--Stephens-Davidowitz~\cite{ALS21}. This gives a provable subexponential algorithm for $\mathrm{SIS}^{\infty}$ with $m=n+\omega(n/\ln\ln n)$ variables, polynomially growing prime moduli, and norm bounds $q/\operatorname{polylog}(n)$ for suitable polylogarithmic factors~\cite{DEL25}.

Transferring this SIS guarantee to LWE requires controlling a statistic that aggregates many vectors. Ducas--Engelberts--Loyer explain that bounding the joint statistical distance of all the required samples demands an accuracy for which their sampler takes exponential time in $n$~\cite[Section~4.3]{DEL25}.\footnote{Independent concurrent works of Han--Gao--Hu~\cite{HGH26} and Chen--Jiang--Wang--Zhang~\cite{CJWZ26} also extend the Gaussian--Wagner framework to LWE. We discuss their relation to our results and our different proof framework in Appendix~\ref{app:concurrent-work}.}

\paragraph{$k$-trees and dependence among combinations.} Levieil--Fouque's LPN variant forms all pairwise sums within each bucket, but treats its analysis as heuristic because the resulting equations are dependent~\cite[Section~4]{LF06}. For $k$-sum, Minder--Sinclair analyze extended $k$-trees with independent input lists, controlling dependence between candidate solutions; Lin--Vasudevan give success and complexity bounds over integers and modular groups~\cite{MinderSinclair12,LV24}. Joux--Kippen--Loss modify merging to enforce independence and uniformity of intermediate entries~\cite{JKL24}. Filtering coefficient vectors through a shared random matrix creates additional dependence beyond these analyses.

Kirchner--Fouque and Ducas--Engelberts--Loyer start from one list and discard both inputs after combining a pair~\cite{KF15,DEL25}. Our use of $k$ separate lists is a design choice. The coefficient vectors are sampled uniformly in their free coordinates from a wide interval. Every entry remains available for every compatible pair, and we independently subsample passing sums. Building on the second-moment approach to $k$-trees~\cite{MinderSinclair12,LV24}, we directly control the Fourier statistic's mean and fluctuations, including shared entries and linear dependencies under the common matrix. This gives the LWE guarantees in \Cref{thm:headline-lwe} without requiring joint closeness to independent samples.

\section{The full algorithm and its guarantees}
\label{sec:paper-decision-tree}

The core of our result is a decision algorithm for bounded secrets and errors. We first specify how this algorithm works. We then state the correctness and resource guarantees, followed by the two standard reductions that give search and handle arbitrary secrets. Sections~\ref{sec:paper-moments} and~\ref{sec:paper-resources} prove the correctness and resource statements conditioned on selecting good parameters; Section~\ref{sec:paper-parameters} chooses parameters that meet these conditions.

\paragraph{The LWE problems.} Let $q$ be prime. An instance consists of $A\in\Fq^{n\times m}$ and $b\in\Fq^m$. In the LWE case, $A$ is uniform and $b=A^\top s+e\bmod q$, where the secret $s\in\Z^n$ and error vector $e\in\Z^m$ are fixed independently of $A$. For bounded-secret LWE, we have the promise that $\|s\|_\infty\le E_s$ and $\|e\|_\infty\le E_e$, where $\|x\|_\infty=\max_j|x_j|$; in the arbitrary-secret case only the error bound is imposed. The search problem asks for $s\bmod q$. The decision problem asks us to distinguish an LWE instance from the uniform case, in which $b$ is uniform independently of $A$. 

In standard definitions of LWE, $s$ and $e$ are sampled from distributions respecting bounds similar to the ones above. Our algorithms work even for worst-case $s$ and $e$ as in the definition above, which is stronger than them working for fixed distributions. % Our decision guarantees bound both error probabilities for every fixed pair satisfying the promises; averaging also permits any joint distribution of such pairs independent of $A$.

\subsection{Constructing the lists}

The decision algorithm uses only the first $d=\min\{m,n\}$ original equations. Write $A\in\Fq^{n\times d}$ and $b\in\Fq^d$ for their matrix and right-hand-side vector. A coefficient vector $z\in\Fq^d$ specifies a linear combination of these equations. Our goal is to generate many coefficient vectors for which the combined error and the secret contribution are controlled, so that their Fourier terms have a detectable mean when $(A,b)$ is an LWE instance.

The parameter choices in \Cref{tab:parameter-choices} depend only on $n,q,d,E_s,E_e$. They specify a binary tree of depth $r$ with $2\le k=2^r<q$ initial lists, each of size $N=2^L$, a leaf radius $B$, and filter radii $R_1,\ldots,R_r$. They also specify a partition of the $n$ rows of $A$ into consecutive submatrices $A_i$ with positive row counts $n_i$ satisfying $\sum_i n_i=n$; write $s^{(i)}$ for the corresponding coordinates of the secret. We number levels upward from the initial lists at level zero. Write $M_i=k/2^i$ for the number of nodes at level $i$, each storing one list; for $i\ge1$, this is also the number of merges. We assume the parameter checks in Section~\ref{sec:parameter-computation} hold for the list count $k$, radii $B,R_i$, and row counts $n_i$; otherwise the algorithm returns \textsc{uniform}.

Sample every entry of the initial lists independently as
\begin{equation}
\label{eq:tree-leaf-distribution}
 (1,X_2,\ldots,X_d)\bmod q,\qquad
 X_2,\ldots,X_d\text{ independent and uniform on }\{-B,\ldots,B\}.
\end{equation}
All initial lists are sampled once. Fixing the first coordinate to one ensures that every vector at level $i$ has first coordinate $2^i\bmod q$. Since $k<q$, every vector generated by the tree is nonzero.

At level $i$, merge two child lists by adding their entry vectors. A pair $(x,y)$, with one entry from each child list, is \emph{compatible} when
\begin{equation}
\label{eq:tree-compatible-pair}
 A_i(x+y)\in\{-R_i,\ldots,R_i\}^{n_i}\bmod q.
\end{equation}
For each compatible pair, independently keep its sum in the parent list with probability $p_i$. Write $Q=2B+1$ and $W_i=2R_i+1$ for the interval widths, and set
\begin{equation}
\label{eq:tree-balancing}
 \rho_i=(W_i/q)^{n_i},\qquad
 p_i=\frac{q^{n_i}}{NW_i^{n_i}}=\frac1{N\rho_i}.
\end{equation}
The parameter checks and construction ensure $Q<q$, $3W_i\le q$, and $0<p_i\le1$. Before the fresh submatrix $A_i$ is revealed, a fixed nonzero pair sum has a uniform image, so it passes the filter and is kept with probability $\rho_i p_i=1/N$. Two lists of size $N$ would therefore produce $N$ parent entries in expectation. The same submatrix is used at every merge of a level; different levels use disjoint rows.

Using an entry leaves it available for other compatible pairs. Equal vector values remain separate entries, both in the initial lists and when different pairs produce the same sum. Each root entry is thus the sum of one selected entry from each of the $k$ initial lists. Figure~\ref{fig:paper-four-leaf-construction} illustrates the case $k=4$.

\paragraph{Finding compatible pairs with buckets.} Represent each coordinate of $A_i y$ by a residue in $\{0,\ldots,q-1\}$ and divide this range into consecutive intervals of width $W_i$, allowing a shorter final interval. We call each one-coordinate interval a \emph{cell}. A \emph{bucket} contains all right-list entries whose images fall in the same cell in every coordinate, so a bucket is indexed by a tuple of cell indices. Formally, the cells are
\begin{equation}
\label{eq:resource-cells}
 \mathcal I_{i,j}=\{jW_i,\ldots,\min\{(j+1)W_i-1,q-1\}\},\qquad
 0\le j<\lceil q/W_i\rceil.
\end{equation}
For a merge at level $i$, write the right list with size $J_{\mathrm R}$ as $\mathcal L_{\mathrm R}=(y_1, \dots, y_{J_{\mathrm R}})$. For a tuple $\mathbf j=(j_1,\ldots,j_{n_i})$ with $0\le j_t<\lceil q/W_i\rceil$, define the bucket
\begin{equation}
\label{eq:resource-buckets}
 \mathcal B_{i,\mathbf j}=\{a\in\{1,\ldots,J_{\mathrm R}\}:(A_i y_a)_t\in\mathcal I_{i,j_t}\text{ for every }t=1,\ldots,n_i\}.
\end{equation}
It stores the entries $y_a$ at these indices; equal vector values at different indices remain separate entries. \pnote{It is not clear what a cell or bucket is. Clearer explanation is needed here.}\cans{A cell is an interval of residues in one image coordinate; a bucket groups right-list entries lying in the same cell in every coordinate. I have explained these meanings before the formula.} \pnote{I am still having trouble understanding what a bucket is. Could you write it out mathematically, in the way you have written $\mathcal{I}{i,j}$ above?}\cans{I have defined each bucket explicitly by the indices of right-list entries whose images lie in the specified cells, retaining equal vectors at distinct indices.}

For each left entry $x$, a compatible right image must lie coordinatewise in $-(A_i x)_j+\{-R_i,\ldots,R_i\}\bmod q$. In each coordinate, find the cells meeting this cyclic interval, and query the buckets for all tuples of these cell indices. Check every returned pair against \eqref{eq:tree-compatible-pair} before sampling whether to keep it. Every compatible pair is found exactly once. The queries may also return pairs that fail the exact filter; they are discarded by the \eqref{eq:tree-compatible-pair} check before subsampling. Sorting by these index tuples allows binary search for each bucket; Section~\ref{sec:bucket-implementation} gives the implementation details and time bound.

\subsection{The decision algorithm}

Let $\omega=\exp(2\pi\mathrm{i}/q)$. For the analysis, let $\mathcal L_{\mathrm{root}}$ denote the root list obtained by running all merges without a storage limit, and define
\begin{equation}
\label{eq:tree-statistic}
 Z(b)=\sum_{z\in\mathcal L_{\mathrm{root}}}\omega^{\langle z,b\rangle}.
\end{equation}
Each entry contributes separately, including duplicate vector values. Generating the coefficient vectors uses only $A$ and fresh randomness; $b$ is used only when the sum $Z(b)$ is evaluated. In the uniform case, each nonzero vector's Fourier term has mean zero. Under LWE, $\langle z,b\rangle=\langle z,e\rangle+\langle Az,s\rangle\bmod q$, so the leaf radius $B$ controls the error contribution, and the filters control the secret contribution through the node images. We decide by the magnitude of the sum, using the threshold 
\begin{equation}
\label{eq:decision-threshold}
 T=2^{\lceil3L/4\rceil}.
\end{equation}

The ideal test returns \textsc{LWE} when $|Z(b)|\ge T$ and \textsc{uniform} otherwise. Here $|Z|$ is the complex modulus; the finite-precision implementation below applies the same rule to $\widetilde Z$ by comparing $|\widetilde Z|^2$ with $T^2$ exactly.

For the resource guarantees, we stop and return \textsc{uniform} if storing another entry would exceed $K$. We compute a rational estimate $\widetilde Z$ with $|\widetilde Z-Z|\le T/4$ whenever list construction finishes without exceeding this limit, and compare its squared modulus with $T^2$ exactly. These choices ensure bounded storage and reliable evaluation of the decision statistic; Section~\ref{sec:paper-resources} gives the implementation and its cost. Set
\begin{equation}
\label{eq:decision-integers}
 K=2kn^4N,\qquad
 P=\min\{j\in\Z_{\ge1}:2^jT\ge4K\}.
\end{equation}
Keep all stored entries until the algorithm ends, counting initial entries and entries retained by subsampling toward $K$; reaching exactly $K$ is permitted. All random sampling is exact, using fresh random bits for each request as described in Section~\ref{sec:resource-sampling}. On storage overflow, set $\widetilde Z=0$.

\begin{algorithm}[htbp]
\LinesNumbered
\caption{Bounded-secret LWE decision}
\label{alg:bounded-input-decision}
\KwIn{$A\in\Fq^{n\times m}$, $b\in\Fq^m$, and public bounds $E_s,E_e$.}
Use the first $d=\min\{m,n\}$ equations and compute the parameters in \Cref{tab:parameter-choices}; return \textsc{uniform} if a parameter check fails\;
Set the threshold $T$, storage limit $K$, and precision $P$ by \eqref{eq:decision-threshold} and \eqref{eq:decision-integers}\;
Draw $k$ initial lists of $N$ independent coefficient vectors from \eqref{eq:tree-leaf-distribution} using exact sampling (Section~\ref{sec:resource-sampling}); set the stored-entry count $J=kN$\;
\For{$i=1,\ldots,r$}{
  \ForEach{merge of two lists at level $i$, from left to right}{
    \ForEach{compatible pair $(x,y)$ found by bucket queries (Section~\ref{sec:bucket-implementation})}{
      Independently keep $x+y\bmod q$ with exact probability $p_i$\;
      \If{the sum is kept}{
        \lIf{$J=K$}{return \textsc{uniform}}
        Append the sum to the parent list and increment $J$\;
      }
    }
    Leave child entries available for other pairs; keep equal sums from different pairs as separate entries\;
  }
}
Compute the rational estimate $\widetilde Z$ with error at most $T/4$, using the evaluation procedure of Section~\ref{sec:resource-evaluation}\;
Return \textsc{LWE} if $|\widetilde Z|^2\ge T^2$, and \textsc{uniform} otherwise, comparing exactly\;
\end{algorithm}

\FloatBarrier
\subsection{Correctness}
\label{sec:decision-correctness}

Correctness rests on the mean and fluctuations of $Z$ under the actual reuse of entries. To describe the expected LWE signal, first consider one selection of a leaf entry from each initial list. At each internal node $v$ at level $i$, call $A_i z_v$ its \emph{node image}, where $z_v$ is the sum of the selected leaf vectors below $v$. For the root vector $z$, the combined right-hand side is $\langle z,e\rangle+\langle Az,s\rangle$. Its block $A_i z$ is the sum of the level-$i$ node images because those nodes partition the selected leaves.

To make the expected signal easier to calculate, we consider an \emph{ideal model} for one combination in which the node images are independent of one another and of the leaf vectors. In this model, sample $k$ independent leaf vectors from \eqref{eq:tree-leaf-distribution} and, at each internal node at level $i$, sample an independent uniform image in the filter box $\{-R_i,\ldots,R_i\}^{n_i}\bmod q$. Let $z^*\in\Fq^d$ be the sum of these leaf vectors, and let $y^*\in\Fq^n$ concatenate the sums of the ideal node images at levels $1,\ldots,r$. Thus $z^*$ combines the $d$ error coordinates, while the $i$th block of $y^*$ takes the place of $A_i z$ and pairs with the corresponding secret coordinates. For a fixed integer secret-error pair $(s,e)\in\Z^n\times\Z^d$, the \emph{ideal mean} below is the mean Fourier contribution in this independent model.
\begin{equation}
\label{eq:tree-ideal-mean}
 \Gamma=\E\bigl[\omega^{\langle z^*,e\rangle+\langle y^*,s\rangle}\bigr].
\end{equation}
The algorithm does not compute $\Gamma$, which depends on the unknown secret and errors. We use it in the analysis: the following theorem shows that the actual mean of the sum is close to $N\Gamma$ and bounds its fluctuations despite the dependence among outputs.

\begin{theorem}[Moments of the Fourier statistic]
\label{thm:reusable-tree-moments}
Let $A\in\Fq^{n\times d}$ be uniform and independent of the initial lists and subsampling randomness. Suppose the algorithm parameters additionally satisfy
\begin{equation}
\label{eq:moment-finite-conditions}
 \begin{gathered}
 2k\le N,\qquad D_0=d-1-2k\ge1,\qquad 4kN^{r+1}\le Q^{D_0},\\
 \varepsilon=4kN^rQ^{-D_0}\le N^{-1}.
 \end{gathered}
\end{equation}
For every fixed integer secret-error pair, with $b=A^\top s+e\bmod q$,
\begin{equation}
\label{eq:reusable-moments}
 |\E[Z(b)]-N\Gamma|\le3N\varepsilon\le3,\qquad \Var[Z(b)]\le16kN,
\end{equation}
where $\Var[Z]=\E[|Z-\E[Z]|^2]$. If $b$ is uniform independently of the matrix and tree randomness, then
\begin{equation}
\label{eq:reusable-null-moment}
 \E_{\mathrm{unif}}[|Z(b)|^2]\le10kN.
\end{equation}
For every tree node $v$, its list $\mathcal L_v$ in the construction without a storage limit satisfies $\E[|\mathcal L_v|]\le(1+\varepsilon)N$.
\end{theorem}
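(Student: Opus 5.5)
Everything will be written as a sum over \emph{combinations}. A combination $\sigma$ is a choice of one leaf index from each of the $k$ initial lists. Then $Z(b)=\sum_\sigma \mathbf 1[\sigma\text{ survives}]\,\omega^{\langle z_\sigma,b\rangle}$, where $z_\sigma$ is the sum of the chosen leaf vectors and ``survives'' means two things. First, at every internal node $v$ of level $i$, the partial sum $z_{\sigma,v}$ passes the filter $A_i z_{\sigma,v}\in\{-R_i,\dots,R_i\}^{n_i}$. Second, the subsampling coin attached to the pair at $v$ is heads. There are $N^k$ combinations, and there are $2^r-1=k-1$ coins per combination. Each coin is indexed by the pair of child entries, that is, by the restriction of $\sigma$ to the leaves below $v$. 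All four claims then follow from first and second moments of this sum.

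\textbf{Mean.} Condition on the leaf vectors. Call $\sigma$ \emph{generic} if the vectors $z_{\sigma,v}$ over all internal nodes $v$ of a given level are linearly independent over $\Fq$ in the coordinates $2,\dots,d$, jointly with the leaf structure. On generic $\sigma$, the fresh uniform blocks $A_i$ make the images $A_i z_{\sigma,v}$ independent and uniform across the nodes of level $i$. Independence across levels comes from the disjoint rows. The joint law of the images is therefore exactly the ideal model. The survival probability is $\prod_v \rho_{i(v)}p_{i(v)}=N^{-(k-1)}$, and conditioned on survival the Fourier term has mean $\Gamma$. Multiplying by $N^k$ combinations gives $N\Gamma$.

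Non-generic combinations need a linear dependency among at most $2k$ sums of leaf vectors. Since the leaves are uniform on $\{-B,\dots,B\}$ in $d-1$ free coordinates, one can fix the $\le 2k$ pivot coordinates and use the remaining $D_0=d-1-2k$ coordinates. A nontrivial fixed dependency then holds with probability at most $Q^{-D_0}$. Each non-generic combination survives with probability at most the ideal one divided by a factor from the failed filters, which is at most $N$ per level, so at most $N^r$ overall. Union-bounding over the dependency patterns (at most $4k$ of them, after accounting) gives a contribution of at most $N\cdot 4kN^rQ^{-D_0}=N\varepsilon$. The same computation restricted to the subtree of $v$ gives $\E|\mathcal L_v|\le(1+\varepsilon)N$. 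In the uniform case the terms have mean zero because $z_\sigma\neq0$.

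\textbf{Second moment.} Write $\E|Z|^2=\sum_{\sigma,\tau}\E[\mathbf 1_\sigma\mathbf 1_\tau\,\omega^{\langle z_\sigma-z_\tau,b\rangle}]$. Classify each pair $(\sigma,\tau)$ by the set $S$ of leaves where $\sigma$ and $\tau$ agree. The nodes whose whole subtree lies in $S$ are shared, and those nodes share both their filter event and their coin. This is the Minder--Sinclair/Lin--Vasudevan bookkeeping.

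Consider pairs that are generic jointly, meaning the union of their distinct node vectors is independent level by level. For such a pair, the probability is $N^{-(k-1)}\cdot N^{-(k-1-c)}$, where $c$ is the number of shared nodes. The count of pairs is $N^{2k-|S|}$. Suppose $S$ contains a maximal shared subtree of $2^j$ leaves. That shared subtree saves $2^j-1$ nodes against $2^j$ leaves, so each such configuration contributes $O(N)$ times a factor $N^{-(\text{number of unshared ancestors})}$. Summing over $S$ gives a geometric-type sum of at most about $kN$, and the total $16kN$ comes with room for constants.

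When $S=\emptyset$ and the pair is generic, the two combinations are independent given the leaves. Under LWE their joint term factors as $|\text{ideal}|^2$, which cancels against $|\E Z|^2$ in the variance. In the uniform case $z_\sigma\ne z_\tau$ kills the term unless $z_\sigma=z_\tau$, and that requires a dependency. Non-generic pairs involve at most $2\cdot 2k$ node vectors and are handled by the same $Q^{-D_0}$ union bound. Their contribution is at most $N^2\cdot O(\varepsilon)\le O(N)$, and this is where $\varepsilon\le N^{-1}$ and $4kN^{r+1}\le Q^{D_0}$ are used.

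\textbf{Main obstacle.} The hard step is making the genericity argument precise for pairs with partial sharing. The shared nodes must be excluded from the independence requirement, the remaining node vectors must be shown independent with probability $1-O(kQ^{-D_0})$, and one must verify that the boost in filter-passing probability for dependent configurations is at most $N^r$ rather than larger. To do this, I would first fix, for each level, a maximal independent subset of the distinct node vectors. Then I would bound the probability that the others are in its span by $Q^{-D_0}$ per vector, using the uniform leaves of a leaf that appears in only one of the two combinations.
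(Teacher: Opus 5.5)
Your outline follows the same architecture as the paper's proof. It sums over leaf selections, classifies pairs by their agreement set with shared filters and coins counted once, uses exact ideal behaviour when the relevant vectors are independent, bounds by $Q^{-D_0}$ the chance that a fresh vector lands in a span, and charges $\rho_i^{-1}=Np_i\le N$ per lost unit of rank. The genuine gap is in how dependent configurations are weighted, and that weighting is exactly what produces the stated constants.

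\textbf{The survival advantage is not at most $N$ per level.} If the distinct node sums at level $i$ have rank deficiency $\delta_i$, the filter probability is only bounded by $\rho_i$ raised to their number minus $\delta_i$. The advantage is therefore $\rho_i^{-\delta_i}=(Np_i)^{\delta_i}$, and nothing forces $\delta_i\le1$. For instance, if all $k$ leaf vectors of a selection coincide, every level-$i$ node sum is $2^i$ times that vector. Then $\delta_i=M_i-1$ and the advantage is $\prod_i(Np_i)^{M_i-1}$. For the paper's parameters ($Np_i=N^{1-o(1)}$ and $k\gg r$) this is far above $N^r$. So the target you set in your last paragraph, showing the boost is at most $N^r$, is false. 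The advantage grows without bound in the rank loss, and what must be shown is that large rank losses are correspondingly rare.

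\textbf{The union bound over patterns is unjustified.} A nontrivial relation among up to $2k$ vectors has $q^{\Theta(k)}$ possible coefficient vectors, not ``at most $4k$''. The theorem's hypotheses give nothing that absorbs such a factor.

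\textbf{The missing ingredient.} You need an exponential-moment bound on the rank deficiency $\nu$ of the $J\le2k$ distinct leaf vectors involved, combined with a deterministic rank comparison.
\begin{itemize}
\item \emph{Rank comparison.} At each level the distinct node sums (a fully shared node counted once) have rank at least their number minus $\nu$. Every node position that is not fully shared contains, in each copy, a private leaf entry, so the zero-one incidence vectors of these sums are linearly independent. Meanwhile the map $h\mapsto Xh$, with $X$ the matrix of distinct leaf vectors, has a $\nu$-dimensional kernel. Hence, conditional on the leaves, survival is at most the independent baseline times $\prod_i\rho_i^{-\nu}\le N^{r\nu}$, simultaneously for all levels.
\item \emph{Sequential exposure.} Reveal the leaf vectors one at a time and let $I_j$ indicate that the $j$th lies in the span of its predecessors. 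Then $\nu=\sum_jI_j$, and $\Prb[I_j=1\mid\text{past}]\le Q^{-D_0}$, since the $d-1$ free coordinates must land in a fixed subspace of dimension less than $2k$. Since $\lambda^{I_j}=1+(\lambda-1)I_j$, iterating conditional expectations with $\lambda=N^r$ gives $\E[N^{r\nu}]\le(1+N^rQ^{-D_0})^{2k}\le e^{\varepsilon/2}\le1+\varepsilon$. Hence $\E[N^{r\nu}\mathbf 1_{\{\nu>0\}}]\le2\varepsilon$ and $\Prb[\nu>0]\le\varepsilon$.
\end{itemize}
This is where a count of $2k$ (not $q^{\Theta(k)}$) legitimately appears. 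With it, your mean, list-size, disjoint-pair and collision estimates close with the stated constants. The mean estimate also needs the $\Prb[\nu>0]$ term, because restricting to generic leaf vectors perturbs the average of the ideal conditional mean.

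\textbf{Two smaller repairs.}
\begin{itemize}
\item After dividing by $N^2$, pairs with agreement set $S$ carry weight $N^{-t(S)}$, where $t(S)$ is the number of maximal shared subtrees (not a count of unshared ancestors). The sum $\sum_SN^{-t(S)}\le(1+1/N)^{2k-1}$ follows because $S$ maps injectively to the set of roots of its maximal shared subtrees.
\item In the variance, the products of individual means over pairs sharing a leaf entry do not cancel and need their own bound, at most $(1+\varepsilon)^2kN$. For disjoint pairs the cancellation holds pointwise in the leaf vectors: when $\nu=0$ the joint conditional moment is the product of the two conditional means, and the two leaf groups are independent. This leaves a residual on $\{\nu>0\}$, controlled using $\nu_1+\nu_2\le\nu$ for the individual deficiencies. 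It is not a factorization into $|\Gamma|^2$ on your genericity event, which is not a product event.
\end{itemize}
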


Section~\ref{sec:moment-comparison} proves \cref{thm:reusable-tree-moments} by accounting for shared intermediate entries and accidental linear dependencies among leaf vectors. With $k=N^{o(1)}$ and $|\Gamma|=N^{-o(1)}$, the LWE mean has magnitude $N^{1-o(1)}$, while the fluctuations in both cases have scale at most $N^{1/2+o(1)}$. The threshold near $N^{3/4}$ separates these scales. Taking a magnitude also removes the unknown rotation $\omega^{ke_1}$ contributed by the fixed first coordinates.

\begin{lemma}[Decision from an accurate estimate]
\label{lem:decision-accurate-estimate}
Suppose the moment bounds of \Cref{thm:reusable-tree-moments} hold and $3+5T/4\le N|\Gamma|/2$. Suppose the computed estimate satisfies $\Prb[|\widetilde Z-Z|>T/4]\le\eta$ both under LWE with the fixed pair $(s,e)$ and under uniform input, so $\eta$ bounds the probability that the estimate differs from $Z$ by more than $T/4$. The test returns \textup{\textsc{LWE}} when $|\widetilde Z|\ge T$ and \textup{\textsc{uniform}} otherwise. The subscripts below indicate the input model; probabilities include the uniform matrix, the algorithm's random choices, and, in the uniform case, the independent uniform vector $b$. Its probability of returning \textup{\textsc{uniform}} on an LWE input with the fixed pair $(s,e)$, and of returning \textup{\textsc{LWE}} on a uniform input, respectively satisfy
\begin{equation}
\label{eq:decision-alternative-error}
 \Prb_{\mathrm{LWE}}[|\widetilde Z|<T]\le\eta+\frac{64k}{N|\Gamma|^2},
\end{equation}
\begin{equation}
\label{eq:decision-null-error}
 \Prb_{\mathrm{unif}}[|\widetilde Z|\ge T]\le\eta+\frac{160kN}{9T^2}.
\end{equation}
\end{lemma}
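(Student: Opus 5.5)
The plan is a union bound over the event that the estimate is inaccurate and a Chebyshev-type bound on $Z$ itself. On the complement of the bad event $\{|\widetilde Z-Z|>T/4\}$, which has probability at most $\eta$ in both input models, the triangle inequality transfers thresholds: $|\widetilde Z|<T$ forces $|Z|<5T/4$, and $|\widetilde Z|\ge T$ forces $|Z|\ge 3T/4$. It therefore suffices to bound $\Prb_{\mathrm{LWE}}[|Z|<5T/4]$ and $\Prb_{\mathrm{unif}}[|Z|\ge 3T/4]$, each of which adds to $\eta$.

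\textbf{LWE case.} By \eqref{eq:reusable-moments}, $|\E[Z]|\ge N|\Gamma|-3$. If $|Z|<5T/4$, then
\[
 |Z-\E[Z]|\ge N|\Gamma|-3-5T/4 .
\]
The hypothesis $3+5T/4\le N|\Gamma|/2$ makes this at least $N|\Gamma|/2$. Chebyshev's inequality (Markov applied to $|Z-\E Z|^2$, valid for complex variables with $\Var[Z]=\E|Z-\E Z|^2$) then gives
\[
 \Prb_{\mathrm{LWE}}[|Z|<5T/4]\le \frac{\Var[Z]}{(N|\Gamma|/2)^2}\le \frac{4\cdot 16kN}{N^2|\Gamma|^2}=\frac{64k}{N|\Gamma|^2},
\]
which is \eqref{eq:decision-alternative-error}. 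The probability here is over $A$, the tree randomness and the subsampling, exactly as in \Cref{thm:reusable-tree-moments}.

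\textbf{Uniform case.} Markov's inequality on $|Z|^2$ together with \eqref{eq:reusable-null-moment} gives
\[
 \Prb_{\mathrm{unif}}[|Z|\ge 3T/4]\le \frac{10kN}{(9/16)T^2}=\frac{160kN}{9T^2},
\]
which is \eqref{eq:decision-null-error}.

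The argument is routine. The one point needing care is that $Z$ here is the statistic of the unlimited construction, whereas the algorithm may overflow and set $\widetilde Z=0$. That situation is absorbed into the event $|\widetilde Z-Z|>T/4$, whose probability $\eta$ is bounded by hypothesis. So no separate handling is required, provided the events are defined on a common probability space that couples the truncated run with the unlimited one.
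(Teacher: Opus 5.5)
Your proposal is correct and matches the paper's proof: a union bound with the accuracy-failure event, the triangle inequality to move the thresholds to $5T/4$ and $3T/4$, Chebyshev's inequality with $|\E[Z]|\ge N|\Gamma|-3$ in the LWE case, and Markov's inequality on $|Z|^2$ in the uniform case. Your remark that storage overflow is absorbed into the event $|\widetilde Z-Z|>T/4$ is also how the paper handles it, since it supplies $\eta$ through \Cref{cor:decision-accuracy}.
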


We prove \Cref{lem:decision-accurate-estimate} in Section~\ref{sec:decision-moment-application}.

\Cref{thm:bit-resources} bounds the probability of stopping because the stored-entry count would exceed $K$; \Cref{cor:decision-accuracy} therefore supplies the estimate condition with $\eta=2n^{-4}$. \Cref{lem:schedule-properties} shows that parameters can be set to satisfy $|\Gamma|=N^{-o(1)}$ and $k=N^{o(1)}$, which imply the threshold separation in \Cref{lem:decision-accurate-estimate} for sufficiently large $n$. Together these results give the following decision guarantee.

\begin{proposition}[Bounded-secret decision]
\label{prop:bounded-input-decision}
Fix $\kappa>0$ and $C>\min\{1,1/\kappa\}$. Let $q=n^{\kappa+o(1)}$ be prime, let $m\ge Cn/\ln\ln q$ be polynomial in $n$, and let public rational bounds $0\le E_s\le\exp((\ln n)^{o(1)})$ and $0\le E_e\le n^{o(1)}$ have polynomial encoding length. For every fixed integer pair $(s,e)\in\Z^n\times\Z^m$ with $\|s\|_\infty\le E_s$ and $\|e\|_\infty\le E_e$, independent of uniform $A$, \Cref{alg:bounded-input-decision} has error at most $3n^{-4}$ for sufficiently large $n$. The same bound holds for uniformly random instances and for each fixed asymptotic growth of the public bounds throughout the stated parameter ranges.
\end{proposition}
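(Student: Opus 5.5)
The plan is to assemble \Cref{prop:bounded-input-decision} from \Cref{lem:decision-accurate-estimate}, \Cref{thm:reusable-tree-moments}, the parameter lemma \Cref{lem:schedule-properties}, and the resource results \Cref{thm:bit-resources} and \Cref{cor:decision-accuracy}.

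\textbf{Step 1: parameters and hypotheses.} Invoke \Cref{lem:schedule-properties} for the stated regime: $q=n^{\kappa+o(1)}$ prime, $d=\min\{m,n\}\ge Cn/\ln\ln q$, $E_s\le\exp((\ln n)^{o(1)})$, $E_e\le n^{o(1)}$. For all large $n$ it should give four things:
\begin{itemize}
\item the parameter checks of \Cref{tab:parameter-choices} pass, so the algorithm never returns \textsc{uniform} at line~1;
\item the finite conditions \eqref{eq:moment-finite-conditions} hold;
\item $k=N^{o(1)}$;
\item $|\Gamma|\ge N^{-o(1)}$ uniformly over all integer pairs with $\|s\|_\infty\le E_s$ and $\|e\|_\infty\le E_e$.
\end{itemize}
Uniformity over $(s,e)$ matters because the guarantee must hold for worst-case fixed pairs. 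I would check this by bounding $|\Gamma|$ from below using only the magnitudes $kB E_e$ and $\sum_i 2^{r-i}R_i E_s$ relative to $q$. In the ideal model $\Gamma$ factors into a product of one-dimensional character averages over uniform intervals. Each factor stays bounded away from zero exactly when the corresponding range times the coefficient is at most a small constant times $q$. This is where the sample bound $C>\min\{1,1/\kappa\}$ and the leaf width $Q=q^{1/2+o(1)}$ enter.

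\textbf{Step 2: apply \Cref{thm:reusable-tree-moments} and \Cref{lem:decision-accurate-estimate}.} With Step 1 in hand, \Cref{thm:reusable-tree-moments} holds. Next verify the separation $3+5T/4\le N|\Gamma|/2$. Since $T=2^{\lceil 3L/4\rceil}\le 2N^{3/4}$ and $N|\Gamma|\ge N^{1-o(1)}$, this holds once $N$ is large, and $N\to\infty$ with $n$ by the schedule. \Cref{cor:decision-accuracy} supplies $\eta=2n^{-4}$, covering both storage overflow (via \Cref{thm:bit-resources}) and numerical error. \Cref{lem:decision-accurate-estimate} then bounds the two error probabilities by
\begin{itemize}
\item $2n^{-4}+64k/(N|\Gamma|^2)$ on LWE inputs, and
\item $2n^{-4}+160kN/(9T^2)$ on uniform inputs.
\end{itemize}
Because $T^2\ge N^{3/2}$, the second extra term is at most $(160/9)kN^{-1/2}$. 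The first is $N^{-1+o(1)}$.

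\textbf{Step 3: absorbing into $n^{-4}$.} The error is at most $3n^{-4}$ provided both extra terms are at most $n^{-4}$. This requires $kN^{-1/2}\le n^{-4}$ and $kN^{-1+o(1)}\le n^{-4}$. Since $N=2^L$ is subexponential in $n$ (about $2^{\Theta(n/\ln\ln q)}$) and $k=N^{o(1)}$, both are $N^{-\Omega(1)}=n^{-\omega(1)}$. This step is routine.

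\textbf{Step 4: the remaining claims.} For uniformly random instances, the uniform-input bound already averages over $b$. For the statement about each fixed asymptotic growth of the bounds, note that every $o(1)$ above depends only on $\kappa$, $C$ and the growth functions of $E_s,E_e,q$, not on the particular pair.

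\textbf{Main obstacle.} I expect Step 1 to be the hardest, specifically the uniform lower bound $|\Gamma|=N^{-o(1)}$. It must hold simultaneously with the constraint $4kN^{r+1}\le Q^{D_0}$, which forces $Q$ large (about $N^{(r+1)/D_0}$), while the signal forces $kBE_e\ll q$. If \Cref{lem:schedule-properties} states these directly, the proposition is bookkeeping. Otherwise I would first try $r\approx\ln\ln q$ levels with $n_i\approx n/r$, and $R_i$ chosen so that $\rho_i\approx N^{-1}$ per merge. I would then check that $\log_2 N\approx n/(2\ln\ln q)$ keeps the product of the per-level factors at $e^{-O(r)}=N^{-o(1)}$.
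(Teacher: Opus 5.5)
Your proposal is correct and follows the paper's proof essentially step for step: \Cref{lem:schedule-properties} supplies the parameter checks, \eqref{eq:moment-finite-conditions}, $k=N^{o(1)}$ and $|\Gamma|=N^{-o(1)}$. It also supplies \eqref{eq:resource-conditions}, which you need before invoking \Cref{cor:decision-accuracy}. After that, \Cref{thm:reusable-tree-moments} and \Cref{lem:decision-accurate-estimate} give error $2n^{-4}+N^{-1/2+o(1)}\le3n^{-4}$, since $\log n=o(L)$.

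Your aside on verifying $|\Gamma|=N^{-o(1)}$ is not needed, and it would not work as written, for three reasons. First, aggregate ranges such as $kBE_e$ or $\sum_iM_iR_iE_s$ generally exceed $q$ for the chosen $k$. Second, mere boundedness of each factor away from zero is far too weak. The paper instead sums quadratic per-coordinate losses $-\ln\chi_W(t)\le\pi^2W^2t^2/(3q^2)$, which needs only $QE_e\le q/2$ and $W_iE_s\le q/2$ and uses $k\le q/h$ to get $O(dk/q+nE_s^2/f^2)=o(L)$. Third, the hypothesis $C>\min\{1,1/\kappa\}$ enters only the linear-dependency bound $\varepsilon\le N^{-1}$, not the signal.
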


We prove \Cref{prop:bounded-input-decision} in Section~\ref{sec:decision-moment-application}.

\subsection{Time and space complexity}

We measure time in bit operations and space in bits, including temporary storage. Thus the cost of arithmetic depends on the bit lengths of its operands.

The bucket merge, storage cap, and numerical evaluation turn the statistical test into an algorithm with bounded resources. The next theorem states their guarantees under explicit conditions on the parameters; its proof is in Section~\ref{sec:paper-resources}.

\begin{theorem}[Total resources and numerical accuracy]
\label{thm:bit-resources}
Suppose the algorithm parameters are computable in polynomial time, have polynomial encoding length, and satisfy \eqref{eq:moment-finite-conditions}. Assume $q$, $m$, and the encoding lengths of the public bounds are polynomial in $n$ and, at every level,
\begin{equation}
\label{eq:resource-conditions}
 \log n=o(L),\qquad 3^{n_i}=N^{o(1)},\qquad N\rho_i=N^{o(1)}.
\end{equation}
Then \Cref{alg:bounded-input-decision} terminates with probability $1$ and uses worst-case space $N^{1+o(1)}$ for every input, including temporary storage and all random sampling trials. For uniform $A$ independent of the tree randomness, its expected running time is $N^{1+o(1)}$ and the probability of exceeding the storage limit is at most $\tfrac32n^{-4}$. When list construction finishes without exceeding the storage limit, evaluation and the final comparison use at most $K\operatorname{poly}(n)=N^{1+o(1)}$ time for every right-hand-side vector $b$.

When list construction finishes without exceeding the storage limit, $|\widetilde Z-Z(b)|\le K2^{-P}\le T/4$, and the squared-modulus comparison is exact, with equality returning \textup{\textsc{LWE}}. These guarantees hold throughout any parameter range satisfying the hypotheses uniformly.
\end{theorem}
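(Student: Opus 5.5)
We would prove \Cref{thm:bit-resources} by analysing the three phases of \Cref{alg:bounded-input-decision} separately: exact sampling, bucket merging under the storage cap, and numerical evaluation of $Z$.

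\emph{Termination and worst-case space.} Exact sampling of uniform values on $\{-B,\ldots,B\}$ and of Bernoulli$(p_i)$ coins with rational $p_i$ will use rejection sampling from fresh random bits. Each trial succeeds with probability at least $1/2$, so the procedure terminates with probability $1$ and needs only $O(\log q + \text{encoding length of } p_i)$ temporary bits. The cap $K=2kn^4N$ bounds the number of stored vectors. Each vector has $d\log q=\operatorname{poly}(n)$ bits, so storage is $K\operatorname{poly}(n)$. Since $\log n=o(L)$ and $k\le q$ is polynomial (so $k=N^{o(1)}$), this is $N^{1+o(1)}$. The sorted bucket index and the cell-tuple queries per left entry add only a $\operatorname{poly}(n)$ factor per stored entry, together with the current query state.

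\emph{Overflow probability.} The storage count is $kN+\sum_v|\mathcal L_v|$, taken over internal nodes. By the last clause of \Cref{thm:reusable-tree-moments}, its expectation in the uncapped construction is at most $kN+k(1+\varepsilon)N\le 3kN$. Markov's inequality then gives
\begin{equation*}
\Prb[\text{count}>K]\le 3kN/(2kn^4N)=\tfrac32 n^{-4}.
\end{equation*}
This uses the fact that the capped run coincides with the uncapped one until overflow.

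\emph{Expected running time.} This is the main obstacle, because a merge costs work for every candidate pair that the bucket queries return, not just for the kept pairs. For a left entry $x$, each coordinate's cyclic interval of width $W_i$ meets at most $3$ cells, so at most $3^{n_i}=N^{o(1)}$ buckets are queried. The returned candidates have images inside a box of width at most $3W_i$ per coordinate. Conditioning on the child lists, which depend only on rows outside $A_i$, the fresh uniform $A_i$ makes each nonzero $x+y$ a candidate with probability at most $(3W_i/q)^{n_i}=3^{n_i}\rho_i$. One subtlety is that the sums must be nonzero for this; the fixed first coordinate $2^i\neq0$ guarantees it. The expected number of candidates per merge is therefore at most $\E[|\mathcal L_L||\mathcal L_R|]\,3^{n_i}\rho_i$.

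To bound $\E[|\mathcal L_L||\mathcal L_R|]$, we use independence of disjoint subtrees given $A$. The rows used in the two subtrees coincide, though, so we will instead bound the product by $(|\mathcal L_L|+|\mathcal L_R|)^2/4$. This requires a second-moment bound on list sizes. The bound should follow from the same pair-counting as in \Cref{thm:reusable-tree-moments}, with the $Z$ terms replaced by $1$, giving $O(kN^2)$.

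An alternative avoids this. Each candidate pair costs $\operatorname{poly}(n)$ time, and the stored entries are capped by $K$, so we only need to bound candidates per stored left entry. Conditioned on a left entry, the expected number of right-list candidates is at most $(1+\varepsilon)N\cdot 3^{n_i}\rho_i=3^{n_i}N\rho_i/N\cdot N$. Summed over the expected $N$ left entries, this gives $N\cdot 3^{n_i}N\rho_i=N^{1+o(1)}$ using \eqref{eq:resource-conditions}. For that step we will rely on an expectation of a product over sibling subtrees that is bounded by the same argument as the list-size bound. Sorting the buckets costs $N^{1+o(1)}$, and the number of levels $r\le\log_2 q$ is polynomial. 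The expected total time is therefore $N^{1+o(1)}$.

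\emph{Numerical evaluation.} For each root entry, compute $t=\langle z,b\rangle \bmod q$ exactly, then compute rational approximations of $\cos(2\pi t/q)$ and $\sin(2\pi t/q)$ to within $2^{-P-2}$ via Taylor series with rational $\pi$ bounds. This takes $\operatorname{poly}(n,P)$ time, and $P=O(\log K)=\operatorname{poly}(n)$. Summing at most $K$ terms gives $|\widetilde Z-Z|\le K2^{-P}\le T/4$ by the choice of $P$ in \eqref{eq:decision-integers}. The quantity $|\widetilde Z|^2$ is rational, so comparing it with the integer $T^2$ is exact, and ties return \textsc{LWE} as in the algorithm. The total evaluation cost is $K\operatorname{poly}(n)=N^{1+o(1)}$ for every $b$. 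All the estimates above depend only on the hypotheses, so they hold uniformly over any parameter range in which those hypotheses hold.
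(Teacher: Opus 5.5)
Your outline is correct and matches the paper on most points. The shared steps are: at most $3^{n_i}$ bucket queries per left entry; the enlarged box bounding each pair's return probability by $3^{n_i}\rho_i$ once the first coordinate $2^i\neq0$ makes $A_i(x+y)$ uniform; Markov's inequality from an expected entry count of at most $3kN$; rejection sampling that overwrites its workspace; and the accuracy bound $K2^{-P}\le T/4$.

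The genuine difference is the step you call the main obstacle, bounding $\E[|\mathcal L_{v_0}|\,|\mathcal L_{v_1}|]$ for dependent siblings. The paper gets this with no new moment computation. Condition on the lower data (all leaf vectors, $A_1,\ldots,A_{i-1}$, and the entry randomness below level $i$). Then each of the $|\mathcal L_{v_0}|\,|\mathcal L_{v_1}|$ child-pair sums is kept with probability exactly $\rho_ip_i=1/N$, by the same nonzero-first-coordinate fact you already use for candidates. Hence $\E[|\mathcal L_{v_0}|\,|\mathcal L_{v_1}|]=N\,\E[|\mathcal L_v|]\le(1+\varepsilon)N^2$ by the list-size clause of \Cref{thm:reusable-tree-moments}.

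Your AM--GM route instead needs $\E[|\mathcal L_v|^2]$, which means re-running \Cref{lem:shared-forest} on a subtree rooted at level $j$. This works: \Cref{lem:leaf-nullity} covers up to $2k$ leaf entries, and the survival advantage is at most $N^{j\nu}\le N^{r\nu}$. It gives $(1+\varepsilon)N^2(1+2^{j+2}/N)=O(N^2)$, which is sharper than your $O(kN^2)$, though either suffices because $k=N^{o(1)}$. It proves more than the time bound needs, namely concentration of the list sizes.

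Your ``alternative'' is not really separate. Conditioning on one left entry does not leave the right list with expected size $(1+\varepsilon)N$, because both lists depend on the shared $A_1,\ldots,A_{i-1}$. So, as you note, it rests on the same sibling product. Running the list-size argument on the two subtrees jointly, without the parent's filter, does give $(1+\varepsilon)N^2$, and this is exactly what the paper's identity packages.

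For evaluation, approximating one root of unity per root entry also fits in $K\operatorname{poly}(n)$. You should, however, round each approximation to a fixed dyadic grid (the paper uses $2^{-(P+3)}(\Z+\mathrm{i}\Z)$). This keeps the running sum at $O(P+\log K)$-bit numerators and the final exact comparison polynomial; summing $K$ exact rationals with unrelated denominators could otherwise blow up the bit length. The paper's histogram $(c_t)$ additionally reduces the work to $q$ approximations.
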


\begin{corollary}[Accuracy of the decision statistic]
\label{cor:decision-accuracy}
Under the hypotheses of \Cref{thm:bit-resources}, for uniform $A$ independent of the tree randomness, the computed statistic satisfies
\begin{equation}
\label{eq:resource-accuracy-failure}
 \Prb[|\widetilde Z-Z|>T/4]\le2n^{-4},
\end{equation}
both on LWE inputs with every fixed secret-error pair and on uniform inputs.
\end{corollary}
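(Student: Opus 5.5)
The statement to prove is \Cref{cor:decision-accuracy}. The plan is to derive it directly from \Cref{thm:bit-resources} by splitting on whether list construction finishes within the storage limit. Let $F$ be the event that the stored-entry count would exceed $K$. On this event the algorithm stops, returns \textsc{uniform}, and sets $\widetilde Z=0$. On the complement of $F$, \Cref{thm:bit-resources} gives the deterministic bound $|\widetilde Z-Z(b)|\le K2^{-P}\le T/4$ for every right-hand-side vector $b$. Hence
\[
\Prb\bigl[|\widetilde Z-Z|>T/4\bigr]\le\Prb[F].
\]
It then suffices to show $\Prb[F]\le 2n^{-4}$ in both input models.

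The key point is that $F$ depends only on $A$, the initial lists, and the subsampling coins. It does not depend on $b$, since $b$ is used only when evaluating $Z$. Under the hypotheses of \Cref{thm:bit-resources}, and for uniform $A$ independent of the tree randomness, $\Prb[F]\le\tfrac32 n^{-4}\le 2n^{-4}$.

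This transfers to both input models as follows.
\begin{itemize}
\item \emph{LWE inputs.} For each fixed pair $(s,e)$, the input is $(A,A^\top s+e)$ with $A$ uniform. The list construction sees only $A$ and fresh randomness, so the law of $F$ is the same as in \Cref{thm:bit-resources}.
\item \emph{Uniform inputs.} Here $b$ is independent of $A$ and of the tree randomness, so again the law of $F$ is unchanged.
\end{itemize}
In both cases the bound follows.

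There is one subtlety to check. $Z$ is defined via the root list of the construction without a storage limit, while $\widetilde Z$ comes from the capped run. I must confirm that on the complement of $F$ the two runs coincide when coupled on the same random bits, so that the accuracy statement in \Cref{thm:bit-resources} compares $\widetilde Z$ against exactly this $Z$. This holds because the capped run performs identical merges and sampling decisions until an overflow would occur.

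I expect no real obstacle here. The only care needed is this coupling and the observation that the overflow event does not depend on $b$.
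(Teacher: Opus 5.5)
Your proposal is correct and follows the paper's proof: outside storage overflow, \Cref{thm:bit-resources} bounds the numerical error by $T/4$. The overflow probability is at most $\tfrac32n^{-4}\le2n^{-4}$ and is the same in both input models, because list construction never uses $b$. The capped-versus-uncapped coupling you flag is already part of that theorem's statement that $|\widetilde Z-Z(b)|\le T/4$ whenever construction finishes within the limit, so no separate argument is needed.
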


\begin{proof}
Outside storage overflow, the numerical error is at most $T/4$. The probability of overflow is at most $\tfrac32n^{-4}\le2n^{-4}$ by \Cref{thm:bit-resources}, and is the same in both input models because list construction does not use $b$.
\end{proof}

\begin{lemma}[Parameter choice]
\label{lem:schedule-properties}
Fix $\kappa>0$ and $C>\min\{1,1/\kappa\}$, and let $q=n^{\kappa+o(1)}$ run through primes. For each fixed asymptotic growth of the public bounds and all sufficiently large $n$, the parameters chosen in \Cref{tab:parameter-choices} are computable in polynomial bit time for every integer dimension $d$ and public rational bounds of polynomial encoding length satisfying
\begin{equation}
\label{eq:schedule-uniform-range}
 Cn/\ln\ln q\le d\le n,\qquad 0\le E_s\le\exp((\ln n)^{o(1)}),\qquad 0\le E_e\le n^{o(1)}.
\end{equation}
All parameter checks pass, \eqref{eq:moment-finite-conditions} and \eqref{eq:resource-conditions} hold, and
\begin{equation}
\label{eq:schedule-exponent}
 L=\left(\tfrac12+o(1)\right)\frac n{\ln\ln q},\qquad k=N^{o(1)},\qquad |\Gamma|=N^{-o(1)}.
\end{equation}

The estimate $|\Gamma|=N^{-o(1)}$ holds for every fixed secret-error pair $(s,e)$ with $\|s\|_\infty\le E_s$ and $\|e\|_\infty\le E_e$. For each fixed asymptotic growth of the bounds, the estimates are uniform in $d$ and the secret-error pair, including $q<n$ and zero bounds.
\end{lemma}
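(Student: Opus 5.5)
The plan is to fix an explicit schedule and then verify each requirement by direct estimates: the checks of Section~\ref{sec:parameter-computation}, \eqref{eq:moment-finite-conditions}, \eqref{eq:resource-conditions} and \eqref{eq:schedule-exponent}. The schedule is a Kirchner--Fouque style choice of filter widths that grow geometrically in $\sqrt{M_i}$. The key observation is that $\Gamma$ factorizes over coordinates, because in the ideal model all leaf vectors and node images are independent. Let $\phi_W(c)=\E[\omega^{Xc}]$ for $X$ uniform on a centered interval of width $W$; this is a Fej\'er-type ratio $\sin(\pi Wc/q)/(W\sin(\pi c/q))$, which is real. Then
\[
 \Gamma=\omega^{ke_1}\prod_{j=2}^{d}\phi_Q(e_j)^k\prod_{i=1}^r\prod_{t}\phi_{W_i}\bigl(s^{(i)}_t\bigr)^{M_i}.
\]
Whenever $W|c|/q\le1/4$ we have $\ln|\phi_W(c)|\ge-O((W|c|/q)^2)$. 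Hence
\[
 -\ln|\Gamma|=O\Bigl(dk(QE_e/q)^2+\sum_i n_iM_i(W_iE_s/q)^2\Bigr),
\]
and it suffices to make this $o(L)$. All bounds depend only on $E_s,E_e$ and never on the particular pair, which gives the required uniformity in $(s,e)$ and in $d$.

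\emph{Choice of $r$ and $W_i$.} Take $r=\lfloor\ln q/\ln\ln q\rfloor$. Then $k=2^r=q^{o(1)}<q$, and $\ln r=(1+o(1))\ln\ln q$. Since $L=\Theta(n/\ln\ln q)$ is much larger than $\ln q$, this also gives $k=N^{o(1)}$. Pick a slowly growing $\lambda$, for instance $\lambda=(\ln\ln q)^2$, so that $\lambda^2=\omega(\ln\ln q)$ and $\log\lambda=o(\ln r)$. Then set $W_i$ to be the largest odd width with
\[
 q/W_i\ge\lambda\max\{E_s,1\}\sqrt{M_i}.
\]
This makes $\sum_i n_iM_i(W_iE_s/q)^2\le n/\lambda^2=o(L)$. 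It also gives $\log_2(q/W_i)\ge\log_2\lambda\to\infty$, which secures $3W_i\le q$ and $3^{n_i}=N^{o(1)}$.

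\emph{Choice of $n_i$ and $L$.} Take $n_i=\lceil L/\log_2(q/W_i)\rceil$, adjusted by rounding so that $N\rho_i\in[1,N^{o(1)}]$ and $p_i\le1$. Summing over levels gives
\[
 \sum_i n_i\approx L\sum_{j<r}\frac1{j/2+\log_2(\lambda E_s)}=(2+o(1))L\ln r .
\]
Here $\log(\lambda E_s)=(\ln n)^{o(1)}$, and this must be $o(\ln r)$; that is exactly why the secret bound $\exp((\ln n)^{o(1)})$ is needed. I then choose $L$ as the largest integer with $\sum_i n_i\le n$ and absorb the leftover rows into the last level. This yields $L=(\tfrac12+o(1))n/\ln\ln q$.

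\emph{Choice of $Q$ and the leaf conditions.} Take $Q$ to be the smallest odd width satisfying
\[
 Q^{D_0}\ge 4kN^{r+1}\cdot N .
\]
This gives $4kN^{r+1}\le Q^{D_0}$ and $\varepsilon=4kN^rQ^{-D_0}\le N^{-1}$ simultaneously. Then check that $Q<q$ and that $dk(QE_e/q)^2=o(L)$.

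I expect the leaf step to be the main obstacle, and it is where $C>\min\{1,1/\kappa\}$ and $E_e\le n^{o(1)}$ must enter. We have
\[
 \ln Q\approx\frac{(r+1)L\ln2}{D_0}\approx\frac{(r+1)\ln 2}{2C}\cdot\frac{n}{d}\cdot C .
\]
With $d\ge Cn/\ln\ln q$ and $r\approx\ln q/\ln\ln q$, this is at most about $\ln q/(2C\cdot\text{const})$. To keep both $Q\le q^{1/2+o(1)}$ and $dk\,q^{o(1)}E_e^2(Q/q)^2=o(n/\ln\ln q)$, I will compare the exponents $\kappa\ln n$ and $\ln d\le\ln n$. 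This should split into the cases $\kappa\ge1$ and $\kappa<1$, and one of the two branches of the minimum should be needed in each case. If the constants do not close with this $r$, the first adjustment I would try is to shrink $r$ by a constant factor, which changes $L$ only in the $o(1)$.

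The remaining items are routine. Polynomial-time computability holds because every parameter is an integer of $O(\log q)$ or $O(n)$ bits, obtained by explicit rounding. We also need $2k\le N$, $D_0\ge1$ (since $d\to\infty$), and $\log n=o(L)$. These checks hold uniformly for $q<n$ and for zero bounds, where the relevant factors of $\Gamma$ equal $1$.
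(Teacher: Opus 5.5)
\textbf{Main gap.} The lemma concerns the specific values in \Cref{tab:parameter-choices}: it asserts that those values pass the checks and satisfy \eqref{eq:moment-finite-conditions}, \eqref{eq:resource-conditions} and \eqref{eq:schedule-exponent}. \Cref{alg:bounded-input-decision} and everything downstream compute exactly those values. Your proposal instead designs a different schedule: $r=\lfloor\ln q/\ln\ln q\rfloor$, widths set through $\lambda=(\ln\ln q)^2$, $Q$ chosen minimal for the dependency bound, and leftover rows pushed onto level $r$. Even a complete verification of it would prove a different lemma.

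The paper's proof checks the table directly:
\begin{itemize}
\item $\min\{d,q\}/(2h)<k\le\min\{d,q\}/h$ gives $k/d,\,k/q\le1/h$ and $\ln k=(\min\{1,\kappa\}+o(1))\ln n$.
\item $B=\lfloor\sqrt q/(2\overline E_e)\rfloor$ gives $\ln Q=(\tfrac12+o(1))\ln q$ and $QE_e\le\tfrac32\sqrt q$.
\item $\ell_i=\log_2 f+\tfrac12\log_2M_i+O(1)$, with $f=32\overline E_sh$, gives $H=(2+o(1))\ln\ln q$.
\item Taking $L$ minimal and deleting the $\Delta\le r-1$ excess rows at distinct levels gives $n/H\le L\le n/H+\ell_{\max}+1$ and $0\le\log_2(N\rho_i)\le2\ell_{\max}$.
\item $D_0/L\ge2C-o(1)$ gives the margin \eqref{eq:schedule-closure-margin}, $D_0\ln Q-L\ln k\ge(C-\min\{1,1/\kappa\}-o(1))L\ln q$, which absorbs $L\ln2+\ln(4k)$.
\end{itemize}
That margin is the essential use of the hypothesis on $C$. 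It is needed precisely because the table's $k$ is polynomially large. The bias loss is then $O(dk/q+nE_s^2/f^2)=O(n/h)=o(L)$.

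\textbf{Your leaf step.} Even for your own schedule, the step you flag as the main obstacle is not carried out, and the estimate you write for it is wrong: you lost the factor $1/\ln\ln q$ in $L$. You have $r\le\ln q/\ln\ln q$, $L\le(\tfrac12+o(1))n/\ln\ln q$ and $D_0\ge(C-o(1))n/\ln\ln q$. So your minimal $Q$ satisfies $\ln Q\le(\tfrac{\ln2}{2C}+o(1))\ln q/\ln\ln q$, that is, $Q=q^{o(1)}$. Since also $k=q^{o(1)}$, the error loss is $O(dkQ^2E_e^2/q^2)=n^{1-2\kappa+o(1)}=o(L)$ for every $\kappa>0$.

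So in your schedule there is no case split on $\kappa$, and $C>\min\{1,1/\kappa\}$ is never used. The place where you expect the hypothesis to enter belongs to the paper's schedule, not yours. Completed, your smaller $r$ with $\ln r\sim\ln\ln q$ would keep $H\sim2\ln\ln q$ while weakening the dependency requirement. That is an interesting variant, but it is not this lemma.

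\textbf{Smaller defects.} Two further points would also need repair.
\begin{itemize}
\item Pushing leftover rows onto level $r$ makes $n_r\ell_r>L$, i.e.\ $p_r>1$, contradicting \eqref{eq:tree-balancing}. Take $L$ minimal and subtract rows instead, as the table does.
\item The condition behind $H=(2+o(1))\ln r$ is $\ln\log_2(\lambda\overline E_s)=o(\ln r)$, not $\log(\lambda E_s)=o(\ln r)$. The latter fails for admissible bounds such as $E_s=\exp((\ln\ln n)^2)$.
\end{itemize}
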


Section~\ref{sec:parameter-verification} proves \cref{lem:schedule-properties}. Thus one invocation of the bounded-secret decision algorithm has expected running time and worst-case space $2^{(1/2+o(1))n/\ln\ln q}$. The reductions below make polynomially many such invocations and execute them sequentially, so their time costs add in expectation and their workspace can be reused.

\begin{proposition}[Resources across all calls]
\label{prop:reduction-resources}
Under the hypotheses of \Cref{thm:headline-lwe}, the decision and search algorithms of \Cref{sec:paper-reductions} have expected running time and worst-case space at most $2^{(1/2+o(1))n/\ln\ln q}$. With probability at least $1-1/n$, their running time is also at most $2^{(1/2+o(1))n/\ln\ln q}$. They terminate with probability $1$. These bounds include selecting and transforming the input equations, computing the parameters, all invocations of the bounded-secret decision algorithm, and all random sampling, including rejected draws. The high-probability time bound is a bound on runs taken to completion, so it introduces no additional correctness failure.
\end{proposition}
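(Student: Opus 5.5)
The plan is to combine \Cref{thm:bit-resources} (per-invocation resources) with \Cref{lem:schedule-properties} (parameter sizes), and then sum over the polynomially many sequential invocations made by the reductions of \Cref{sec:paper-reductions}.

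\emph{Step 1: a single call.} Fix one call to \Cref{alg:bounded-input-decision} on a transformed instance of dimension $d$ with bounds in the range \eqref{eq:schedule-uniform-range}. By \Cref{lem:schedule-properties}, the parameters are computable in polynomial time and satisfy \eqref{eq:moment-finite-conditions} and \eqref{eq:resource-conditions}. They also give $L=(1/2+o(1))n/\ln\ln q$, so $N^{1+o(1)}=2^{(1/2+o(1))n/\ln\ln q}$. \Cref{thm:bit-resources} then gives the following for this call: termination with probability $1$, worst-case space $N^{1+o(1)}$, and expected time $N^{1+o(1)}$, where the expectation is over uniform $A$ and the tree randomness.

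The point needing care is that the reductions feed transformed matrices to the decision algorithm. These are the normal form of \cite{ACPS09} and Regev's coordinate-guess modification of $b$. I would check that each matrix passed to a call is still uniform and independent of the fresh tree randomness, at least conditionally on the event that the normal-form pivot block is invertible, so that the expected-time clause of \Cref{thm:bit-resources} applies. For the guessed-coordinate calls, only $b$ changes. The list construction, and hence the expensive part of the work, does not depend on $b$. Evaluation takes at most $K\operatorname{poly}(n)$ time for every $b$, by the last clause of \Cref{thm:bit-resources}.

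\emph{Step 2: the reductions.} Next I bound the overhead. Selecting equations, Gaussian elimination for normal form, which may be retried on fresh samples if the pivot block is singular, and computing parameters all cost $\operatorname{poly}(n)$ bit operations in expectation. Any rejection sampling costs $O(1)$ expected trials per draw. The search reduction makes at most $n\cdot q\cdot\operatorname{poly}(n)=\operatorname{poly}(n)$ calls, counting coordinates, guesses, and repetitions for amplification. The calls run sequentially, so by linearity the expected total time is $\operatorname{poly}(n)\cdot N^{1+o(1)}$. The $\operatorname{poly}(n)$ factor is absorbed into $2^{o(n/\ln\ln q)}$, because $\ln n=o(n/\ln\ln q)$. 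Space is reused between calls, so the worst-case space is the maximum over calls plus polynomial bookkeeping. Termination with probability $1$ follows because finitely many calls are made and each terminates almost surely.

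\emph{Step 3: high probability.} Write $\mathcal T$ for the total running time and $\tau=\E[\mathcal T]$. Markov's inequality gives $\Prb[\mathcal T>n\tau]\le1/n$. Since $n\tau=2^{(1/2+o(1))n/\ln\ln q}$ again, this yields the high-probability time bound. The bound concerns runs taken to completion, so it does not add any correctness failure.

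I expect the main obstacle to be Step 1's distributional check: showing that every matrix handed to the bounded-secret algorithm has the uniform distribution assumed by \Cref{thm:bit-resources}, or alternatively that its expected-time bound survives the conditioning introduced by normal form. If exact uniformity fails, my first fallback is to bound the conditioning event's probability below by a constant. Conditioning then inflates the expectation by at most a constant factor.
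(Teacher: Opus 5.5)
Your skeleton matches the paper's proof: one-call bounds from \Cref{thm:bit-resources} with the parameters of \Cref{lem:schedule-properties}, linearity of expectation over the sequential and mutually dependent calls, workspace reuse, a single Markov bound with factor $n$, and termination of a finite sequence of almost surely terminating calls. The weak point is the step you flag as the main obstacle. You leave it unresolved, and your description of the search reduction is wrong.

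The coordinate-guess reduction changes the matrix, not only $b$. Call $(j,\sigma)$ receives $\widehat A=A+u_jw^\top$ and $\widehat b=b+\sigma w$. So ``only $b$ changes'' cannot be what justifies applying the expected-time clause of \Cref{thm:bit-resources} to the search calls. The missing argument is the bijection from the proof of \Cref{prop:bounded-search}. For each fixed $w$, the map $A\mapsto A+u_jw^\top$ is a bijection of $\Fq^{n\times d}$, so $\widehat A$ is exactly uniform (indeed independent of $w$). It is also independent of that call's fresh tree randomness. Each call's expected time is then the one-call bound, and linearity adds these marginal bounds without any independence between calls.

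For normal form there is no retry on fresh samples. The algorithm reserves the final $n$ original samples, tests $A_B$ once, and on singularity stops after that polynomial-time test; the fixed-sample model has no fresh samples to draw. Reserving a fixed block is what makes the distributional check exact. Since $A_{\mathrm{pre}}$ is independent of $A_B$, conditional on each fixed invertible $A_B$ the matrix $A^{\mathrm{nf}}=A_B^{-1}A_{\mathrm{pre}}$ is uniform. Composing the two bijections shows each perturbation $A^{\mathrm{nf}}+u_jw^\top$ is uniform too. You can therefore bound the expected time conditional on each fixed invertible $A_B$ and then average, so your constant-factor conditioning fallback is unnecessary.

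Two smaller corrections. First, search makes exactly $n(2\lfloor E_s\rfloor+1)=n^{1+o(1)}$ calls with no amplification, although your overcount $nq\operatorname{poly}(n)$ is harmless. Second, the worst-case space bound for sampling $w$ outside the calls needs the property in \Cref{lem:resource-exact-sampling} that rejected trials overwrite the same workspace without a trial counter. This matters because the number of trials is unbounded.
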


We prove \Cref{prop:reduction-resources} in Section~\ref{sec:reduction-resources}.

\subsection{Search and arbitrary secrets}
\label{sec:paper-reductions}

Two standard transformations extend the bounded-secret decision algorithm to all the cases in the main theorem. Both use only the original input samples; search reuses these samples across all decision calls. Appendix~\ref{app:reduction-proofs} verifies the adaptations for fixed secret-error pairs and the explicit failure bounds.

\paragraph{Recovering a bounded secret.} Regev's coordinate-guess transformation~\cite[\href{https://cims.nyu.edu/~regev/papers/qcrypto.pdf}{full version}, Lemma~4.2]{Reg05} tests a possible value $\sigma$ of secret coordinate $j$ by adding a fresh uniform $w\in\Fq^d$ to row $j$ of $A$ and adding $\sigma w$ to $b$. A correct guess preserves the LWE instance's secret and errors; an incorrect guess makes the right-hand-side vector uniform given the perturbed matrix. Test each guess on a fresh perturbation of the same first $d$ samples, using fresh decision randomness. Return the unique accepted value at each coordinate, reporting failure if it is missing or not unique. 

\begin{proposition}[Bounded-secret search]
\label{prop:bounded-search}
Under the hypotheses of \Cref{prop:bounded-input-decision}, let $\mathcal S=\{-\lfloor E_s\rfloor,\ldots,\lfloor E_s\rfloor\}$ be the set of candidate values for each secret coordinate. Search tests every $\sigma\in\mathcal S$ for every coordinate $j=1,\ldots,n$ using the perturbation described above, the same first $d=\min\{m,n\}$ original samples, and fresh randomness for each test. It returns the unique accepted value at each coordinate, reporting failure if there is none or more than one. The number of scheduled decision calls is
\begin{equation}
\label{eq:reduction-guess-count}
 M_{\mathrm{calls}}=n|\mathcal S|=n(2\lfloor E_s\rfloor+1).
\end{equation}
For every fixed secret-error pair satisfying the bounds, failure or incorrect recovery has probability at most $3M_{\mathrm{calls}}n^{-4}=n^{-3+o(1)}$ for sufficiently large $n$.
\end{proposition}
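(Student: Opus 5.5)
We will prove \Cref{prop:bounded-search} by verifying that each of the $M_{\mathrm{calls}}$ decision calls receives an input matching exactly one of the two models covered by \Cref{prop:bounded-input-decision}, and then applying a union bound. Fix the pair $(s,e)$ and a coordinate $j$ and guess $\sigma\in\mathcal S$. Let $A'$ be $A$ with $w^\top$ added to row $j$, and let $b'=b+\sigma w$, where $w\in\Fq^d$ is fresh and uniform. Then
\[
b'=A'^\top s+e+(\sigma-s_j)w \pmod q .
\]
Since $A$ is uniform and $w$ is independent of $A$, the matrix $A'$ is uniform.

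We split into two cases according to the guess.

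\textbf{Correct guess.} If $\sigma=s_j$, then $(A',b')$ is an LWE instance with the same fixed pair $(s,e)$, which is independent of $A'$. By \Cref{prop:bounded-input-decision}, the call wrongly returns \textsc{uniform} with probability at most $3n^{-4}$.

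\textbf{Incorrect guess.} If $\sigma\ne s_j$, then $\sigma-s_j$ is a nonzero element of $\Fq$. Here we need $|\sigma-s_j|\le 2E_s<q$, which holds for large $n$ because $E_s=n^{o(1)}$ while $q=n^{\kappa+o(1)}$. The key independence claim is that, given $A'$, the vector $w$ is still uniform: $w$ is the difference between row $j$ of $A'$ and row $j$ of $A$, and row $j$ of $A$ is uniform and independent of everything else. Consequently $(\sigma-s_j)w$ is uniform given $A'$, so $b'$ is uniform and independent of $A'$. This is exactly the uniform-input model of \Cref{prop:bounded-input-decision}, so the call wrongly returns \textsc{LWE} with probability at most $3n^{-4}$. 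This bound holds regardless of the other calls, because each call uses its own fresh $w$ and fresh decision randomness.

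Only the marginal distribution of each call's input matters. Different calls share the same original $A$ and $b$, so they are dependent, but the union bound does not require independence. The search succeeds exactly when every call returns the correct answer: each coordinate then has the true value $s_j$ as its unique accepted guess, and $s_j\in\mathcal S$ holds since $s_j$ is an integer with $|s_j|\le E_s$. A union bound over all $M_{\mathrm{calls}}=n(2\lfloor E_s\rfloor+1)$ calls bounds the failure probability by $3M_{\mathrm{calls}}n^{-4}$. With $E_s\le\exp((\ln n)^{o(1)})=n^{o(1)}$, this is $n^{-3+o(1)}$.

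The main obstacle is making the incorrect-guess case rigorous for a fixed, worst-case pair $(s,e)$. The uniform-input model of \Cref{prop:bounded-input-decision} requires $b'$ to be uniform and independent of $A'$, and we have to confirm that the proposition's guarantee depends only on this joint law of $(A',b')$. We handle this by computing the joint law explicitly. The map $(A,w)\mapsto(A',w)$ is a bijection that preserves the uniform measure, so $(A',w)$ is jointly uniform. Hence $(A',b')$ is uniform on $\Fq^{n\times d}\times\Fq^d$.

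A second point to check is that the hypotheses of \Cref{prop:bounded-input-decision} apply to the perturbed instance. The input still has $d=\min\{m,n\}$ equations and the same public bounds $E_s,E_e$. The secret and error of a correct-guess instance are unchanged, so they are integers obeying the required bounds and fixed independently of $A'$.
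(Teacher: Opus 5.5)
Your proposal is correct and follows essentially the same route as the paper. Both use the same row-$j$ perturbation with $b'=b+\sigma w$, and the same measure-preserving bijection $(A,w)\mapsto(A',w)$ showing that $A'$ and $w$ are independent and uniform, so a wrong guess (nonzero modulo $q$ since $2\lfloor E_s\rfloor<q$) gives a $b'$ that is uniform given $A'$. Both then take the same union bound over the per-call marginal guarantees of \Cref{prop:bounded-input-decision}, with no independence needed between calls. Your first justification for the conditional uniformity of $w$ (row $j$ of $A$ being independent of everything) is not rigorous on its own, but the bijection argument you give afterwards is exactly the paper's and suffices.
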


We prove \Cref{prop:bounded-search} in Appendix~\ref{app:bounded-search}.

\paragraph{Replacing an arbitrary secret by errors.} The normal-form transformation of Applebaum--Cash--Peikert--Sahai~\cite[Section~3.1, Lemma~2]{ACPS09} uses an invertible block of original equations to make their errors the new secret, up to sign. We reserve the final $n$ samples and test their matrix for invertibility once. On singularity, decision returns \textsc{uniform} and search reports failure. Otherwise, apply normal form and use the bounded-secret algorithms with $E_s=E_e=E$.

\begin{proposition}[Normal form using reserved samples]
\label{prop:fixed-suffix-normal-form}
Let $q$ be prime and $m>n$. There is a reduction, computable in time polynomial in the input length, from LWE with $n$ secret coordinates, $m$ equations, an arbitrary secret, and errors bounded by $E$ to LWE with $n$ secret coordinates, $m-n$ equations, and both secret and errors bounded by $E$. It uses only the original samples and fails only when the matrix of the final $n$ equations is singular, which happens with probability at most $1/(q-1)$. For every fixed original secret-error pair independent of the uniform input matrix, conditional on any fixed invertible reserved matrix, the output matrix is uniform and the transformed secret-error pair is fixed. Uniform input remains uniform. A recovered transformed secret determines the original secret using the reserved equations.
\end{proposition}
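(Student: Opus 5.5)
We prove \Cref{prop:fixed-suffix-normal-form} by the standard Applebaum--Cash--Peikert--Sahai argument, applied to a fixed block of reserved samples instead of an adaptively chosen invertible block. Write the instance as $A=[A_1\mid A_2]$ with $A_1\in\Fq^{n\times(m-n)}$ and $A_2\in\Fq^{n\times n}$ the reserved suffix, and correspondingly $b=(b_1,b_2)$, $e=(e_1,e_2)$.

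\emph{Singularity.} Since $A$ is uniform and $(s,e)$ is fixed independently of $A$, the block $A_2$ is a uniform square matrix over the prime field $\Fq$. It is singular with probability
\[
1-\prod_{j=1}^n(1-q^{-j})\le\sum_{j\ge1}q^{-j}=1/(q-1),
\]
and invertibility can be tested by Gaussian elimination in polynomial time.

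\emph{Transformation.} On invertibility, set
\[
A'=-A_2^{-1}A_1\in\Fq^{n\times(m-n)},\qquad b'=b_1-A_1^\top A_2^{-\top}b_2 .
\]
Substituting $b_2=A_2^\top s+e_2$ gives $A_2^{-\top}b_2=s+A_2^{-\top}e_2$. Hence
\[
b'=A_1^\top s+e_1-A_1^\top s-A_1^\top A_2^{-\top}e_2=A'^\top e_2+e_1 .
\]
So $(A',b')$ is an LWE instance with $m-n$ equations, secret $e_2$ and error $e_1$. Both are bounded by $E$ and fixed, because they are determined by the fixed original pair. The sign convention may be flipped; it only needs to be consistent with ``up to sign'' in the statement.

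\emph{Distribution.} Condition on any fixed invertible $A_2$. Then $A_1$ is still uniform and independent of $A_2$, and $A_1\mapsto -A_2^{-1}A_1$ is a bijection, so $A'$ is uniform. In the uniform case, $b_1$ is uniform and independent of $(A,b_2)$, so $b'$ is uniform given $A'$.

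\emph{Recovery.} Given the transformed secret $e_2$, recover the original secret as $s=A_2^{-\top}(b_2-e_2)$, using the reserved equations.

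There is no real obstacle here. The only care needed is the conditioning: both the uniformity of $A'$ and the fact that the new secret-error pair is fixed must hold conditionally on $A_2$, not just on average. This is why the suffix is chosen in advance rather than searched for.
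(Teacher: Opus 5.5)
Your proposal is correct and follows the paper's proof essentially step for step. Both proofs use the same fixed reserved suffix, the same bijection argument conditional on an invertible $A_2$, the same conditioning argument for uniform input, and the same recovery from the reserved equations. The only differences are cosmetic: you put the sign into the matrix, $A'=-A_2^{-1}A_1$, so the new secret is $e_2$ rather than the paper's $-e_B$, and you bound singularity by the exact count $1-\prod_{j=1}^n(1-q^{-j})$ instead of the paper's column-by-column union bound $\sum_{j=0}^{n-1}q^{j-n}$, both giving $1/(q-1)$.
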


Appendix~\ref{app:fixed-suffix-normal-form} proves \cref{prop:fixed-suffix-normal-form} and gives the transformation and recovery formulas. The bounded-secret algorithms use only the first $d=\min\{m-n,n\}$ transformed equations. Transforming just those equations is equivalent to transforming all $m-n$ and then discarding the unused ones.

\begin{corollary}[Arbitrary-secret decision and search]
\label{cor:arbitrary-secret}
Fix $\kappa>0$ and $C>\min\{1,1/\kappa\}$. Let $q=n^{\kappa+o(1)}$ be prime, let $m\ge n+Cn/\ln\ln q$ be polynomial in $n$, and let the public rational bound $0\le E\le\exp((\ln n)^{o(1)})$ have polynomial encoding length. For sufficiently large $n$, for every fixed secret $s\in\Fq^n$ and fixed integer error vector $e\in\Z^m$ with $\|e\|_\infty\le E$, independent of uniform $A$, normal form followed by bounded-secret decision or search with $E_s=E_e=E$ has respective error bounds
\begin{equation}
\label{eq:reduction-arbitrary-errors}
 3n^{-4}+\frac1{q-1},\qquad
 3n(2\lfloor E\rfloor+1)n^{-4}+\frac1{q-1}.
\end{equation}
The same decision error bound holds for uniform input. Both algorithms use only the original samples; search keeps the reserved equations unchanged throughout its decision tests and recovery.
\end{corollary}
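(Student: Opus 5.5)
The plan is to compose \Cref{prop:fixed-suffix-normal-form} with \Cref{prop:bounded-input-decision} for decision and with \Cref{prop:bounded-search} for search, and to add the failure probabilities by a union bound.

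First, apply the normal-form reduction. Since $m\ge n+Cn/\ln\ln q>n$, \Cref{prop:fixed-suffix-normal-form} applies. The reserved $n\times n$ matrix of the final $n$ equations is singular with probability at most $1/(q-1)$; this accounts for the additive $1/(q-1)$ term in both bounds of \eqref{eq:reduction-arbitrary-errors}. I will condition on any fixed invertible reserved matrix. Under that conditioning, the transformed instance has the following properties. It has $m'=m-n\ge Cn/\ln\ln q$ equations, which is polynomial in $n$. Its matrix is uniform. Its transformed secret-error pair is fixed, independent of that matrix, and has both coordinates bounded by $E$.

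Next, check the hypotheses of \Cref{prop:bounded-input-decision} with $E_s=E_e=E$. The requirement $E_s\le\exp((\ln n)^{o(1)})$ is exactly the hypothesis on $E$. The requirement $E_e\le n^{o(1)}$ follows because $\exp((\ln n)^{o(1)})=n^{(\ln n)^{o(1)-1}}=n^{o(1)}$. The public bound has polynomial encoding length. The proposition then gives decision error at most $3n^{-4}$ conditionally on each invertible reserved matrix. Averaging over the reserved matrix and adding the singular event gives $3n^{-4}+1/(q-1)$.

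The uniform case follows the same way. \Cref{prop:fixed-suffix-normal-form} says uniform input stays uniform, so the uniform-instance clause of \Cref{prop:bounded-input-decision} applies. A singular reserved matrix only produces the output \textsc{uniform}, which is correct on uniform input, so the bound $3n^{-4}+1/(q-1)$ certainly holds there too.

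For search, run \Cref{prop:bounded-search} on the transformed instance, still conditioned on the invertible reserved matrix. It recovers the transformed secret except with probability $3n(2\lfloor E\rfloor+1)n^{-4}$. The recovery formula of \Cref{prop:fixed-suffix-normal-form} then determines the original secret from the reserved equations, which were left unchanged throughout. Adding $1/(q-1)$ for singularity gives the second bound.

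The main point needing care is that \Cref{prop:bounded-input-decision} is stated for pairs fixed independently of a uniform matrix. That is precisely what \Cref{prop:fixed-suffix-normal-form} provides after conditioning on the reserved block, so the conditional-then-average argument is valid. One further check is that ``sufficiently large $n$'' in \Cref{prop:bounded-input-decision} holds uniformly over fixed pairs. This is what allows the conditional bound to be uniform before averaging.
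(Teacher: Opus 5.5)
Your proposal is correct and follows the paper's proof essentially step for step: condition on a fixed invertible reserved block $A_B$, apply \Cref{prop:bounded-input-decision,prop:bounded-search} to the fixed transformed pair $(-e_B,e_{\mathrm{pre}})$ with $E_s=E_e=E\le n^{o(1)}$, average, and add $1/(q-1)$ once for singularity, which is harmless in the uniform case because it outputs \textsc{uniform}. One small refinement: the transformed pair is determined by $e$ alone and does not depend on $A_B$, so you need the threshold to be uniform over pairs only for the quantifier order ``for sufficiently large $n$, for every $(s,e)$'' in the statement, not to justify the averaging step.
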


We prove \Cref{cor:arbitrary-secret} at the end of Appendix~\ref{app:fixed-suffix-normal-form}.

\begin{proof}[Proof of \Cref{thm:headline-lwe}]
For bounded secrets and errors, apply \Cref{prop:bounded-input-decision,prop:bounded-search}; for arbitrary secrets, apply \Cref{cor:arbitrary-secret}. All displayed error bounds tend to zero. \Cref{prop:reduction-resources} gives the total time, space, and termination guarantees. These bounds hold for every matrix-independent fixed secret-error pair, so averaging also allows any joint distribution satisfying the bounds and independent of the matrix.
\end{proof}

\section{Correctness under sample reuse}
\label{sec:paper-moments}

Algorithm~\ref{alg:bounded-input-decision} constructs coefficient-vector lists using $A$ and fresh randomness. For the analysis, let $\mathcal L_{\mathrm{root}}$ be its root list without a storage limit and write $Z(b)=\sum_{z\in\mathcal L_{\mathrm{root}}}\omega^{\langle z,b\rangle}$. Under LWE, we expect this sum to have mean close to $N\Gamma$, where $\Gamma$ is the mean of one Fourier contribution in the ideal model with independent leaf vectors and independent uniform node images in their filter boxes. Correctness requires the fluctuations, and the statistic's size under uniform input, to be small enough for the decision threshold to separate the two cases. \pnote{There is not enough context here for the reader to understand this paragraph, or even the issue that it is addressing. Another pargraph is needed that sets up the context, and also points the reader to the algorithm.}\cans{Added context recalling the algorithm, Fourier statistic, ideal mean, and the separation needed for decision before explaining the two sources of dependence.}

\paragraph{Moment bounds (Theorem~\ref{thm:reusable-tree-moments}).} For uniform $A$ independent of the tree randomness, under the theorem's parameter hypotheses, including \eqref{eq:moment-finite-conditions}, the construction without a storage limit satisfies \pnote{Restate the relevant theorem here}\cans{Restated the moment bounds here, including the LWE mean and variance, uniform-case second moment, and expected list sizes.}
\begin{equation*}
\begin{gathered}
 |\E_{\mathrm{LWE}}[Z(b)]-N\Gamma|\le3N\varepsilon\le3,\qquad \Var_{\mathrm{LWE}}[Z(b)]\le16kN,\\
 \E_{\mathrm{unif}}[|Z(b)|^2]\le10kN,\qquad \E[|\mathcal L_v|]\le(1+\varepsilon)N\quad\text{for every node }v.
\end{gathered}
\end{equation*}
The LWE bounds hold for every secret-error pair fixed independently of the uniform matrix; the uniform bound holds when $b$ is uniform independently of the matrix and tree randomness.

The terms of $Z(b)$ can be dependent for two reasons. Two leaf selections may use the same leaf entries and hence share intermediate entries and entry randomness. Even leaf selections with no shared leaf entries use the same matrix, and a linear dependency among their leaf vectors can make several filters easier to pass. We account for shared entries by counting shared subtrees, and for the common matrix by bounding both the rarity of linear dependencies among leaf vectors and the advantage they give to survival.

\subsection{Describing entries by leaf selections}
\label{sec:paper-leaf-selections}

For the analysis, describe a possible root entry by choosing one leaf entry from each leaf list. Denote the $a$th vector in list $j$ by $X_{j,a}$, and write $a=(a_1,\ldots,a_k)\in[N]^k$ for the \emph{leaf selection}, a tuple of entry indices, where $[N]=\{1,\ldots,N\}$. These choices determine every intermediate sum, whether or not the entry is produced.

Denote the root node by $\mathrm{root}$. For a node $v$, let $\operatorname{ht}(v)$ be its level and $I(v)\subseteq[k]$ the interval of leaf positions below it; in particular, $I(\mathrm{root})=[k]$. Put\pnote{Instead of the notation $z_a$, it would be better to give a special name for the root node and then just use $z_{root name}(a)$, consistent with the other $z_v(a)$}\cans{Named the root $\mathrm{root}$ and consistently replaced $z_a$ by $z_{\mathrm{root}}(a)$.}
\begin{equation}
\label{eq:tree-witness}
 X_a=(X_{1,a_1},\ldots,X_{k,a_k}),\qquad
 z_v(a)=\sum_{j\in I(v)}X_{j,a_j}.
\end{equation}
The possible entry at $v$ corresponding to $a$ is identified by $(v,a|_{I(v)})$. Restricting $a$ to the two child subtrees determines its unique producing pair. Induction up the tree gives a one-to-one correspondence between possible entries at $v$ and restricted leaf selections, even when different selections have equal vector sums. Every leaf vector has first coordinate one, so the first-coordinate invariant is
\begin{equation}
\label{eq:tree-anchor}
 (z_v(a))_1=2^{\operatorname{ht}(v)}\bmod q\ne0.
\end{equation}
The nonzero value follows from $2^{\operatorname{ht}(v)}\le k<q$.

Represent the entry randomness for every possible internal entry by a Bernoulli subsampling variable $U_{v,a|_{I(v)}}$, equal to one with probability $p_{\operatorname{ht}(v)}$. These variables are mutually independent and independent of the leaf vectors and $A$. An entry's variable is only used by the construction when its two children exist and their pair passes the filter. This lets us describe survival over a fixed family of leaf selections:
\begin{equation}
\label{eq:tree-survival}
 C_a=\prod_{v\ \mathrm{internal}}U_{v,a|_{I(v)}}\,\mathbf1_{\{A_{\operatorname{ht}(v)}z_v(a)\in[-R_{\operatorname{ht}(v)},R_{\operatorname{ht}(v)}]^{n_{\operatorname{ht}(v)}}\bmod q\}}.
\end{equation}
Thus $C_a=1$ precisely when all merges required by the leaf selection succeed. For any node $v$, let $C_{v,a}$ denote the product in \eqref{eq:tree-survival} restricted to the internal nodes in the subtree rooted at $v$, including $v$ when it is internal. Then $C_{v,a}=1$ precisely when the restricted leaf selection $a|_{I(v)}$ produces an entry at $v$; for a leaf, the empty product is one. The Fourier statistic can now be written as
\begin{equation}
\label{eq:tree-witness-statistic}
 Z(b)=\sum_{a\in[N]^k}C_a\omega^{\langle z_{\mathrm{root}}(a),b\rangle}.
\end{equation}

To control the fluctuations of $Z(b)$, we expand its second moment as a sum over ordered pairs of leaf selections:
\begin{equation}
\label{eq:tree-second-moment-pairs}
 \E[|Z(b)|^2]=\sum_{a,a'\in[N]^k}\E\bigl[C_aC_{a'}\omega^{\langle z_{\mathrm{root}}(a)-z_{\mathrm{root}}(a'),b\rangle}\bigr].
\end{equation}
Each Fourier term has magnitude one, so the absolute value of a pair's expected contribution is at most $\E[C_aC_{a'}]$, the probability that both selections survive. We therefore need to understand which merge decisions a pair shares.

Two leaf selections agreeing on every leaf entry below an internal node reuse the same intermediate entry, including its filter test and entry randomness. For example, in the four-leaf tree of \Cref{fig:paper-four-leaf-construction}, two leaf selections agreeing exactly in lists 1 and 2 share the first left merge. If they instead agree exactly in lists 1 and 3, they share two leaf entries but no internal entry. The arrangement of shared leaf entries determines which merge decisions are reused.

For the moment, suppose the distinct leaf vectors used by a pair are linearly independent. Its distinct merge decisions then succeed independently with probability $\rho_ip_i=1/N$ each, by \eqref{eq:tree-balancing}. With no shared internal entry, the joint survival probability is $N^{-2(k-1)}$. Since there are $N^{2k}$ ordered pairs of leaf selections in total, we use $N^{2k}N^{-2(k-1)}=N^2$ as the reference scale for the sum of pair-survival probabilities.

Of these $N^{2k}$ pairs, exactly $N^{2k-2}$ agree at two specified leaf positions; requiring no agreement elsewhere can only reduce this count. Thus each of the two sharing patterns above has a counting factor at most $N^{-2}$.

Sharing the entries in lists 1 and 2 also reuses one complete merge in their shared subtree: its filter and subsampling decision are required once rather than twice. Replacing two factors $1/N$ by one increases the joint survival probability by a factor $N$. Sharing the entries in lists 1 and 3 reuses no merge. Combining the counting and survival factors gives base weights $N^{-1}$ for sharing lists 1 and 2 and $N^{-2}$ for sharing lists 1 and 3. These weights bound the corresponding pair contributions after division by the reference scale $N^2$, before accounting for linear dependencies among leaf vectors. \Cref{lem:shared-forest} extends this count to arbitrary shared subtrees and includes the effect of those dependencies.

\begin{table}[htbp]
\centering
\caption{Notation for the tree and its moment analysis.}
\label{tab:paper-notation}
\begin{tabular}{@{}>{\raggedright\arraybackslash}p{.29\textwidth}>{\raggedright\arraybackslash}p{.67\textwidth}@{}}
\toprule
\textbf{Symbols} & \textbf{Meaning}\\
\midrule
$n,m,d,q$ & Secret dimension, number of original samples, number of selected samples, prime modulus.\\
$A,b,s,e$; $E_s,E_e$ & Matrix, right-hand-side vector, secret, errors; public coordinate bounds.\\
$k=2^r$, $N=2^L$ & Number of leaf lists and tree depth; leaf list size and its exponent.\\
$A_i,n_i,M_i$; $s^{(i)}$ & Level-$i$ submatrix, row count, node count; corresponding secret block.\\
$B,Q$; $R_i,W_i$ & Leaf and filter radii, with widths $Q=2B+1$, $W_i=2R_i+1$.\\
$\rho_i,p_i$ & Probability that a uniform image passes the filter; probability with which a passing sum is kept, with $\rho_ip_i=1/N$.\\
$U_{v,a|_{I(v)}}$ & Bernoulli subsampling variable for a possible internal entry at $v$.\\
$\mathcal L_v$, $a$, $C_a,C_{v,a}$ & List at node $v$; leaf selection (a tuple of entry indices); survival indicators at the root and at $v$.\\
$z_v(a),z_{\mathrm{root}}(a)$ & A leaf selection's sum at node $v$ and at the root.\\
$Z(b),\Gamma,T$ & Fourier statistic, ideal mean, and decision threshold.\\
\bottomrule
\end{tabular}
\end{table}

\FloatBarrier
\subsection{The moment bounds}

Use the leaf selections and survival indicators just defined. Enumerate the internal nodes as $v_1,\ldots,v_{k-1}$ by increasing level and then from left to right. For a leaf selection $a$, define the ordered tuple of its node images by
\begin{equation}
\label{eq:ordered-node-images}
 Y_a=\bigl(A_{\operatorname{ht}(v_1)}z_{v_1}(a),\ldots,A_{\operatorname{ht}(v_{k-1})}z_{v_{k-1}}(a)\bigr).
\end{equation}

We prove the mean and variance bounds more generally for any fixed complex-valued contribution function $g(X,Y)$ satisfying $|g(X,Y)|\le1$ for all inputs. For fixed integer secret and errors $(s,e)\in\Z^n\times\Z^d$, independent of $A$, the LWE contribution is obtained with \pnote{Expand this out here and point out what the function $g$ is in this case. Otherwise it is a bit confusing.}\cans{Moved the explicit LWE contribution function to this first use, explained its relation to the node images, and removed the later repetition.}
\begin{equation}
\label{eq:fixed-lwe-character-function}
 g((x_j)_j,(y_v)_v)=\omega^{\textstyle\left\langle\sum_jx_j,e\right\rangle+\sum_{i=1}^r\left\langle\sum_{\operatorname{ht}(v)=i}y_v,s^{(i)}\right\rangle}.
\end{equation}
This function has absolute value one. Under LWE, substituting $b=A^\top s+e$ gives $\langle z_{\mathrm{root}}(a),b\rangle=\langle z_{\mathrm{root}}(a),e\rangle+\langle Az_{\mathrm{root}}(a),s\rangle\bmod q$. The level-$i$ nodes partition the selected leaves, so their actual images sum to $A_i z_{\mathrm{root}}(a)$. Hence $g(X_a,Y_a)=\omega^{\langle z_{\mathrm{root}}(a),b\rangle}$. A surviving leaf selection contributes $g(X_a,Y_a)$, and a failed leaf selection contributes zero. Put
\begin{equation}
\label{eq:moment-observable}
 \xi_a=C_ag(X_a,Y_a),\qquad Z_g=\sum_{a\in[N]^k}\xi_a,\qquad \Gamma_g=\E[g(X^*,Y^*)].
\end{equation}
Here $(X^*,Y^*)$ is sampled from the ideal model for one leaf selection defined in Section~\ref{sec:decision-correctness}. Recall that its leaf vectors are sampled independently from \eqref{eq:tree-leaf-distribution}, and its node images are independent uniform points in their respective filter boxes, independent of the leaf vectors. Thus, for this LWE contribution function, $\Gamma_g=\Gamma$ from \eqref{eq:tree-ideal-mean}.

The function $g$ is the same for every leaf selection and depends only on its selected leaf vectors and node images; any additional parameters are fixed independently of the matrix, leaf vectors, and entry randomness. Thus every leaf selection has the same ideal mean $\Gamma_g$. We prove the mean and variance conclusions of \Cref{thm:reusable-tree-moments} with $Z_g,\Gamma_g$ in place of $Z,\Gamma$.\pnote{Refer to the theorem, and not just the specific equation within the theorem}\cans{Referred to the theorem and specified that the generalization concerns its mean and variance conclusions.}

The proof has three steps. First we bound linear dependencies among leaf vectors, weighted by their advantage in passing the filters. Then we count pairs of leaf selections sharing subtrees. Finally we combine these bounds to control the mean, covariance, and uniform-case second moment. All sums count entries separately, including equal vector values.

\subsection{The survival advantage of a linear dependency among leaf vectors}
\label{sec:leaf-rank-survival}

The wide intervals of the leaf distribution make linear dependencies rare, but a dependency can increase the chance of passing several filters. Proposition~\ref{prop:affine-concentration} bounds the probability that a sampled leaf vector lies in the affine subspace determined by earlier vectors and its fixed first coordinate. Proposition~\ref{prop:rank-filtering} bounds filter probabilities in terms of rank, while Proposition~\ref{prop:independent-filtered-images} identifies the conditional image distribution when the vectors are independent. \pnote{Refer to the lemmas instead of saying first and second fact.}\cans{Replaced the references to the first and second facts with references to the separate propositions.}

\pnote{This lemma contains three independent statements. Each of those should be a separate proposition.}\cans{Separated the three facts into propositions on affine concentration, rank filtering, and independent filtered images, with the existing short proofs.}
\begin{proposition}[Affine-subspace concentration]
\label{prop:affine-concentration}
Let $x\in\Fq^u$ have independent coordinates, each taking any single value with probability at most $\vartheta\le1$. For every fixed affine subspace of dimension at most $t\le u$, the probability that $x$ lies in it is at most $\vartheta^{u-t}$.
\end{proposition}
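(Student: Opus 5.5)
We prove the bound by describing the affine subspace through pivot coordinates and conditioning on the remaining ones. Let $S\subseteq\Fq^u$ be a fixed affine subspace of dimension $t'\le t$; if $S$ is empty there is nothing to prove, so write $S=x_0+V$ with $V$ a linear subspace of dimension $t'$. Bring a basis of $V$, written as the rows of a $t'\times u$ matrix, into reduced row echelon form. This selects a set $P\subseteq[u]$ of $t'$ pivot coordinates such that the projection of $V$ onto the coordinates in $P$ is a bijection onto $\Fq^{P}$. Consequently the same projection is a bijection from $S$ onto $\Fq^P$. Thus each point of $S$ is uniquely determined by its coordinates in $P$, and every coordinate outside $P$ is a fixed affine function of the coordinates in $P$.

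Next, we condition on the pivot coordinates $(x_j)_{j\in P}$. Because the coordinates of $x$ are independent, the remaining $u-t'$ coordinates $(x_j)_{j\notin P}$ are still independent after this conditioning and keep their original marginal distributions. Given $(x_j)_{j\in P}$, the event $x\in S$ says that each non-pivot coordinate $x_j$ equals one specific value determined by the conditioning. Each such equality has probability at most $\vartheta$, and by independence their joint probability is at most $\vartheta^{u-t'}$. Averaging over the pivot coordinates gives $\Prb[x\in S]\le\vartheta^{u-t'}\le\vartheta^{u-t}$. The last inequality uses $\vartheta\le1$ and $t'\le t$.

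The only step that needs care is showing that a pivot set exists with the stated bijection property; this is standard linear algebra over the field $\Fq$, which is why primality of $q$ matters. Degenerate cases cause no problems: when $t'=u$ the bound is trivially $1$, and an empty $S$ has probability $0$.
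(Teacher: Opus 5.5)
Your proof is correct and follows the paper's own argument: choose $t'$ coordinates that determine points of the subspace uniquely, condition on them, and bound the probability that the remaining $u-t'$ independent coordinates hit their forced values by $\vartheta^{u-t'}\le\vartheta^{u-t}$. Your reduced-row-echelon justification of the pivot set makes explicit a step the paper states without proof.
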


\begin{proof}
For a nonempty affine subspace of dimension $t'\le t$, choose $t'$ coordinates that determine its points uniquely. Conditional on these coordinates, membership forces each of the other $u-t'$ coordinates to a single value, which it takes with probability at most $\vartheta$. Independence gives probability at most $\vartheta^{u-t'}\le\vartheta^{u-t}$; the empty case is immediate.
\end{proof}

\begin{proposition}[Filtering and rank]
\label{prop:rank-filtering}
Fix a level $i$ and vectors $v_1,\ldots,v_s\in\Fq^d$ of rank $t$. For a uniform submatrix $A_i\in\Fq^{n_i\times d}$,
\begin{equation}
\label{eq:moment-rank-filter}
 \Prb[A_i v_j\in\{-R_i,\ldots,R_i\}^{n_i}\bmod q\text{ for all }j]\le(W_i/q)^{n_i t}=\rho_i^t.
\end{equation}
\pnote{This last sentence is hard to interpret at this point. Either state it unambiguously, or derive it from the previous statement at the point where it is used, where it would be clearer in context.}\cans{Stated the filtering event directly for $A_i$ and included the equality $(W_i/q)^{n_i t}=\rho_i^t$; the proof now explains the multiplication over independent rows.}
\end{proposition}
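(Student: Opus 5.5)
The plan is to reduce to the row-by-row case and then to a rank-$t$ subfamily. Since the vectors $v_1,\ldots,v_s$ have rank $t$, I would first pick indices $j_1,\ldots,j_t$ such that $v_{j_1},\ldots,v_{j_t}$ are linearly independent. The event that $A_i v_j$ lies in the box for all $j$ is contained in the event that this holds for $j\in\{j_1,\ldots,j_t\}$. Hence it suffices to bound the latter probability by $(W_i/q)^{n_it}$. If $t=0$, all $v_j$ are zero, the bound $1$ is trivial, and $0\in\{-R_i,\ldots,R_i\}$, so nothing is lost.

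Write $A_i$ by its rows $\alpha_1,\ldots,\alpha_{n_i}\in\Fq^d$, which are independent and uniform. Coordinate $\ell$ of $A_iv_j$ is $\langle\alpha_\ell,v_j\rangle$. The box event factors as the intersection over $\ell$ of the events that $\langle\alpha_\ell,v_{j_c}\rangle$ lies in the centered interval $\{-R_i,\ldots,R_i\}\bmod q$ for every $c=1,\ldots,t$. Each such event depends only on the row $\alpha_\ell$. By independence of the rows, the probability is the product over $\ell$ of the single-row probabilities, so it suffices to show that each single-row probability is at most $(W_i/q)^t$.

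For a single uniform row $\alpha$, I would consider the linear map $\alpha\mapsto(\langle\alpha,v_{j_1}\rangle,\ldots,\langle\alpha,v_{j_t}\rangle)\in\Fq^t$. It is given by the $t\times d$ matrix with linearly independent rows $v_{j_c}^\top$, so it is surjective. A surjective linear map pushes the uniform distribution on $\Fq^d$ forward to the uniform distribution on $\Fq^t$, because every fiber is a coset of the kernel and all fibers have size $q^{d-t}$. The image therefore lands in the product of $t$ intervals, each of size $W_i$ modulo $q$, with probability exactly $W_i^t/q^t$. The check $3W_i\le q$, indeed $W_i\le q$, guarantees that the interval has exactly $W_i$ distinct residues. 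Multiplying over the $n_i$ rows gives $(W_i/q)^{n_it}=\rho_i^t$ by the definition \eqref{eq:tree-balancing}.

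There is no serious obstacle. The only points needing care are the surjectivity argument that yields exact uniformity on $\Fq^t$, and the remark that restricting to a basis subfamily only enlarges the event.
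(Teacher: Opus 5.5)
Your proposal is correct and follows essentially the same route as the paper: you select $t$ linearly independent vectors, use surjectivity of $\alpha\mapsto(\langle\alpha,v_{j_c}\rangle)_{c}$ onto $\Fq^t$ to get probability $(W_i/q)^t$ per row, observe that the remaining vectors can only shrink the event, and multiply over the $n_i$ independent rows. The only cosmetic difference is that you discard the non-basis vectors before factoring over rows, whereas the paper discards them within each row.
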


\begin{proposition}[Independent images conditional on filtering]
\label{prop:independent-filtered-images}
For fixed linearly independent vectors $v_1,\ldots,v_s\in\Fq^d$ and a uniform submatrix $A_i\in\Fq^{n_i\times d}$, the images $A_i v_1,\ldots,A_i v_s$ are independent and uniform on $\Fq^{n_i}$. Conditional on every image lying in $\{-R_i,\ldots,R_i\}^{n_i}\bmod q$, they are independent and uniform in this box.
\end{proposition}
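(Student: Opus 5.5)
Since $v_1,\ldots,v_s$ are linearly independent, $s\le d$ and we can extend them to a basis $v_1,\ldots,v_d$ of $\Fq^d$. Let $V\in\Fq^{d\times d}$ be the invertible matrix with these columns. The plan is a change of variables. Right multiplication by $V$ is a bijection of $\Fq^{n_i\times d}$. Hence, for uniform $A_i$, the matrix $A_iV$ is again uniform. Its columns are exactly $A_iv_1,\ldots,A_iv_d$, and the entries of a uniform matrix are independent and uniform. So the first $s$ columns $A_iv_1,\ldots,A_iv_s$ are independent and uniform on $\Fq^{n_i}$, which proves the first claim.

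For the conditional statement, I will write $\mathcal B=\{-R_i,\ldots,R_i\}^{n_i}\bmod q$. By the first part, the joint law of $(A_iv_1,\ldots,A_iv_s)$ is uniform on $(\Fq^{n_i})^s$. Conditioning a uniform distribution on the event $\mathcal B^s$ gives the uniform distribution on $\mathcal B^s$. This event has positive probability because $\mathcal B$ is nonempty. Finally, the uniform distribution on a product set $\mathcal B^s$ is the product of uniform distributions on $\mathcal B$. So conditionally the images are independent and uniform in the box.

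There is no real obstacle here. The only points to state explicitly are that linear independence makes $V$ invertible, so that $A_i\mapsto A_iV$ preserves the uniform measure, and that the box has positive probability. The box is nonempty and, by $3W_i\le q$, has exactly $W_i^{n_i}$ distinct residues, although only nonemptiness is needed.
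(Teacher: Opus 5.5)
Your proof is correct and follows essentially the same route as the paper: a linear change of variables shows the images are jointly uniform, and conditioning a uniform law on the product box $\mathcal B^s$ gives the product of uniform laws on $\mathcal B$. The only cosmetic difference is that the paper argues row by row, using that $a\mapsto(\langle a,v_1\rangle,\ldots,\langle a,v_s\rangle)$ is onto $\Fq^s$ together with independence of the rows of $A_i$, whereas you extend to a basis and use the bijection $A_i\mapsto A_iV$ on the whole matrix.
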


\begin{proof}[Proof of \Cref{prop:rank-filtering,prop:independent-filtered-images}]
Select $t$ linearly independent vectors from the family. The linear map taking a uniform row to its inner products with those vectors is onto $\Fq^t$, so those inner products are independent uniform residues. The probability that all $t$ selected inner products lie in $\{-R_i,\ldots,R_i\}\bmod q$ is $(W_i/q)^t$; requiring the remaining inner products to lie there can only decrease it. Multiplying over the $n_i$ independent rows gives $(W_i/q)^{n_i t}=\rho_i^t$. If the whole family is linearly independent, the product uniform distribution gives independent uniform images, and conditioning it on the product filter box gives the stated conditional distribution.
\end{proof}

A linear dependency among the leaf vectors can lower the rank of the node sums at several levels. More generally, let the leaf vectors have rank deficiency $\nu$, meaning their rank is $\nu$ less than their number. As shown in the pair-survival proof below, the node sums at each level lose at most $\nu$ in rank. Each unit of rank deficiency in the node sums can increase the level-$i$ filter probability by at most $\rho_i^{-1}=Np_i\le N$, by \eqref{eq:tree-balancing} and the parameter condition $p_i\le1$ stated there. Over $r$ levels, the total survival advantage is therefore at most $N^{r\nu}$. The next lemma weights each possible rank deficiency by this survival advantage.

\begin{lemma}[Linear dependencies among leaf vectors and their survival advantage]
\label{lem:leaf-nullity}
Under \eqref{eq:moment-finite-conditions}, take any fixed collection of $1\le J\le2k$ distinct leaf entries. Let $X$ be the $d\times J$ matrix of their coefficient vectors, and let $\nu=J-\operatorname{rank}X$ be their \emph{rank deficiency}, the number of independent linear relations among these vectors. Thus $\nu=0$ exactly when the selected leaf vectors are linearly independent. Then
\begin{equation}
\label{eq:leaf-nullity-moments}
 \E[N^{r\nu}]\le1+\varepsilon,\qquad
 \E[N^{r\nu}\mathbf1_{\{\nu>0\}}]\le\frac{\varepsilon}{1-N^{-r}}\le2\varepsilon.
\end{equation}
Also $\Prb[\nu>0]\le\varepsilon/(N^r-1)\le\varepsilon$.
\end{lemma}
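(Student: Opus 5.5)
The plan is to expose the leaf vectors one at a time and count how often a new vector falls into the span of the previous ones. Order the $J$ selected leaf entries arbitrarily as $x^{(1)},\ldots,x^{(J)}$, and call step $j$ \emph{dependent} if $x^{(j)}\in\operatorname{span}(x^{(1)},\ldots,x^{(j-1)})$. The rank increases by one exactly at the non-dependent steps, so $\nu$ equals the number of dependent steps. Since the entries are distinct, their vectors are independent draws from \eqref{eq:tree-leaf-distribution}. Conditional on the earlier vectors, $x^{(j)}$ is therefore still a fresh sample.

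\textbf{Step 1: per-step dependence probability.} Fix the span $V$ of the earlier vectors, so $\dim V\le j-1\le J-1\le 2k-1$. Since $x^{(j)}$ has first coordinate $1$, the event $x^{(j)}\in V$ says that $(X_2,\ldots,X_d)$ lies in the projection onto coordinates $2,\ldots,d$ of the affine set $\{v\in V: v_1=1\}$. This projection is an affine subspace of $\Fq^{d-1}$ of dimension at most $2k-1$. Because $Q=2B+1<q$, each coordinate $X_t$ takes each residue with probability at most $\vartheta=1/Q$. Applying \Cref{prop:affine-concentration} with $u=d-1$ and $t=2k-1$ gives
\[
\Prb[\text{step $j$ dependent}\mid x^{(1)},\ldots,x^{(j-1)}]\le Q^{-(d-2k)}\le Q^{-D_0}=:\pi .
\]

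\textbf{Step 2: tail of $\nu$ and the moments.} A specified set of $t$ steps is all dependent with probability at most $\pi^t$; this follows by chaining the conditional bound of Step~1 in order. A union bound over the step sets then gives $\Prb[\nu\ge t]\le\binom{J}{t}\pi^t\le (J\pi)^t$. Set $x=JN^r\pi\le 2kN^r\pi=\varepsilon/2\le 1/2$, using $\varepsilon\le N^{-1}\le 1$ from \eqref{eq:moment-finite-conditions}. Then
\[
\E[N^{r\nu}\mathbf 1_{\{\nu>0\}}]\le\sum_{t\ge1}N^{rt}\Prb[\nu\ge t]\le\sum_{t\ge1}x^t=\frac{x}{1-x}\le 2x\le\varepsilon .
\]
This is within the claimed $2\varepsilon$. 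Adding $\Prb[\nu=0]\le1$ gives $\E[N^{r\nu}]\le1+\varepsilon$. Finally, $\Prb[\nu>0]\le J\pi\le\varepsilon/(2N^r)\le\varepsilon/(N^r-1)\le\varepsilon$.

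\textbf{Main obstacle.} The delicate step is Step~1. One must check that the condition ``lies in the span'' is an affine condition on the free coordinates alone, with dimension controlled by $J-1$ rather than by $d$. This is where the fixed first coordinate and the uniform lower bound $D_0=d-1-2k$ enter. One must also check that conditioning on earlier vectors does not disturb the distribution of $x^{(j)}$, which uses only that the entries are distinct.

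If the sharper constant $\varepsilon/(1-N^{-r})$ is wanted, the route is different. Write $N^{r\nu}\mathbf1_{\{\nu>0\}}=\sum_{t\ge1}(N^{rt}-N^{r(t-1)})\mathbf 1_{\{\nu\ge t\}}$ and use $\Prb[\nu\ge t]\le(J\pi)^t$. This yields a geometric series with ratio $x$ and leading factor $(1-N^{-r})$ in place of $1$. I would only pursue this refinement if later uses of the lemma require it.
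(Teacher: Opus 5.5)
Your argument is correct and complete. Its skeleton matches the paper's: expose the columns in order, rewrite membership in the predecessors' span as an affine condition on the $d-1$ free coordinates, and apply \Cref{prop:affine-concentration} to get a conditional per-step bound $Q^{-D_0}$.

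The differences are in the details. The paper counts dimension more tightly. Because $V$ contains the first column, whose first coordinate is one, the slice $\{v\in V:v_1=1\}$ has dimension exactly $\dim V-1\le j-2$, which gives $Q^{-(d-j+1)}$. Your cruder bound $2k-1$ gives $Q^{-(d-2k)}$, which still suffices since $d-2k=D_0+1$.

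The more substantive difference is the aggregation. You go through the tail: chaining gives $\Prb[\nu\ge t]\le\binom{J}{t}\pi^t\le(J\pi)^t$, and you then sum a geometric series. The paper never looks at the tail. It writes $\lambda^{I_j}=1+(\lambda-1)I_j$ with $\lambda=N^r$ and iterates conditional expectations, so that $\E[\lambda^\nu]\le(1+(\lambda-1)Q^{-D_0})^J\le e^{\varepsilon/2}$ in one line. It then extracts the other two bounds from $\E[\lambda^\nu-1]\le\varepsilon$, using $\lambda^\nu-1\ge(1-\lambda^{-1})\lambda^\nu$ and $\lambda^\nu-1\ge\lambda-1$ on $\{\nu>0\}$. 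The paper's route is shorter and needs no union bound over sets of steps. Yours gives a more informative tail bound and slightly better constants, for example $\Prb[\nu>0]\le\varepsilon/(2N^r)$.

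Your closing paragraph has the comparison backwards. Since $0<1-N^{-r}<1$, the quantity $\varepsilon/(1-N^{-r})$ is larger than $\varepsilon$, not sharper. Your bound $\E[N^{r\nu}\mathbf1_{\{\nu>0\}}]\le\varepsilon$ therefore already implies the stated middle inequality, and no refinement is needed. The factor $1/(1-N^{-r})$ is just what the paper's extraction step costs. Incidentally, the telescoping identity you write there sums to $N^{r\nu}-1$, not $N^{r\nu}\mathbf1_{\{\nu>0\}}$. Bounding it gives $\E[N^{r\nu}-1]\le(1-N^{-r})x/(1-x)$. That sharpens the paper's intermediate bound $\E[\lambda^\nu-1]\le\varepsilon$ rather than reproducing its constant.
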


\begin{proof}
Reveal the selected leaf vectors in a fixed order. Let $I_j$ indicate that column $j$ belongs to the span of its predecessors. The first column has first coordinate one, so $I_1=0$. For $j\ge2$, condition on the predecessor columns and let $V$ be their span. A new column can lie in $V$ only at a point whose first coordinate is one. Since $V$ contains the first column, which has first coordinate one, fixing this coordinate removes exactly one degree of freedom from $V$. The allowed points therefore form an affine subspace of dimension $\dim(V)-1\le j-2$.

Ignoring the fixed first coordinate leaves an affine subspace of $\Fq^{d-1}$ of the same dimension. By \eqref{eq:tree-leaf-distribution}, the remaining $d-1$ coordinates of the new column are independent of one another and of the predecessor columns. Each is uniform on the $Q$ distinct residues from $[-B,B]\cap\Z$, so each residue has probability at most $1/Q$. Conditional on the predecessor columns, $I_j=1$ exactly when these coordinates lie in that fixed affine subspace. Applying \Cref{prop:affine-concentration} with $u=d-1$, $t=j-2$, and $\vartheta=1/Q$ gives
\begin{equation}
\label{eq:leaf-exposure}
 \Prb[I_j=1\mid X_{\cdot,1},\ldots,X_{\cdot,j-1}]\le Q^{-(d-j+1)}\le Q^{-D_0}.
\end{equation}
Each column lying in the span of its predecessors contributes one to $\nu$, so $\nu=\sum_j I_j$. Put $\lambda=N^r$. Since $I_j\in\{0,1\}$, we have $\lambda^{I_j}=1+(\lambda-1)I_j$, so \eqref{eq:leaf-exposure} gives $\E[\lambda^{I_j}\mid X_{\cdot,1},\ldots,X_{\cdot,j-1}]\le1+(\lambda-1)Q^{-D_0}$. The earlier indicators are determined by the conditioned columns, so iterating conditional expectation from $j=J$ down to $j=1$ gives
\begin{equation}
\label{eq:leaf-exposure-moment}
 \E[\lambda^\nu]\le\bigl(1+(\lambda-1)Q^{-D_0}\bigr)^J
 \le\exp(2kN^rQ^{-D_0})=e^{\varepsilon/2}\le1+\varepsilon.
\end{equation}
The last inequality follows from $e^x\le1+2x$ for $0\le x\le1$ and $\varepsilon\le1/N$.

To isolate the outcomes with $\nu>0$, observe that $\lambda^\nu-1$ vanishes when $\nu=0$. When $\nu>0$, we have $\lambda^\nu\ge\lambda$, so $1\le\lambda^{-1}\lambda^\nu$ and hence $\lambda^\nu-1\ge(1-\lambda^{-1})\lambda^\nu$. Taking expectations and using \eqref{eq:leaf-exposure-moment} gives
\begin{equation}
\label{eq:leaf-positive-mass}
 (1-\lambda^{-1})\E[\lambda^\nu\mathbf1_{\{\nu>0\}}]
 \le\E[\lambda^\nu-1]\le\varepsilon.
\end{equation}
This proves the weighted bound, and $(\lambda-1)\Prb[\nu>0]\le\E[\lambda^\nu-1]$ proves the probability bound, since $\lambda\ge2$.
\end{proof}

We apply \Cref{lem:leaf-nullity} to the distinct leaf entries of a full leaf selection, a selection restricted to a subtree, or a pair of selections.\pnote{This last sentence is not relevant to the lemma, is it? If not, better to put it outside.}\cans{Moved the application to full selections, subtrees, and pairs outside the lemma and its proof.}

\subsection{Shared subtrees and their weights}

For two leaf selections $a,a'\in[N]^k$, define their agreement set by $\mathsf{Sh}(a,a')=\{j\in[k]:a_j=a'_j\}$. For every $S\subseteq[k]$, let $c_i(S)$ count the level-$i$ nodes $v$ with $I(v)\subseteq S$, and let $t(S)$ count the maximal subtrees whose leaves all lie in $S$, including isolated shared leaves. Put $t(\varnothing)=0$. \pnote{Shouldn't this be in the next subsection?}\cans{Moved the shared-set definitions to the shared-subtree subsection and the rank argument to the pair-bound proof.}

Among all ordered pairs of leaf selections, a fraction $N^{-\ell}$ agree throughout a specified subtree with $\ell$ leaves. This subtree has $\ell-1$ internal nodes, so reusing its merge decisions increases joint survival by a factor $N^{\ell-1}$ when the distinct leaf vectors are linearly independent. The reference scale $N^2=N^{2k}N^{-2(k-1)}$ from Section~\ref{sec:paper-leaf-selections} combines the ordered pair count with the baseline joint survival probability. Dividing by this scale leaves the subtree's factor $N^{-\ell}N^{\ell-1}=N^{-1}$. We will count one such factor for each maximal shared subtree, then include the additional survival advantage of rank deficiency.

Maximality means that we count a shared subtree at its highest shared node, without counting its descendants again. The example in \Cref{fig:paper-shared-forest} has seven shared leaves and $t(S)=3$.

Put $p_0=N^{1-k}$. For a fixed leaf selection with linearly independent leaf vectors, this is its survival probability: the $k-1$ internal nodes independently pass their filters and are kept, each with probability $\rho_ip_i=1/N$ by \eqref{eq:tree-balancing}. The next bound combines the subtree count with the possible survival advantage from rank deficiencies among leaf vectors.

\begin{figure}[hbp]
\centering
\begin{tikzpicture}[x=.85cm,y=.6cm,tree node/.style={circle,draw,fill=white,inner sep=0pt,minimum size=3.2pt}]
  \fill[black!12,rounded corners=2pt] (.73,-.24) rectangle (1.27,.24);
  \fill[black!12,rounded corners=2pt] (1.7,-.24) -- (3.3,-.24) -- (2.5,1.32) -- cycle;
  \fill[black!12,rounded corners=2pt] (3.7,-.24) -- (7.3,-.24) -- (5.5,2.32) -- cycle;
  \foreach \name/\x/\y in {root/3.5/3,a/1.5/2,b/5.5/2,c/.5/1,d/2.5/1,e/4.5/1,f/6.5/1,l1/0/0,l2/1/0,l3/2/0,l4/3/0,l5/4/0,l6/5/0,l7/6/0,l8/7/0}
    \node[tree node] (\name) at (\x,\y) {};
  \foreach \parent/\leftchild/\rightchild in {root/a/b,a/c/d,b/e/f,c/l1/l2,d/l3/l4,e/l5/l6,f/l7/l8}
    \draw (\leftchild) -- (\parent) -- (\rightchild);
  \foreach \name in {b,d,e,f,l2,l3,l4,l5,l6,l7,l8}
    \fill (\name.center) circle (1.6pt);
  \foreach \j in {1,...,8}
    \node[below=2pt] at (l\j) {$\j$};
\end{tikzpicture}
\caption{Two leaf selections agree at positions $S=\{2,\ldots,8\}$. Their shared nodes are filled, and the three maximal shared subtrees are shaded. Agreement at seven leaf positions contributes a counting factor $N^{-7}$, while four reused merge decisions contribute a survival factor $N^4$, giving base weight $N^{-3}$. Nested shared subtrees are not counted again.}
\label{fig:paper-shared-forest}
\end{figure}

\pnote{This lemma again contains multiple independent statements that can be separated.}\cans{Separated the subtree identity, pair bound, and total forest-weight bound, stating each one's assumptions explicitly.}
\begin{proposition}[Shared-subtree identity]
\label{prop:shared-subtree-identity}
For every $S\subseteq[k]$, \pnote{What is the quantification over $S$?}\cans{Specified that the identity and pair bound hold for every $S\subseteq[k]$, including the empty and full sets.}
\begin{equation}
\label{eq:shared-subtree-identity}
 \sum_{i=1}^r c_i(S)=|S|-t(S).
\end{equation}
\end{proposition}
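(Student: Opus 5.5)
The plan is to decompose $S$ into its maximal shared subtrees and count internal nodes subtree by subtree. A node $v$ satisfies $I(v)\subseteq S$ exactly when $v$ lies in some subtree all of whose leaves are in $S$. Every such node therefore lies below a unique maximal node $u$ with $I(u)\subseteq S$. These maximal nodes are precisely the roots of the $t(S)$ maximal shared subtrees, isolated shared leaves included. Uniqueness comes from the interval structure: the sets $I(v)$ form a laminar family, so two nodes whose intervals lie in $S$ and share a leaf are comparable. Hence the maximal shared subtrees have pairwise disjoint leaf sets, and these leaf sets partition $S$.

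Next, I fix a maximal shared subtree rooted at $u$ with $\ell_u=|I(u)|$ leaves. Since $k$ is a power of two, this is a complete binary subtree with $\ell_u=2^{\operatorname{ht}(u)}$ leaves. Any full binary tree with $\ell$ leaves has $\ell-1$ internal nodes, and for $\ell_u=1$ this count is zero. Every node $v$ at a level $i\ge1$ with $I(v)\subseteq S$ is an internal node of exactly one such subtree. Conversely, every internal node of such a subtree is at some level $i\ge1$ and has leaf interval inside $S$. Summing over levels therefore counts each internal node of each maximal shared subtree exactly once. This gives $\sum_{i=1}^r c_i(S)=\sum_u(\ell_u-1)$.

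Finally, because the leaf sets partition $S$, $\sum_u\ell_u=|S|$. The number of terms in the sum is $t(S)$, so the total equals $|S|-t(S)$. For $S=\varnothing$ both sides are zero, consistent with the convention $t(\varnothing)=0$. For $S=[k]$ there is one maximal subtree, the root, giving $k-1$ internal nodes as expected.

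I expect no real obstacle. The only point needing care is the laminarity argument, which guarantees that maximal shared subtrees are disjoint and that no internal node is counted twice or missed. I would state it explicitly: two tree intervals are either nested or disjoint.
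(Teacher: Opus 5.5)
Your proposal is correct and follows the same route as the paper: the maximal shared subtrees partition $S$, and a subtree with $\ell$ leaves contributes $\ell-1$ internal nodes. Your laminarity argument just spells out in detail the partition claim that the paper states in one line.
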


\begin{proof}
The maximal shared subtrees partition $S$. A subtree with $\ell$ leaves contains $\ell-1$ internal nodes, which proves \eqref{eq:shared-subtree-identity}.
\end{proof}

\begin{lemma}[Shared-subtree pair bound]
\label{lem:shared-forest}
Under \eqref{eq:moment-finite-conditions}, for every $S\subseteq[k]$,
\begin{equation}
\label{eq:weighted-pair-bound}
 \frac1{N^2}\sum_{\substack{a,a'\in[N]^k\\\mathsf{Sh}(a,a')=S}}\E[C_aC_{a'}]\le(1+\varepsilon)N^{-t(S)}.
\end{equation}
\pnote{Is the sum over $a$ and $a'$ satisfying some condition? What is the notation $S(a,a')$? Better to use a symbol different from $S$ for this}\cans{Defined the agreement-set function $\mathsf{Sh}(a,a')$ and made the domains and exact-agreement conditions of the pair sums explicit.}
\end{lemma}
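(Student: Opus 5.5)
The plan is to fix $S$ and a pair $(a,a')$ with $\mathsf{Sh}(a,a')=S$, and bound $\E[C_aC_{a'}]$ by conditioning on the leaf vectors. First, the $J=2k-|S|$ distinct leaf entries used by the pair form a fixed collection with $J\le2k$. Let $\nu$ be its rank deficiency as in \Cref{lem:leaf-nullity}.

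The internal nodes relevant to the pair split into three groups. The first group consists of nodes $v$ with $I(v)\subseteq S$; these are shared, and there are $c_i(S)$ of them at level $i$. The second consists of nodes of $a$ not in the first group, and the third of nodes of $a'$ not in the first group. A node of the second or third group has a restriction $a|_{I(v)}\ne a'|_{I(v)}$, so it is a distinct possible entry. The distinct possible entries at level $i$ therefore number $2M_i-c_i(S)$, and their subsampling variables are independent with mean $p_i$.

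Conditional on the leaf vectors, the filter events at level $i$ involve only $A_i$, and the $A_i$ for different levels are independent. \Cref{prop:rank-filtering} then bounds the level-$i$ filter probability by $\rho_i^{t_i}$, where $t_i$ is the rank of the distinct node sums at level $i$.

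The key claim is $t_i\ge 2M_i-c_i(S)-\nu$. To prove it, observe that the distinct level-$i$ nodes of the pair have leaf sets that are either disjoint or identical as restricted selections. For $a$ alone the level-$i$ nodes partition the leaves, and the same holds for $a'$. So each distinct node sum is a sum over a set of distinct leaf entries, and these sets form two partitions of the distinct entries that share the $c_i(S)$ common blocks. The map from $\Fq^J$ (coefficients on distinct leaf entries) to these node-sum coefficient vectors has kernel of dimension $\le\nu$ when composed with $X$. More precisely, the node sums are $Xw_v$ for $0/1$ vectors $w_v$. These $w_v$ are linearly independent over $\Fq$: pick, for each unshared node of $a$, a leaf entry used only by $a$ (it exists, since the node is not contained in $S$). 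Then a triangular argument works, ordering shared blocks first, then $a$-only blocks via exclusive leaves, then $a'$ blocks via their exclusive leaves. Hence $\dim\operatorname{span}\{Xw_v\}\ge(2M_i-c_i(S))-\dim\ker X=2M_i-c_i(S)-\nu$.

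I expect this independence-of-$w_v$ argument to be the main obstacle. The delicate case is the third group, where a node of $a'$ with $I(v)\not\subseteq S$ still needs a leaf outside the span of the previously chosen vectors. I would handle it by noting that such a node contains a leaf $j\notin S$, whose entry $a'_j$ is used by no node of $a$ and by no other level-$i$ node of $a'$.

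Multiplying, each level contributes at most $p_i^{2M_i-c_i}\rho_i^{2M_i-c_i-\nu}$. Since $\rho_ip_i=1/N$ and $\rho_i^{-1}\le N$, this is at most $N^{-(2M_i-c_i)}N^{\nu}$. The product over levels gives
\[
 \E[C_aC_{a'}\mid X]\le N^{-2(k-1)+\sum_ic_i(S)}N^{r\nu}.
\]
Taking expectations and applying \Cref{lem:leaf-nullity} gives $\E[C_aC_{a'}]\le(1+\varepsilon)N^{-2k+2+|S|-t(S)}$, using \Cref{prop:shared-subtree-identity}.

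Finally, the number of ordered pairs with $\mathsf{Sh}(a,a')=S$ is at most $N^{2k-|S|}$, since we choose $a$ freely and then $a'$ off $S$. Multiplying by the per-pair bound and dividing by $N^2$ yields $(1+\varepsilon)N^{-t(S)}$. The edge cases $S=\varnothing$ and $S=[k]$ are covered by the same computation; for $S=[k]$ we have $t=1$ and the bound is $N^{1-k}$-consistent.
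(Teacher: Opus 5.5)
Your proposal is correct and follows essentially the same route as the paper. You condition on the leaf vectors, count $2M_i-c_i(S)$ distinct subsampling variables per level, prove $t_i\ge 2M_i-c_i(S)-\nu$ by giving each kept incidence vector a private leaf entry, and then apply \Cref{lem:leaf-nullity}, \Cref{prop:shared-subtree-identity}, and the pair count $N^{2k-|S|}$. One remark is inaccurate, though your argument never uses it: same-position unshared nodes of $a$ and $a'$ have leaf-entry sets overlapping on $I(v)\cap S$, so these sets are not ``either disjoint or identical''; your private-leaf argument already handles this overlap.
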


\begin{proof}
Fix a pair $a,a'$ with $\mathsf{Sh}(a,a')=S$. Compare their node sums in two copies of the tree, one for each selection. Let $X$ be the $d\times J$ matrix whose columns are the coefficient vectors of the $J=2k-|S|$ distinct leaf entries, and put $\nu=J-\operatorname{rank}X$. A node sum is obtained by summing a subset of these columns. Represent that subset by its \emph{incidence vector} $h\in\Fq^J$: the zero-one vector marking its selected leaf entries. Multiplication by $X$ gives the node's coefficient vector $Xh\in\Fq^d$. At level $i$, keep the incidence vectors from both copies, except that at each of the $c_i(S)$ fully shared positions, they share the same incidence vector, so only one copy is kept.

Therefore, we have $2M_i-c_i(S)$ incidence vectors. And here we show that they are linearly independent. Distinct node positions at this level have disjoint leaf intervals. Each position that is not fully shared therefore contains, in each copy, a private leaf entry that occurs in no other kept incidence vector. A leaf entry below a fully shared position occurs only in its one kept incidence vector. Suppose they are linearly dependent, then a linear combination of these incidence vectors is zero. Nevertheless, equaling zero at these private coordinates requires that every coefficient of the linear combination is zero. \pnote{This paragraph and the next are hard to follow. Are they necessary?}\cans{The rank bound is needed for the pair-survival estimate. I have placed its proof beside that estimate, retaining the incidence-vector intuition and explaining the kernel and rank-nullity step.}

The kernel of the linear map $h\mapsto Xh$ is $\ker(X)=\{h\in\Fq^J:Xh=0\}$. By the rank-nullity theorem, its dimension is $\dim\left(\ker(X)\right)=J-\operatorname{rank}X=\nu$. Restricting this map to the span of the kept incidence vectors cannot increase its kernel dimension. If $t_i$ is the rank of the resulting level-$i$ node sums, multiplying the $2M_i-c_i(S)$ independent incidence vectors by $X$ therefore gives $t_i\ge2M_i-c_i(S)-\nu$. Equal vector values at different leaf-entry indices contribute to the rank deficiency $\nu$; we kept one copy only when the restricted leaf selections, and hence their incidence vectors, were fully shared and identical. For one leaf selection, or a leaf selection restricted to a subtree, the incidence vectors at each level have disjoint nonempty supports, and the same argument applies.

At level $i$, the pair uses exactly $2M_i-c_i(S)$ distinct subsampling variables, because it uses the same variable once at each fully shared node. Their probability of all equaling one is $p_i^{2M_i-c_i(S)}$. By \Cref{prop:rank-filtering}, conditional on $X$, the filter probability is at most $\rho_i^{t_i}\le\rho_i^{2M_i-c_i(S)-\nu}$. By \eqref{eq:tree-balancing} and the parameter condition $p_i\le1$ stated there, we have $\rho_i p_i=1/N$ and $\rho_i^{-1}=Np_i\le N$. \pnote{Where is the condition $\rho_i^{-1}=Np_i$ coming from? Every time some fact like this is used, it should have a reference to where it comes from, as there are many parameters in these arguments and it is hard to keep track of everything.}\cans{Added the balancing reference and the condition $p_i\le1$, and supplied local references at the other parameter-dependent steps.} The subsampling variables are independent of the matrix, and the submatrices for different levels are independent, so \pnote{Explain why the $p_i$ is there below, and why the second inequality is true}\cans{Explained the number of distinct subsampling coins, inserted the balancing equality, and identified the bounds used in the conditional and averaged inequalities.}
\begin{equation}
\label{eq:conditional-pair-leaf-bound}
\begin{aligned}
 \E[C_aC_{a'}\mid X]
 &\le\prod_{i=1}^r\rho_i^{2M_i-c_i(S)-\nu}p_i^{2M_i-c_i(S)}\\
 &=p_0^2N^{\sum_i c_i(S)}\prod_{i=1}^r\rho_i^{-\nu}\\
 &\le p_0^2N^{\sum_i c_i(S)}N^{r\nu}.
\end{aligned}
\end{equation}
The middle equality uses $p_0=N^{1-k}$ and $\sum_i M_i=k-1$; the last inequality uses $\rho_i^{-1}\le N$. There are exactly $N^k(N-1)^{k-|S|}\le N^{2k-|S|}$ ordered pairs with $\mathsf{Sh}(a,a')=S$. The coefficient vectors of their distinct leaf entries have the same independent product distribution, regardless of the chosen indices. Multiplying \eqref{eq:conditional-pair-leaf-bound} by this count and using $N^{2k}p_0^2=N^2$ leaves the factor $N^{-|S|+\sum_i c_i(S)}=N^{-t(S)}$, by \Cref{prop:shared-subtree-identity}. Averaging over the leaf vectors and applying \Cref{lem:leaf-nullity} proves \eqref{eq:weighted-pair-bound}.
\end{proof}

\begin{proposition}[Total shared-forest weight]
\label{prop:shared-forest-weight}
If $2k\le N$, then
\begin{equation}
\label{eq:shared-forest-bound}
 F_r:=\sum_{S\subseteq[k]}N^{-t(S)}\le e^{2k/N}\le1+4k/N.
\end{equation}
\end{proposition}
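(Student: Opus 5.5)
The plan is to exploit the recursive structure of the complete binary tree and prove the bound by induction on the depth $r$, while $N$ stays fixed. For a tree of depth $j$ with $2^j$ leaves, define $F_j=\sum_{S}N^{-t(S)}$, where $S$ ranges over subsets of that tree's leaves and $t(S)$ counts maximal subtrees whose leaves all lie in $S$, exactly as in the definition before \Cref{prop:shared-subtree-identity}.

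The first step is a recursion for $F_j$. Split $S\subseteq[2^j]$ into its restrictions $S_L,S_R$ to the left and right subtrees of the root. A maximal shared subtree either lies inside one child subtree or is the whole tree. The whole tree is shared exactly when $S$ is the full leaf set. Hence $t(S)=t(S_L)+t(S_R)$ unless $S=[2^j]$, and in that case $t(S)=1$, whereas the split formula would give $1+1=2$. Therefore
\begin{equation*}
 F_j=F_{j-1}^2-N^{-2}+N^{-1}\le F_{j-1}^2+N^{-1},\qquad F_0=1+N^{-1}.
\end{equation*}
The base case counts the empty set, with $t=0$, and the single leaf, with $t=1$.

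The second step is the inductive claim $F_j\le e^{c_j}$ with $c_j=(2^{j+1}-1)/N$. The base case holds because $1+N^{-1}\le e^{1/N}$. For the inductive step, since $F_{j-1}\ge1$ we have
\begin{equation*}
 F_j\le e^{2c_{j-1}}+N^{-1}\le e^{2c_{j-1}}(1+N^{-1})\le e^{2c_{j-1}+1/N}.
\end{equation*}
The exponent satisfies $2c_{j-1}+1/N=(2^{j+1}-2+1)/N=c_j$, which closes the induction. At $j=r$ this gives $F_r\le e^{(2k-1)/N}\le e^{2k/N}$.

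The last step converts to the linear bound. The hypothesis $2k\le N$ gives $x=2k/N\le1$, and $e^x\le1+2x$ holds on $[0,1]$ (the same inequality used in \Cref{lem:leaf-nullity}). This yields $F_r\le1+4k/N$.

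The only delicate point is the recursion itself. One must check that maximality makes $t$ additive across the two children except in the fully shared case, where the two maximal subtrees at the children merge into one at the root. Once that is settled, the remaining steps are routine exponential estimates.
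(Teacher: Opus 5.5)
Your proof is correct, but it takes a different route from the paper. You split $S$ at the root and check that maximal shared subtrees split additively unless $S$ is the full leaf set. This gives the exact recursion $F_j=F_{j-1}^2-N^{-2}+N^{-1}$ with $F_0=1+N^{-1}$, and induction on depth then yields $F_r\le e^{(2k-1)/N}$.

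The paper instead uses a one-step injection. Each $S$ is determined by the set $T(S)$ of roots of its maximal shared subtrees, which is a subset of the $2k-1$ tree nodes with $|T(S)|=t(S)$. Hence $F_r\le\sum_{T\subseteq V}N^{-|T|}=(1+1/N)^{2k-1}\le e^{(2k-1)/N}$.

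Both routes reach the same intermediate bound. Your induction also closes with $(1+1/N)^{2^{j+1}-1}$ in place of $e^{c_j}$. The final step $e^u\le1+2u$ on $[0,1]$ is identical in both.

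The encoding argument is shorter. It avoids the maximality bookkeeping that you identify as the delicate point, because it only needs that $S$ is recoverable from $T(S)$. Your recursion costs that extra verification, but it is exact: it gives the precise value of $F_r$, whereas the injection deliberately overcounts by including all node sets, not just those arising from some $S$. That exactness could support sharper estimates if they were ever needed.
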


\begin{proof}
Put $x=N^{-1}$, and let $V$ be the set of all $2k-1$ nodes of the tree, leaves included. For $S\subseteq[k]$, let $T(S)\subseteq V$ be the set of roots of its maximal shared subtrees, so $|T(S)|=t(S)$. Since $S$ is the union of the leaves below $T(S)$, distinct sets $S$ give distinct sets $T(S)$. Thus each term $x^{t(S)}$ of $F_r$ is a different term of $\sum_{T\subseteq V}x^{|T|}$, and the remaining terms are nonnegative. By the binomial theorem and $1+x\le e^x$, $F_r\le\sum_{T\subseteq V}x^{|T|}=(1+x)^{2k-1}\le\exp((2k-1)/N)\le e^{2k/N}$. Apply $e^u\le1+2u$ for $0\le u\le1$, using $2k\le N$, to finish.
\end{proof}

Summing \eqref{eq:weighted-pair-bound} over nonempty $S$ and multiplying by $N^2$ bounds all shared pairs by $(1+\varepsilon)N^2(F_r-1)$; the empty set contributes $1$ to $F_r$. This includes the case $a=a'$, for which $S=[k]$ is one maximal shared subtree and $\sum_a\E[C_a]\le(1+\varepsilon)N$. The count uses indices: equal values at different leaf-entry indices remain separate and contribute to $\nu$, rather than to $t(S)$.

\subsection{Comparing the actual and ideal means, covariance, and the uniform case}
\label{sec:moment-comparison}

For a fixed leaf selection with linearly independent leaf vectors, the actual node images are independent uniform field vectors, by \Cref{prop:independent-filtered-images} and independence of the level submatrices. Conditional on survival, they are independent and uniform in their filter boxes, just as in the ideal model with those leaf vectors held fixed. We therefore compare the actual conditional mean $\E[\xi_a\mid X_a]$ with $p_0$ times the ideal mean for the same leaf vectors, before averaging over the leaf vectors. These quantities agree when the leaf vectors are linearly independent. For leaf vectors with a linear dependency, \Cref{lem:leaf-nullity} bounds their contribution to the mean discrepancy by accounting for their survival advantage.

For the variance, we separate pairs that share leaf entries from pairs that do not. The shared-subtree bound handles the former. For disjoint selections, their independently sampled leaf vectors allow us to express the covariance using only cases where the $2k$ selected vectors have a linear dependency.

\begin{proof}[Proof of \Cref{thm:reusable-tree-moments}]
\emph{The mean and list sizes.} Recall that $C_a$ indicates survival, $\xi_a=C_ag(X_a,Y_a)$ with $|g|\le1$, and $p_0=N^{1-k}$ is the survival probability for linearly independent leaf vectors. Let $\nu$ be the rank deficiency of the $k$ selected leaf vectors. The argument for \eqref{eq:conditional-pair-leaf-bound}, applied to one selection, gives \pnote{Need a reminder for some of these symbols}\cans{Recalled the survival indicator, bounded contribution, baseline probability $p_0$, and rank deficiency immediately before the estimate.}
\begin{equation}
\label{eq:single-witness-domination}
 \E[C_a\mid X_a]\le p_0N^{r\nu},\qquad
 |\E[\xi_a\mid X_a]|\le p_0N^{r\nu},
\end{equation}
If $\nu=0$, \Cref{prop:independent-filtered-images} gives independent uniform node images at each level; different levels use independent submatrices. Conditional on all filters passing, these images are therefore independent uniform filter-box vectors. Requiring the independent subsampling variables to equal one changes no image distribution. Consequently, for these fixed leaf vectors, \pnote{What is the distribution of $Y^*$? it is hard for people to remember these things from five pages ago, so reminders are needed.}\cans{Specified the distribution and independence of every ideal node image in $Y^*$ at this use.}
\begin{equation}
\label{eq:conditional-ideal-mean}
 \E[\xi_a\mid X_a]=p_0\gamma(X_a),\qquad
 \gamma(X_a)=\E_{Y^*}[g(X_a,Y^*)],\qquad |\gamma(X_a)|\le1.
\end{equation}
Here $Y^*=(Y_v^*)_{v\ \mathrm{internal}}$ has mutually independent node images, independent of the leaf vectors, with $Y_v^*$ uniform in $\{-R_i,\ldots,R_i\}^{n_i}\bmod q$ when $\operatorname{ht}(v)=i$. Thus $\gamma$ averages over these ideal images while holding its leaf-vector argument fixed. Averaging next over the sampled leaf vectors gives $\E[\gamma(X_a)]=\Gamma_g$. To compare the means, subtract $p_0\gamma(X_a)$ from $\E[\xi_a\mid X_a]$ before averaging over the leaf vectors. The difference vanishes when $\nu=0$ and has absolute value at most $p_0(N^{r\nu}+1)$ when $\nu>0$. Using \eqref{eq:single-witness-domination} and \Cref{lem:leaf-nullity}, we obtain the magnitude bound and the mean comparison
\begin{equation}
\label{eq:single-mean-transfer}
 \begin{aligned}
 |\E[\xi_a]|&\le\E[C_a]\le p_0\E[N^{r\nu}]\le p_0(1+\varepsilon),\\
 |\E[\xi_a]-p_0\Gamma_g|
 &\le p_0\bigl(\E[N^{r\nu}\mathbf1_{\{\nu>0\}}]+\Prb[\nu>0]\bigr)\le3p_0\varepsilon.
 \end{aligned}
\end{equation}
Summing over the $N^k$ leaf selections gives the mean assertion of \Cref{thm:reusable-tree-moments}, with $Z_g,\Gamma_g$ in place of $Z,\Gamma$. A leaf selection restricted to a subtree rooted at level $j$ has $2^j$ leaf vectors and $2^j-1$ internal nodes. Its conditional survival probability is at most $N^{1-2^j}N^{j\nu}\le N^{1-2^j}N^{r\nu}$. Summing over its $N^{2^j}$ possible leaf selections gives
\begin{equation}
\label{eq:upper-subtree-list-mean}
 \E[|\mathcal L_v|]\le(1+\varepsilon)N.
\end{equation}
Leaf lists have size exactly $N$, so this proves the list-size bound at every node.

\emph{Covariance of disjoint leaf selections.} Suppose $a,a'$ share no leaf entries. Their two groups of selected leaf vectors are statistically independent, although their contributions use the same matrix. Define their conditional means and joint moment by
\begin{equation}
\label{eq:conditional-covariance-terms}
 \begin{gathered}
 m_a(X_a)=\E[\xi_a\mid X_a],\qquad m_{a'}(X_{a'})=\E[\xi_{a'}\mid X_{a'}],\\
 \mathcal J_{a,a'}(X_a,X_{a'})=\E[\xi_a\overline{\xi_{a'}}\mid X_a,X_{a'}].
 \end{gathered}
\end{equation}
Regard $X_a$ and $X_{a'}$ as matrices with the selected leaf vectors as columns, and let $\nu=2k-\operatorname{rank}[X_a\ X_{a'}]$ be the rank deficiency of all $2k$ vectors from the two selections. For fixed leaf vectors with $\nu=0$, the node images from both selections are independent at each level by \Cref{prop:independent-filtered-images}; submatrices at distinct levels and the two selections' families of subsampling variables are also independent. Their contributions are therefore conditionally independent, giving
\begin{equation}
\label{eq:disjoint-pointwise-factorization}
 \mathcal J_{a,a'}(X_a,X_{a'})=m_a(X_a)\overline{m_{a'}(X_{a'})}\qquad(\nu=0).
\end{equation}
Each conditional mean is a function only of its own selected leaf vectors. Since disjoint selections use independently sampled leaf entries, $X_a$ and $X_{a'}$ are independent. Averaging over these leaf vectors therefore gives
\begin{equation}
\label{eq:unconditioned-leaf-product}
 \begin{aligned}
 \E[m_a(X_a)\overline{m_{a'}(X_{a'})}]
 &=\E[m_a(X_a)]\,\overline{\E[m_{a'}(X_{a'})]}\\
 &=\E[\xi_a]\,\overline{\E[\xi_{a'}]}.
 \end{aligned}
\end{equation}
Write $\Cov(\xi,\xi')=\E[\xi\overline{\xi'}]-\E[\xi]\,\overline{\E[\xi']}$ for the covariance. Averaging \eqref{eq:conditional-covariance-terms} gives $\E[\mathcal J_{a,a'}]=\E[\xi_a\overline{\xi_{a'}}]$. Subtracting \eqref{eq:unconditioned-leaf-product} and using \eqref{eq:disjoint-pointwise-factorization} then gives
\begin{equation}
\label{eq:disjoint-covariance-localization}
 \begin{aligned}
 \Cov(\xi_a,\xi_{a'})
 &=\E[\mathcal J_{a,a'}-m_a\overline{m_{a'}}]\\
 &=\E\bigl[(\mathcal J_{a,a'}-m_a\overline{m_{a'}})\mathbf1_{\{\nu>0\}}\bigr].
 \end{aligned}
\end{equation}
We now bound the two terms on the event $\nu>0$. By \eqref{eq:conditional-pair-leaf-bound} with $S=\varnothing$, $|\mathcal J_{a,a'}|\le p_0^2N^{r\nu}$. To bound the product of conditional means, apply the single-selection bound to each selection separately. Let $\nu_1=k-\operatorname{rank}X_a$ and $\nu_2=k-\operatorname{rank}X_{a'}$ be their individual rank deficiencies. By \eqref{eq:single-witness-domination}, $|m_a\overline{m_{a'}}|\le p_0^2N^{r(\nu_1+\nu_2)}$. We compare this weight with the pair's weight using rank subadditivity:
\begin{equation}
\label{eq:disjoint-nullity-sum}
 \nu_1+\nu_2=2k-\operatorname{rank}X_a-\operatorname{rank}X_{a'}
 \le2k-\operatorname{rank}[X_a\ X_{a'}]=\nu.
\end{equation}
Consequently, $|m_a\overline{m_{a'}}|\le p_0^2N^{r\nu}$ as well. Applying \Cref{lem:leaf-nullity} to the two terms in \eqref{eq:disjoint-covariance-localization} yields
\begin{equation}
\label{eq:disjoint-covariance-bound}
 \lvert\Cov(\xi_a,\xi_{a'})\rvert
 \le2p_0^2\E[N^{r\nu}\mathbf1_{\{\nu>0\}}]\le4p_0^2\varepsilon.
\end{equation}
There are at most $N^{2k}$ disjoint ordered pairs, so their total absolute covariance is at most $4N^2\varepsilon$.

\emph{Covariance of selections sharing leaf entries.} Here we bound both terms in the covariance separately. For the joint moment, $|\xi_a\overline{\xi_{a'}}|\le C_aC_{a'}$, so \Cref{lem:shared-forest,prop:shared-forest-weight} give
\begin{equation}
\label{eq:shared-joint-moment-sum}
 \sum_{\substack{a,a'\in[N]^k\\\mathsf{Sh}(a,a')\ne\varnothing}}|\E[\xi_a\overline{\xi_{a'}}]|
 \le(1+\varepsilon)N^2(F_r-1)\le8kN.
\end{equation}
For the product of the individual means, exactly a fraction $1-(1-1/N)^k\le k/N$ of ordered pairs share at least one leaf entry. Using \eqref{eq:single-mean-transfer} for each individual mean gives
\begin{equation}
\label{eq:shared-marginal-product-sum}
 \sum_{\substack{a,a'\in[N]^k\\\mathsf{Sh}(a,a')\ne\varnothing}}|\E[\xi_a]\,\overline{\E[\xi_{a'}]}|
 \le(1+\varepsilon)^2kN\le4kN.
\end{equation}
Expanding the variance as the sum of all ordered covariances and bounding each by its absolute value, the disjoint total of \eqref{eq:disjoint-covariance-bound} together with \eqref{eq:shared-joint-moment-sum} and \eqref{eq:shared-marginal-product-sum} now gives
\begin{equation}
\label{eq:variance-constant-accounting}
 \Var[Z_g]\le4N^2\varepsilon+8kN+4kN
 \le4N+12kN\le16kN.
\end{equation}
The last inequalities use $\varepsilon\le1/N\le1$ and $2k\le N$.

\emph{The uniform case.} Fix the matrix, leaf vectors, and entry randomness, so that $C_a$ and $z_{\mathrm{root}}(a)$ are fixed for every leaf selection. Averaging only over the independent uniform vector $b$ gives
\begin{equation}
\label{eq:null-character-orthogonality}
 \begin{aligned}
 \E_b[|Z(b)|^2]
 &=\sum_{a,a'\in[N]^k}C_aC_{a'}\E_b[\omega^{\langle z_{\mathrm{root}}(a)-z_{\mathrm{root}}(a'),b\rangle}]\\
 &=\sum_{a,a'\in[N]^k}C_aC_{a'}\mathbf1_{\{z_{\mathrm{root}}(a)=z_{\mathrm{root}}(a')\}}.
 \end{aligned}
\end{equation}
If $z_{\mathrm{root}}(a)=z_{\mathrm{root}}(a')$, the Fourier term equals one. Otherwise, some coordinate of $z_{\mathrm{root}}(a)-z_{\mathrm{root}}(a')$ is nonzero, and the corresponding uniform coordinate of $b$ makes the exponent uniform in $\Fq$. Its average is therefore $(1/q)\sum_{t=0}^{q-1}\omega^t=0$. For pairs sharing leaf entries, discard the equality indicator and use \Cref{lem:shared-forest,prop:shared-forest-weight}. Their contribution is at most $(1+\varepsilon)N^2(F_r-1)\le8kN$, including the case $a=a'$. For disjoint leaf selections, a collision between root vectors implies
\begin{equation}
\label{eq:disjoint-root-relation}
 [X_a\ X_{a'}](\underbrace{1,\ldots,1}_{k},\underbrace{-1,\ldots,-1}_{k})^\top=0.
\end{equation}
The displayed coefficient vector is nonzero on distinct indexed columns, so a collision forces $\nu>0$. The conditional pair bound \eqref{eq:conditional-pair-leaf-bound} and \Cref{lem:leaf-nullity} give at most $2p_0^2\varepsilon$ per disjoint pair, hence at most $2N^2\varepsilon\le2N$ in total. Thus $\E_{\mathrm{unif}}[|Z(b)|^2]\le8kN+2N\le10kN$. Equal values at different leaf-entry indices are included in this linear-dependency bound.
\end{proof}

\subsection{Applying the moments to decision}
\label{sec:decision-moment-application}

\paragraph{The Fourier character for LWE.} For a secret-error pair fixed independently of the matrix, the function $g$ in \eqref{eq:fixed-lwe-character-function} satisfies $Z_g=Z(b)$ and $\Gamma_g=\Gamma$, whose product formula is given in \eqref{eq:ideal-character-product}. Taking this contribution function gives the LWE mean and variance in \Cref{thm:reusable-tree-moments}; the uniform-case bound applies directly to \eqref{eq:tree-statistic}. We now use these moments to prove the decision bounds stated in Section~\ref{sec:decision-correctness}.

\begin{proof}[Proof of \Cref{lem:decision-accurate-estimate}]
For LWE, on the event $|\widetilde Z-Z|\le T/4$, returning \textsc{uniform} implies $|Z|<5T/4$. Since $|\E[Z]|\ge N|\Gamma|-3$, the threshold condition forces $|Z-\E[Z]|\ge N|\Gamma|/2$. Chebyshev's inequality bounds the probability of this deviation by $64k/(N|\Gamma|^2)$. A union bound with the accuracy-failure event, whose probability is at most $\eta$, gives \eqref{eq:decision-alternative-error}. \pnote{rejection and acceptance are not terms used in the algorithm}\cans{Replaced acceptance and rejection by the algorithm's outputs \textnormal{\textsc{LWE}} and \textnormal{\textsc{uniform}}, including the equality case in the resource theorem.}

For the uniform case, on the event $|\widetilde Z-Z|\le T/4$, returning \textsc{LWE} implies $|Z|\ge3T/4$. Markov's inequality applied to $|Z|^2$ bounds this event by $160kN/(9T^2)$. A union bound with the accuracy-failure event gives \eqref{eq:decision-null-error}.
\end{proof}

\begin{proof}[Proof of \Cref{prop:bounded-input-decision}]
Work with the first $d=\min\{m,n\}$ samples. For sufficiently large $n$, $Cn/\ln\ln q\le d\le n$, so \Cref{lem:schedule-properties} verifies the parameter checks and hypotheses of \Cref{thm:reusable-tree-moments}. For each fixed secret-error pair, that theorem gives the LWE mean and variance in \eqref{eq:reusable-moments} and the uniform-case bound in \eqref{eq:reusable-null-moment}.

\emph{Separating the scales.} The parameter choice gives $|\Gamma|=N^{-o(1)}$ and $k=N^{o(1)}$ throughout the stated parameter range. With the threshold in \eqref{eq:decision-threshold},
\begin{equation}
\label{eq:decision-threshold-separation}
 N^{3/4}\le T<2N^{3/4},\qquad
 3+5T/4\le N|\Gamma|/2
\end{equation}
for all sufficiently large $n$. \Cref{cor:decision-accuracy} gives the accuracy bound $\Prb[|\widetilde Z-Z|>T/4]\le\eta=2n^{-4}$ for the computed statistic $\widetilde Z$, both under LWE with every fixed secret-error pair and under uniform input. Applying \Cref{lem:decision-accurate-estimate}, the remaining statistical terms are $N^{-1+o(1)}$ on LWE and at most $N^{-1/2+o(1)}$ in the uniform case. Hence both errors satisfy
\begin{equation}
\label{eq:decision-uniform-rate}
 \delta_{\mathrm{dec}}=2n^{-4}+N^{-1/2+o(1)}\le3n^{-4}
\end{equation}
for sufficiently large $n$, since $L=\Theta(n/\ln\ln q)$. The same decision bounds hold throughout the stated parameter ranges because the moment, signal, and accuracy bounds do, for every fixed secret-error pair satisfying the stated bounds.
\end{proof}

\section{Proof of the time and space bounds}
\label{sec:paper-resources}

We prove \Cref{thm:bit-resources} for the algorithm of Section~\ref{sec:paper-decision-tree}: expected bit time and worst-case bit space $N^{1+o(1)}$, storage-overflow probability at most $\tfrac32n^{-4}$, and numerical error at most $T/4$ when there is no overflow. The computation has three parts: bucket queries enumerate compatible pairs, exact sampling and the storage cap control the tree, and numerical evaluation approximates the decision statistic. We then prove \Cref{prop:reduction-resources} by adding the time costs of all calls and reusing their workspace. Throughout this section, assume the theorem's parameter conditions hold.

\subsection{Enumerating compatible pairs efficiently}
\label{sec:bucket-implementation}

\pnote{These paragraphs are hard to follow because it requires the reader to remember and keep in mind a lot of terminology, symbols, constraints and relationships. See what you can do to make it better.}\cans{Recalled the child vectors, compatibility test, images, and cell width locally; stated the three bucket guarantees and shortened the traversal setup.}

At a level-$i$ merge, the two child lists contain coefficient vectors $x,y\in\Fq^d$, and $A_i$ has $n_i$ rows. A pair is compatible when every coordinate of $A_i(x+y)$ has centered magnitude at most $R_i$, as in \eqref{eq:tree-compatible-pair}. Write $u_{\mathrm L}=A_i x$, $u_{\mathrm R}=A_i y$, and recall that $W_i=2R_i+1$ is the cell width. We group right entries into the buckets of \eqref{eq:resource-buckets} and query the cells meeting $-(u_{\mathrm L})_j+\{-R_i,\ldots,R_i\}\bmod q$ in each coordinate. We prove that at most $3^{n_i}$ buckets are queried per left entry, every compatible pair is found exactly once, and the image $A_i(x+y)$ of every returned pair lies in the enlarged radius-$3R_i$ box. The last fact bounds the work spent on pairs that fail the exact filter.

Fill initial lists from left to right in sampling order, then process levels upward and merges from left to right. Number stored entries consecutively as they are inserted, including equal vectors as separate entries. We use this order later to compare capped and uncapped runs with the same randomness.

At each merge, compute and store the image of each right entry. Sort the right entries by their tuples of cell indices (their \emph{bucket keys}) and then by entry number. Each bucket occupies a consecutive range in the sorted array. For a queried key, two binary searches locate the first array position with a key at least the query and the first position with a larger key; equal positions mean the bucket is empty. Visit left entries in increasing order of entry number, computing each left image once. For each left entry, visit queried keys in lexicographic order and process the right entries in each returned range in sorted order. Generate the query tuples one at a time, and append each sum retained by subsampling to the parent list when its pair is visited.

A nonwrapping interval of $W_i$ consecutive residues meets at most two cells. For a wrapping interval, the part beginning at zero lies in the first cell; the part ending at $q-1$ meets at most the last two cells, because the final cell may be short. Thus there are at most three distinct cells per coordinate and at most $3^{n_i}$ product queries. To obtain the cell indices, split the interval at zero if it wraps. For each resulting nonwrapping segment $[a,b]$, enumerate all integer indices from $\lfloor a/W_i\rfloor$ through $\lfloor b/W_i\rfloor$, inclusive. If these quotients agree, the segment meets just one cell, and its index is included once. The two segments of a wrapping interval meet disjoint cells under $3W_i\le q$.

Every right entry compatible with $x$ lies in a queried product cell. Conversely, if $y$ is returned, in each coordinate choose a residue in the intersection of its cell with the query interval. The chosen residue's distance within that cell from $(u_{\mathrm R})_j$ is at most $W_i-1=2R_i$, while its sum with $(u_{\mathrm L})_j$ has centered magnitude at most $R_i$. Therefore $A_i(x+y)$ lies in the enlarged radius-$3R_i$ box. The width condition $3W_i\le q$ gives $3R_i<q/2$ and $6R_i+1=3W_i-2<q$, so this box has exactly $(6R_i+1)^{n_i}$ distinct residues, with no ambiguity from wrapping.

Each right entry has one bucket key, each distinct key is queried once for a fixed left entry, and each left entry is visited once. Thus every returned pair is tested exactly once, including every compatible pair. Equal vectors have distinct entries and insertion numbers. Induction on the tree shows that each possible entry is produced exactly when its required filters pass and its subsampling variables equal one, as in \eqref{eq:tree-survival}. The specified orders determine one traversal; the observed child images determine the queried keys and exact-test outcomes.

\paragraph{Expected bit time.} For this bound, consider the tree without imposing the storage cap. Let $v$ be a level-$i$ node with children $v_0,v_1$. Condition on all leaf vectors, submatrices below level $i$, and entry randomness below level $i$; call this the lower data. The child lists are then fixed and $A_i$ is still uniform. This conditioning does not include earlier queries at level $i$. By \eqref{eq:tree-anchor}, every child-pair sum has nonzero first coordinate $2^i\bmod q$, so its image under $A_i$ is uniform. It passes its exact filter and is kept by its independent subsampling variable with probability $\rho_i p_i=1/N$ by \eqref{eq:tree-balancing}. Linearity, followed by averaging over the lower data and applying \eqref{eq:upper-subtree-list-mean}, gives
\begin{equation}
\label{eq:resource-sibling-identity}
 \begin{aligned}
 \E[|\mathcal L_v|\mid\text{lower data}]&=|\mathcal L_{v_0}|\,|\mathcal L_{v_1}|/N, \qquad
 \E[|\mathcal L_{v_0}|\,|\mathcal L_{v_1}|]&=N\E[|\mathcal L_v|]\le(1+\varepsilon)N^2.
 \end{aligned}
\end{equation}
This identity bounds the expected product of the sibling list sizes despite their dependence. Write $\mathsf{Ret}_v$ for the number of candidate pairs returned by the bucket queries at $v$, before the exact compatibility check and subsampling. The enlarged box bounds each pair's conditional return probability by $((6R_i+1)/q)^{n_i}\le3^{n_i}\rho_i$, so
\begin{equation}
\label{eq:resource-returned-pairs}
 \E[\mathsf{Ret}_v]\le3^{n_i}\rho_i\E[|\mathcal L_{v_0}|\,|\mathcal L_{v_1}|]
 \le(1+\varepsilon)3^{n_i}(N\rho_i)N.
\end{equation}

The initial pass over the right list and the visit to each left entry compute one image per child entry; by \eqref{eq:upper-subtree-list-mean}, the expected number of child entries processed this way is $O(N)$ per merge. Each left entry generates at most $3^{n_i}$ bucket queries, so the expected number of queries is at most $(1+\varepsilon)3^{n_i}N$. For $J_{\mathrm R}=|\mathcal L_{v_1}|$ right entries, comparison sorting uses $O(J_{\mathrm R}\log(J_{\mathrm R}+2))$ comparisons, and each bucket lookup uses $O(\log(J_{\mathrm R}+2))$ comparisons for its two binary searches. Every tree list contains at most $N^k$ entries, so $\log(J_{\mathrm R}+2)=O(kL)$ is polynomial in $n$. Sorting therefore uses $O(NkL)$ expected comparisons per merge.

Computing an image $A_i z$ uses $O(n_i d)$ field operations per child entry. For each returned pair, add its stored images $u_{\mathrm L}+u_{\mathrm R}$ and check whether every coordinate has centered magnitude at most $R_i$, as required by \eqref{eq:tree-compatible-pair}. If the pair passes this check and is retained by subsampling, form its vector sum $x+y$ and store the sum with references to its two child entries. Computing cell indices and query-interval endpoints uses integer division and modular arithmetic. Comparing bucket keys, advancing through array positions, and reading or storing child references also take polynomially many bit operations per entry, query, or returned pair.

All these operands have polynomial bit length. Field elements use $O(\log q)$ bits, and a bucket key uses $O(n_i\log q)$ bits. Use the insertion numbers assigned above as entry references. There are $2k-1$ tree nodes and at most $N^k$ entries per list, so entry numbers use $O(kL+\log k)$ bits even without the cap; array positions satisfy the same bound. By \eqref{eq:resource-conditions}, $3^{n_i},N\rho_i=N^{o(1)}$, and $k<q$ is polynomial in $n$, so the entry, query, and returned-pair bounds above give $N^{1+o(1)}$ expected bit time over all nodes, apart from sampling, which we account for next.

\subsection{Exact sampling and the total cap}
\label{sec:resource-sampling}

Exact balancing requires Bernoulli subsampling with exact rational success probabilities. To sample uniformly from $\{0,\ldots,D-1\}$ for an integer $D\ge1$, draw $\lceil\log_2D\rceil$ fresh unbiased bits and interpret them as an integer. If it is at least $D$, discard the trial and repeat; for $D=1$, return zero. Denote this sampler by \textsc{Uniform}$(D)$. It supplies the subsampling variables, leaf vectors, and perturbation vectors used in search. Each rejected trial overwrites the same workspace, so further trials increase the running time without using more space.

Taking $D=Q=2B+1$ and subtracting $B$ gives the independent leaf coordinates $X_2,\ldots,X_d$ in \eqref{eq:tree-leaf-distribution}; taking $D=q$ independently for each of $d$ coordinates gives the uniform perturbation vector $w\in\Fq^d$ from Section~\ref{sec:paper-reductions}. A Bernoulli subsampling variable with success probability $a/D$ equals one precisely when \textsc{Uniform}$(D)<a$; it is always zero when $a=0$ and always one when $a=D$. In particular, \eqref{eq:tree-balancing} gives $a=q^{n_i}$ and $D=NW_i^{n_i}$, without rounding. Sampling the leaf entries costs $O(kNd\log q)$ expected bit time. Generating each requested Bernoulli sample costs polynomial expected bit time and worst-case workspace by the assumed polynomial encoding lengths of its numerator and denominator.

% Keep this short statement and proof together, and the link with their first line.
\begin{samepage}
The following lemma gives the conditional guarantee needed for each request.

\begin{lemma}[Exact sampling with bounded workspace]
\leavevmode\phantomsection\label{lem:resource-exact-sampling}The sampler \textup{\textsc{Uniform}}$(D)$ defined above is exactly uniform on $\{0,\ldots,D-1\}$ and terminates with probability $1$. For $D\ge2$, its expected number of trials is less than two, its expected bit time is $O(\log D)$, and its worst-case workspace is $O(\log D)$; the probability of more than $j$ trials is at most $2^{-j}$. For $D=1$ its output is always zero. These guarantees also hold conditional on the preceding computation, since each request uses fresh bits.
\end{lemma}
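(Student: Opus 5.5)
We argue directly from the rejection-sampling description of \textsc{Uniform}$(D)$. The case $D=1$ is immediate, since the sampler returns zero without drawing any bits. For $D\ge2$, put $\ell=\lceil\log_2D\rceil\ge1$, so $2^{\ell-1}<D\le2^\ell$. Each trial draws $\ell$ fresh unbiased bits, which we read as an integer $U$ uniform on $\{0,\ldots,2^\ell-1\}$. The trial is accepted exactly when $U<D$, and this happens with probability $\alpha=D/2^\ell>1/2$. Because each trial uses fresh bits, the trials are independent. They are also independent of everything computed earlier, and this is what lets every guarantee hold conditionally on the preceding computation.

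\emph{Exactness and termination.} Condition on the event that trial $j$ is the first accepted one. On this event the output is $U_j$ conditioned on $U_j<D$. Conditioning a uniform variable on a subset gives the uniform distribution on that subset, so the output is uniform on $\{0,\ldots,D-1\}$. Averaging over $j$ keeps it uniform. The probability that the first $j$ trials are all rejected is $(1-\alpha)^j<2^{-j}$. This tends to zero, so the sampler terminates with probability $1$, and it also gives the tail bound that more than $j$ trials occur with probability at most $2^{-j}$.

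\emph{Expected time and space.} The number of trials is geometric with mean $1/\alpha=2^\ell/D<2$. A single trial uses $O(\ell)=O(\log D)$ bit operations: it draws $\ell$ bits, assembles them into an integer, and compares that integer with $D$. Since the number of trials and the work per trial are independent, the expected bit time is at most $2\cdot O(\log D)=O(\log D)$. Each rejected trial overwrites the same $\ell$-bit register, and the only other storage is $D$ itself and a constant amount of control state. The worst-case workspace is therefore $O(\log D)$, no matter how many trials occur.

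There is no real obstacle here. The one point that needs care is the conditional claim: the fresh-bit assumption makes the whole trial sequence independent of the prior history, so each bound above holds conditionally on that history as well.
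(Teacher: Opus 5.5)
Your proof is correct and follows essentially the same rejection-sampling argument as the paper: each trial accepts with probability $D/2^{\ell}>1/2$, the accepted value is uniform, the number of trials has the geometric tail $2^{-j}$, each trial costs $O(\log D)$, and a single overwritten register means no trial history is stored. One small precision point: the per-trial control state must include a counter of the bits drawn so far in the current trial, which the paper lists explicitly. That counter takes $O(\log\log D)$ bits rather than a constant amount, but it still fits within the $O(\log D)$ workspace bound.
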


\begin{proof}
Put $\ell_D=\lceil\log_2D\rceil$. Each trial accepts with probability $D/2^{\ell_D}>1/2$. The output is uniform because all acceptable values are equally likely at every trial. Independent retries give probability at most $2^{-j}$ of needing more than $j$ trials, hence termination with probability $1$ and fewer than two trials on average. A trial takes $O(\ell_D)$ bit time. Store only $D$, the current trial value, and the number of bits drawn in that trial; rejected trials overwrite this workspace, and no counter of past trials or old random bits is stored. Fresh unbiased bits allow the same argument conditional on the preceding computation.
\end{proof}
\end{samepage}

\paragraph{Subsampling requests.} Each possible internal entry has a unique producing child pair, visited at most once by the traversal in Section~\ref{sec:bucket-implementation}. Generate its Bernoulli variable from fresh bits when this pair passes the compatibility check. This realizes the independent entry randomness of \eqref{eq:tree-survival} as needed. By \Cref{lem:resource-exact-sampling}, each request has polynomial expected bit time conditional on the preceding computation. Summing these conditional bounds gives total expected subsampling time at most a polynomial factor times $\sum_v\E[\mathsf{Ret}_v]$, even though earlier outcomes determine which requests are made. Here $v$ ranges over the internal nodes, each bucket candidate causes at most one request, and \eqref{eq:resource-returned-pairs} bounds each summand.

\paragraph{Comparing executions at the total cap.} Compare the algorithm with and without the storage cap, using the same random bits for all sampling and the processing orders from Section~\ref{sec:bucket-implementation}. The capped and uncapped computations agree until a retained pair would require storing entry $K+1$. All $kN\le K$ leaf entries count toward the cap, and every stored internal entry remains until the algorithm ends. If $V$ is the total number of entries produced without the cap, \eqref{eq:upper-subtree-list-mean} and the choice $K=2kn^4N$ in \eqref{eq:decision-integers} give
\begin{equation}
\label{eq:resource-cap-bound}
 \E[V]\le kN+(k-1)(1+\varepsilon)N\le3kN,\qquad
 \Prb[V>K]\le\frac{3kN}{K}=\tfrac32n^{-4}.
\end{equation}
Overflow occurs exactly when $V>K$, except on the probability-zero event of a nonterminating sampler. At $V=K$ the algorithm completes normally. On overflow, the algorithm returns \textsc{uniform} and its capped root list is defined to be empty. Otherwise every root entry is stored, counting entries with equal vector values separately. Until it stops, the capped algorithm follows the same traversal and makes the same sampling requests as the algorithm without the cap. Stopping early can only reduce these costs, so the expected construction and sampling time bounds above also apply to the capped algorithm.

\paragraph{Temporary space and termination.} Each stored entry contains its vector in $\Fq^d$, its node, and, for an internal entry, two child references. These references are the insertion numbers from Section~\ref{sec:bucket-implementation}, so each entry uses $O(d\log q+\log K+\log k)$ bits. The selected leaf entries can be recovered by following these references. There are at most $K$ stored entries.

During one merge, the sorted right array and an auxiliary comparison-sort buffer use $O(K)$ temporary items. Each item stores a right entry's bucket key, image, and entry number, all of polynomial bit length. To enumerate query tuples, keep the at most three cell indices per coordinate and one counter per coordinate recording the current choice; generate tuples one at a time in lexicographic order. The two array positions found by binary search specify the start of each returned bucket and the position just after its last entry. Process the pairs in this range as they are visited, and release the sort buffers after the merge. All counters and the active sampling workspace use polynomial bit space. Thus the algorithm uses worst-case bit space $K\operatorname{poly}(n)=N^{1+o(1)}$ during list construction, regardless of how many sampling trials are rejected.

The family of possible entries is finite, and each merge queries finitely many cells and pairs. All requested samplers terminate with probability $1$; the subsequent numerical evaluation also uses finitely many operations, so the algorithm terminates with probability $1$.

\subsection{Numerical evaluation of the Fourier statistic}
\label{sec:resource-evaluation}

Processing the root entries requires only integer arithmetic: count how many entries produce each exponent modulo $q$. Only $q$ distinct roots of unity then require approximation. An error bound for each root of unity gives an absolute error bound for the whole sum.

\begin{lemma}[Approximating roots of unity]
\label{lem:resource-certified-roots}
For $q\ge3$, $0\le t<q$, and $P\ge1$, exact rational arithmetic computes a complex number $\widetilde\omega_t\in2^{-(P+3)}(\Z+\mathrm{i}\Z)$ with a rational certificate of $|\widetilde\omega_t-e^{2\pi\mathrm{i}t/q}|\le2^{-P}$. The bit time and every intermediate bit length are polynomial in $P+\log q$.
\end{lemma}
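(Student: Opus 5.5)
We will compute $\widetilde\omega_t$ by evaluating truncated Taylor series for $\cos\theta$ and $\sin\theta$ at a rational approximation of $\theta=2\pi t/q$, using only rationals. The argument has four steps.

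\emph{Step 1: reduce the angle.} Use the symmetries $e^{2\pi\mathrm{i}t/q}=\overline{e^{2\pi\mathrm{i}(q-t)/q}}$ and multiplication by powers of $\mathrm{i}$. Replacing $t$ by $q-t$ or by $t\pm\lfloor q/4\rfloor$-type shifts is awkward for general $q$. So we will not shift at all: we keep $\theta\in[0,2\pi)$ and accept more Taylor terms. This is harmless because $|\theta|\le 7$.

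\emph{Step 2: approximate $\pi$.} We need a rational $\widetilde\pi$ with a certified error bound $|\widetilde\pi-\pi|\le 2^{-(P+6)}/8$. We will use Machin's formula $\pi=16\arctan(1/5)-4\arctan(1/239)$ with alternating series. For an alternating series with decreasing terms, the truncation error is bounded by the first omitted term, which is an explicit rational. About $O(P)$ terms suffice, each with numerator and denominator of $O(P)$ bits. Set $\widetilde\theta=2\widetilde\pi t/q$. Then $|\widetilde\theta-\theta|\le 2|\widetilde\pi-\pi|$, and since $\cos$ and $\sin$ are $1$-Lipschitz, this contributes error at most $2^{-(P+4)}$ to each of the real and imaginary parts.

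\emph{Step 3: truncate the Taylor series.} Sum $\cos\widetilde\theta$ and $\sin\widetilde\theta$ up to degree $M$. The Lagrange remainder is at most $7^{M+1}/(M+1)!$, a rational we can compute exactly and compare with $2^{-(P+4)}$. We choose the least $M$ for which this comparison holds; $M=O(P+\log q)$ suffices by Stirling. Every partial sum is a rational whose bit length is polynomial in $P+\log q$: the powers $\widetilde\theta^j$ have $O(M(P+\log q))$ bits, and the factorials have $O(M\log M)$ bits.

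\emph{Step 4: round to the grid.} Round each of the real and imaginary parts to the nearest point of $2^{-(P+3)}\Z$ by exact floor computation. The rounding error is at most $2^{-(P+4)}$ per part. Summing the three per-part errors gives at most $3\cdot2^{-(P+4)}<2^{-(P+2)}$ per part, hence at most $2^{-P}$ in modulus. The certificate is the collection of these explicit rational bounds: the first omitted Machin terms, the remainder bound $7^{M+1}/(M+1)!$, and the rounding residuals, combined by the triangle inequality.

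The main obstacle is keeping the certificate entirely rational, since $\theta$ and $\pi$ are irrational. Step 2 handles this by bounding the angle error explicitly and then applying the Lipschitz property of $\cos$ and $\sin$, rather than evaluating at the exact angle. The remaining work is routine bit-length bookkeeping. With exact rational arithmetic throughout, the total bit time is polynomial in $P+\log q$.
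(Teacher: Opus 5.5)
Your proposal is correct and follows essentially the same route as the paper. Both use a truncated alternating arctangent series for $\pi$ with first-omitted-term error bounds, the $1$-Lipschitz property of $\cos$ and $\sin$ to absorb the angle error, truncated Taylor polynomials with an explicit rational remainder bound, and exact rounding to the grid $2^{-(P+3)}\Z$, all combined by the triangle inequality. The differences are only cosmetic: the paper uses $\pi=4(\arctan(1/2)+\arctan(1/3))$ and first reduces $t$ to $|\bar t|\le q/2$ so that the angle has modulus below $4$, whereas you use Machin's formula and the cruder radius $7$. Your radius bound is stated for the exact angle $\theta$, but it also holds for the computed $\widetilde\theta$ where the remainder is actually applied, since $\widetilde\pi<7/2$.
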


\begin{proof}
We use Taylor series with explicit remainder bounds~\cite[Section~4.4]{BrentZimmermann10}. Set $\bar t=t$ if $t\le q/2$ and $\bar t=t-q$ otherwise, so $|\bar t|\le q/2$, and put $\eta=2^{-P-6}$. To approximate $\pi$, use the identity $\pi=4(\arctan(1/2)+\arctan(1/3))$ and truncate the alternating series for the two arctangents; their errors are bounded by the first omitted terms. This gives a rational interval of width at most $\eta$ containing $\pi$ and contained in $(3,4)$; let $\widehat\pi$ be its midpoint.

The rational angle $\widehat\theta=2\bar t\widehat\pi/q$ satisfies $|\widehat\theta|<4$ and $|\widehat\theta-2\pi\bar t/q|\le\eta/2$. The Taylor polynomials about zero for $\cos\widehat\theta$ and $\sin\widehat\theta$ through degree $J$ have error at most $4^{J+1}/(J+1)!$; choose $J$ making this bound at most $\eta$. Since changing a real angle by $\varepsilon$ changes its sine or cosine by at most $\varepsilon$, these approximations differ from the coordinates of $e^{2\pi\mathrm{i}t/q}$ by at most $3\eta/2$.

Round each rational approximation to the nearest multiple of $2^{-(P+3)}$, resolving ties by a fixed rule. Each coordinate error is at most
\begin{equation}
\label{eq:resource-root-coordinate-error}
 \tfrac32\eta+2^{-P-4}<2^{-P-3},
\end{equation}
so the complex error is less than $\sqrt2\,2^{-P-3}<2^{-P}$. The rational intervals and exact rounding errors certify this bound. Both arctangent series have geometrically decreasing error bounds, and the Taylor remainder bound decreases by at least a factor of two at each additional degree once $J\ge6$. Thus $O(P)$ terms suffice for all three approximations. Their exact fractions have polynomial bit length in $P+\log q$, so evaluation, certification, and rounding all have polynomial bit time.
\end{proof}

Write $\mathcal L_{\mathrm{root}}^{\mathrm{cap}}$ for the stored root list, defined to be empty on a storage abort, and put
\begin{equation}
\label{eq:resource-capped-statistic}
 Z^{\mathrm{cap}}(b)=\sum_{z\in\mathcal L_{\mathrm{root}}^{\mathrm{cap}}}\omega^{\langle z,b\rangle}.
\end{equation}
When list construction finishes without exceeding the storage limit, initialize the $q$ integer counters $c_0,\ldots,c_{q-1}$ to zero and process each root entry. Compute $t=\langle z,b\rangle\bmod q$ by exact modular arithmetic and increment $c_t$, giving
\begin{equation}
\label{eq:decision-histogram}
 c_t=\sum_{z\in\mathcal L_{\mathrm{root}}^{\mathrm{cap}}}\mathbf1_{\{\langle z,b\rangle=t\bmod q\}},\qquad t=0,\ldots,q-1.
\end{equation}
Consequently,
\begin{equation}
\label{eq:resource-histogram}
 Z^{\mathrm{cap}}(b)=\sum_{t=0}^{q-1}c_t\omega^t,\qquad
 0\le c_t\le K,\qquad \sum_{t=0}^{q-1}c_t\le K.
\end{equation}
Every root entry is counted separately. On abort use zero counters and $\widetilde Z=0$ without evaluation. Apply \Cref{lem:resource-certified-roots} one root of unity at a time and form $\widetilde Z=\sum_{t=0}^{q-1}c_t\widetilde\omega_t$ by multiplying and adding their integer numerators exactly at the common denominator $2^{P+3}$. The triangle inequality and the definition of $P$ give
\begin{equation}
\label{eq:resource-histogram-error}
 \left|\sum_t c_t\widetilde\omega_t-Z^{\mathrm{cap}}(b)\right|
 \le\sum_t c_t2^{-P}\le K2^{-P}\le T/4.
\end{equation}
This bound holds for every matrix, root list, and $b$.

The precision parameter $P$ in \eqref{eq:decision-integers} is the smallest integer $P\ge1$ satisfying $2^PT\ge4K$. Since $\log_2K=L+O(\log n)$ and $\log_2T=\lceil3L/4\rceil$, this choice gives $P=L/4+O(\log n)$. The $q$ counters use $q\lceil\log_2(K+1)\rceil$ bits. Since $q$ and $P$ are polynomial in $n$, the $q$ evaluations of $\omega^t$ use polynomial bit time and their reusable numerical workspace uses polynomial bit space. Write $\widetilde Z=(u+\mathrm{i}v)/2^{P+3}$, where the integer numerators $u$ and $v$ each have $O(P+\log K)$ bits. Perform the exact integer comparison
\begin{equation}
\label{eq:resource-exact-threshold}
 u^2+v^2\ge2^{2(P+3)}T^2.
\end{equation}
The histogram can be reset and its workspace reused between calls.

\begin{proof}[Proof of \Cref{thm:bit-resources}]
Enumeration of every compatible pair and fresh entry randomness give the analyzed tree process, with overflow bounded by \eqref{eq:resource-cap-bound}. Sampling and traversal have expected bit time $N^{1+o(1)}$. Reading the input and computing the parameters use polynomial bit time, since the number of original samples $m$ and the encoding lengths of the public bounds are polynomial in $n$. For every right-hand-side vector $b$, processing the root entries to form the histogram takes at most $K\operatorname{poly}(n)=N^{1+o(1)}$ worst-case bit time, and its subsequent numerical evaluation and threshold comparison take polynomial bit time. Adding their space to the stored entries and sort buffers preserves the worst-case bit-space bound $N^{1+o(1)}$. Equations \eqref{eq:resource-histogram-error} and \eqref{eq:resource-exact-threshold} give the numerical error bound for the capped statistic and the exact comparison. When list construction finishes without exceeding the storage limit, its root list equals the root list generated without the cap, so the same deterministic error bound gives $|\widetilde Z-Z(b)|\le K2^{-P}\le T/4$. The proof uses only the parameter checks in Section~\ref{sec:parameter-computation} and the assumptions of \Cref{thm:bit-resources}, including \eqref{eq:moment-finite-conditions} and \eqref{eq:resource-conditions}. The bounds therefore hold throughout any parameter range satisfying these assumptions uniformly.
\end{proof}

\subsection{Resources of the reductions}
\label{sec:reduction-resources}

The reductions use the same original samples across calls, so their runtimes can be dependent. Linearity of expectation bounds the total time without requiring independence. One tail bound then gives the runtime guarantee for the entire search computation.

\begin{proof}[Proof of \Cref{prop:reduction-resources}]
Computing the parameters and reading the input use polynomial bit time and space. For arbitrary secrets, Gaussian elimination over $\Fq$ tests the reserved samples' matrix $A_B$ for invertibility. If it is invertible, the resulting matrix factorization is reused to form \eqref{eq:reduction-normal-form} and, after bounded-secret search, to perform the final linear solve \eqref{eq:reduction-recovery}. All these operations take polynomial bit time. Each perturbation samples the $d$ coordinates of $w$ exactly using \Cref{lem:resource-exact-sampling}, then performs polynomially many bit operations. Sampling the perturbation vector $w$ takes polynomial expected bit time, and rejected trials reuse the same bounded workspace.

After successful preprocessing, decision makes one call to the bounded-secret decision algorithm, while search makes $M_{\mathrm{calls}}=n^{1+o(1)}$ such calls with identical algorithm parameters, by \eqref{eq:reduction-guess-count}. For bounded secrets, the input matrix and each perturbed matrix used in search are marginally uniform. For arbitrary secrets, the normal-form matrix and each subsequent perturbed matrix are uniform conditional on any fixed invertible $A_B$. The bijections in Appendix~\ref{app:reduction-proofs} establish both statements.

Each call constructs the coefficient-vector lists of \Cref{alg:bounded-input-decision} using its input matrix and fresh randomness; evaluation of the Fourier statistic has a worst-case time bound for every right-hand-side vector, by \Cref{thm:bit-resources}. We can therefore sum that theorem's one-call expected bounds, using the parameters of \Cref{lem:schedule-properties}, without conditioning on earlier answers. In the arbitrary-secret case, these bounds hold conditional on each fixed invertible $A_B$; if $A_B$ is singular, the algorithm stops after the polynomial-time invertibility test. If $\mathsf W_{\mathrm{total}}$ denotes all bit time, including preprocessing, perturbations, calls, and recovery, linearity gives
\begin{equation}
\label{eq:reduction-total-work}
 \E[\mathsf W_{\mathrm{total}}]
 \le\operatorname{poly}(n)+n^{1+o(1)}2^{(1/2+o(1))n/\ln\ln q}
 =2^{(1/2+o(1))n/\ln\ln q}.
\end{equation}
One Markov bound now yields
\begin{equation}
\label{eq:reduction-runtime-tail}
 \Prb[\mathsf W_{\mathrm{total}}>n\E[\mathsf W_{\mathrm{total}}]]\le1/n.
\end{equation}
The factor $n$ preserves the exponent because $\log n=o(n/\ln\ln q)$. We use this threshold only to bound total runtime; the algorithm continues to completion, so the runtime bound adds no correctness failure.

Execute the calls sequentially, releasing their stored entries and reusing the workspace for evaluating the Fourier statistic. Keep only the selected samples, the reserved samples' matrix factorization when needed, the current perturbed pair, and, for each coordinate, one accepted guess and an acceptance count capped at two. This adds polynomial space to one call's worst-case bound from \Cref{thm:bit-resources}. The sampler for perturbation coordinates also overwrites rejected trials without storing a trial counter. Every call terminates with probability $1$, so the finite sequence does too; arbitrarily many rejected sampling trials increase the running time but reuse the same workspace, preserving the worst-case space bound.
\end{proof}

\section{Parameter choices}
\label{sec:paper-parameters}

We now choose the leaf width $Q$, filter widths $W_i$, and submatrix row counts $n_i$ so that the moment and resource conditions hold and the bias $|\Gamma|$ is large enough for decision. Our targets from \Cref{lem:schedule-properties} are $L=(1/2+o(1))n/\ln\ln q$, $k=N^{o(1)}$, and $|\Gamma|=N^{-o(1)}$. The two requirements compete: wider leaf intervals make linear dependencies rarer, while wider filter intervals reduce the per-row filtering strengths, so each level can filter more rows; both increase the \emph{bias loss} $-\ln|\Gamma|$. We verify \eqref{eq:moment-finite-conditions} and \eqref{eq:resource-conditions} and compute the mean in the ideal model for one leaf selection; \Cref{thm:reusable-tree-moments} then bounds $|\E[Z]-N\Gamma|$ for the algorithm run without a storage limit.

\subsection{Computing the parameters}
\label{sec:parameter-computation}

\paragraph{Choosing the filter intervals.} We choose the filter radius $R_i$ on the scale $q/(f\sqrt{M_i})$, with scale factor $f$ fixed below and interval width $W_i=2R_i+1$. In the ideal model, each coordinate of the sum of the $M_i$ node images at level $i$ has variance $M_i(W_i^2-1)/12$ before reduction modulo $q$: the summands are independent and uniform on $[-R_i,R_i]\cap\Z$. This choice keeps the variance on the scale $q^2/f^2$ across levels and controls the secret's contribution to the bias loss $-\ln|\Gamma|$.

We round $q/(f\sqrt{M_i})$ down to obtain integer $R_i$ and balance the row counts $n_i$ to limit list growth before subsampling. \Cref{lem:schedule-properties} proves the bit-time, linear-dependency, and signal bounds for these integer choices.

\paragraph{Public inputs and parameter checks.} Let $n\ge1$, let $q$ be prime, and let $E_s,E_e\ge0$ be rational public bounds with polynomially many bits in $n$. The number of original samples $m$ is polynomial in $n$, so reading the input costs polynomially many bit operations in the modulus regimes below. The bounded-secret decision algorithm uses the first $d=\min\{m,n\}$ samples; normal form uses the first $d=\min\{m-n,n\}$ samples and reserves the final $n$ samples when $m>n$, and supplies $E_s=E_e=E$ by \Cref{prop:fixed-suffix-normal-form}. Each decision call therefore has $d\le n$. All parameter checks below pass for sufficiently large $n$ in the range of \Cref{lem:schedule-properties}.

For a positive integer $u$, let $\operatorname{bitlen}(u)=1+\lfloor\log_2u\rfloor$ denote its bit length. Define
\begin{equation}
\label{eq:schedule-scales}
 \begin{gathered}
 h=\operatorname{bitlen}(2n-1)=\lceil\log_2(2n)\rceil,\qquad f=32\overline E_s h,\\
 \overline E_s=\max\{1,\lceil E_s\rceil\},\qquad \overline E_e=\max\{1,\lceil E_e\rceil\}.
 \end{gathered}
\end{equation}
Check that $\min\{d,q\}\ge2h$. Choose the largest power of two $k\le\min\{d,q\}/h$, and put $r=\log_2k$. For levels $1\le i\le r$, set
\begin{equation}
\label{eq:schedule-intervals}
 \begin{gathered}
 B=\left\lfloor\frac{\sqrt q}{2\overline E_e}\right\rfloor,\qquad Q=2B+1,\qquad M_i=k/2^i,\\
 R_i=\left\lfloor\frac q{f\sqrt{M_i}}\right\rfloor,\qquad W_i=2R_i+1,
 \end{gathered}
\end{equation}
Check that $B\ge1$, $Q<q$, and $R_i\ge1$, $3W_i\le q$ at every level. The coefficient vector of each leaf entry is sampled independently as
\begin{equation}
\label{eq:schedule-leaves}
 (1,X_2,\ldots,X_d)\bmod q,\qquad X_2,\ldots,X_d\text{ independent and uniform on }[-B,B]\cap\Z.
\end{equation}
All vector coordinates are stored as residues modulo $q$.

\paragraph{Choosing the row counts.} Write $\ell_i=\log_2(q/W_i)$ for the per-row filtering strength. Since $\rho_i=(W_i/q)^{n_i}=2^{-n_i\ell_i}$, we balance the total strengths $n_i\ell_i$ near $L$, targeting a filter probability $1/N$ at every level. Ignoring the need for integer row counts, this allocates $L/\ell_i$ rows to level $i$. Thus $H=\sum_{i=1}^r\ell_i^{-1}$ gives a total of $LH$ rows, suggesting the list-size exponent $L\approx n/H$. These logarithms are for analysis only; the algorithm uses the integer comparison \eqref{eq:schedule-exact-comparison} to compute the row counts exactly. Let
\begin{equation}
\label{eq:schedule-capacities}
 L=\min\left\{l\in\Z_{\ge1}:\sum_{i=1}^r\left\lfloor\frac l{\ell_i}\right\rfloor\ge n\right\},\qquad
 N=2^L,\qquad b_i=\left\lfloor\frac L{\ell_i}\right\rfloor.
\end{equation}
We call $b_i$ the \emph{row capacity} at level $i$: the largest row count with total filtering strength at most $L$. Write $\Delta=\sum_i b_i-n$ for the excess capacity, and subtract one from the capacities at the $\Delta$ lowest levels. This gives row counts $n_1,\ldots,n_r$ summing to $n$. The bound $0\le\Delta\le r-1$, proved below, ensures there are enough distinct levels for these subtractions. Check that every $n_i$ is positive, and use these sizes for consecutive submatrices $A_i$. With
\begin{equation}
\label{eq:schedule-coins}
 \rho_i=(W_i/q)^{n_i},\qquad p_i=\frac{q^{n_i}}{NW_i^{n_i}}=\frac1{N\rho_i},
\end{equation}
a uniform level-$i$ image passes its filter and is kept by the independent subsampling variable with probability $\rho_i p_i=1/N$. If any parameter check fails, the bounded-secret decision algorithm outputs ``uniform'' and search reports failure before its calls. These checks concern only the number of lists, leaf and filter intervals, and positive row counts, and depend only on $n,q,d,E_s,E_e$. The asymptotic inequalities below follow from the parameter choice.

Table~\ref{tab:parameter-choices} collects these choices together with the threshold, storage limit, and precision from \eqref{eq:decision-threshold} and \eqref{eq:decision-integers}.

\begin{table}[htbp]
\centering
\small
\caption{Parameter choices for levels $1\le i\le r$. The row capacities are computed by exact integer comparisons; $\ell_i,H,D_0,\varepsilon$ are analysis quantities.}
\label{tab:parameter-choices}
\begin{tabular}{@{}>{\raggedright\arraybackslash}p{.20\textwidth}>{\raggedright\arraybackslash}p{.76\textwidth}@{}}
\toprule
\textbf{Parameters} & \textbf{Values}\\
\midrule
$\overline E_s,\overline E_e$ & $\overline E_s=\max\{1,\lceil E_s\rceil\}$, $\overline E_e=\max\{1,\lceil E_e\rceil\}$.\\
$h,f$ & $h=\operatorname{bitlen}(2n-1)$, $f=32\overline E_s h$.\\
$k,r,M_i$ & $k$ is the largest power of two at most $\min\{d,q\}/h$; $r=\log_2k$, $M_i=k/2^i$.\\
$B,Q$ & $B=\lfloor\sqrt q/(2\overline E_e)\rfloor$, $Q=2B+1$.\\
$R_i,W_i$ & $R_i=\lfloor q/(f\sqrt{M_i})\rfloor$, $W_i=2R_i+1$.\\
$\ell_i,H$ & $\ell_i=\log_2(q/W_i)$, $H=\sum_{i=1}^r\ell_i^{-1}$.\\
$L,N$ & $L=\min\{l\in\Z_{\ge1}:\sum_{i=1}^r\lfloor l/\ell_i\rfloor\ge n\}$, $N=2^L$.\\
$b_i,\Delta,n_i$ & $b_i=\lfloor L/\ell_i\rfloor$, $\Delta=\sum_i b_i-n$, $n_i=b_i-\mathbf1_{\{i\le\Delta\}}$.\\
$\rho_i,p_i$ & $\rho_i=(W_i/q)^{n_i}$, $p_i=1/(N\rho_i)$.\\
$T,K$ & $T=2^{\lceil3L/4\rceil}$, $K=2kn^4N$.\\
$P$ & $P=\min\{j\in\Z_{\ge1}:2^jT\ge4K\}$.\\
$D_0,\varepsilon$ & $D_0=d-1-2k$, $\varepsilon=4kN^rQ^{-D_0}$.\\
\bottomrule
\end{tabular}
\end{table}

\FloatBarrier
\subsection{The mean in the ideal model}

The independent coordinates of the ideal model let us write its ideal mean as a product of Fourier coefficients. For an odd width $W$, define the Fourier coefficient at frequency $t$ of the uniform distribution on the centered interval of width $W$ by
\begin{equation}
\label{eq:ideal-interval-character}
 \chi_W(t)=\frac1W\sum_{u=-(W-1)/2}^{(W-1)/2}\omega^{tu},\qquad \omega=\exp(2\pi\mathrm{i}/q).
\end{equation}
Use the ideal model for one leaf selection from Section~\ref{sec:decision-correctness}. Its $k$ leaf vectors are sampled independently from the distribution in \eqref{eq:schedule-leaves}. Its node images $Y^*_v$ are independent of the leaf vectors and of one another, and each is uniform in its level's filter box, with independent coordinates. Let $z^*$ be the sum of the leaf vectors, and let the $i$th block of $y^*$ be $\sum_{\operatorname{ht}(v)=i}Y^*_v$, with both sums reduced modulo $q$. For fixed integer vectors $e\in\Z^d$, $s\in\Z^n$, use the consecutive secret blocks $s^{(i)}$ and define
\begin{equation}
\label{eq:ideal-character-product}
 \begin{gathered}
 \Gamma=\E\bigl[\omega^{\langle z^*,e\rangle+\langle y^*,s\rangle}\bigr]=\omega^{ke_1}G, \qquad
 G=\prod_{j=2}^d\chi_Q(e_j)^k\prod_{i=1}^r\prod_{j=1}^{n_i}\chi_{W_i}(s^{(i)}_j)^{M_i}.
 \end{gathered}
\end{equation}
This is the mean of the fixed Fourier character for LWE in \eqref{eq:fixed-lwe-character-function} under the ideal model. Symmetry of the centered intervals makes $G$ real. Independence of the free leaf-vector coordinates and node-image coordinates gives the product. The fixed first coordinates contribute the common factor $\omega^{ke_1}$, which changes the mean's direction but not its magnitude.

\subsection{Verifying the guarantees}
\label{sec:parameter-verification}

\pnote{We should try to see, perhaps for the next version of the paper, if we can simplify this section, even at the cost worse constants in the exponent of the running time.}\hnote{I agree that a simpler parameter choice would be useful in a later version. For now I propose retaining the current choices to preserve the leading running-time constant $1/2$, while adding the parameter table and clearer references.} \pnote{Add a table summarising the values chosen for all the parameters. Then, in earlier sections, whenever we refer to ``parameters set in section 5'', we can refer to that table instead.}\cans{Added a compact table of the chosen parameter values and directed earlier references to those values to the table, retaining references to the proofs of their guarantees.}

We prove \Cref{lem:schedule-properties}. The argument establishes valid intervals, verifies the row balance that yields the running-time exponent, and bounds the weighted contribution of linear dependencies and the bias loss $-\ln|\Gamma|$. For each fixed asymptotic growth of the public bounds, we establish all estimates uniformly in the retained dimension and secret-error pair throughout \eqref{eq:schedule-uniform-range}. We finish by verifying exact computation and bit lengths.

\begin{proof}[Proof of \Cref{lem:schedule-properties}]
\emph{Number of lists and interval sizes.} Since $\ln\ln q\sim\ln\ln n$, the full interval \eqref{eq:schedule-uniform-range} gives $\ln d=\ln n+O(\ln\ln\ln n)$ throughout the dimension range. By \eqref{eq:schedule-scales}, $h=\Theta(\ln n)$ and $\ln f=(\ln n)^{o(1)}$, so $f=n^{o(1)}$. Thus $\min\{d,q\}/h\to\infty$, which verifies $\min\{d,q\}\ge2h$ from Section~\ref{sec:parameter-computation} for sufficiently large $n$. Rounding $\min\{d,q\}/h$ down to a power of two gives
\begin{equation}
\label{eq:schedule-arity-bounds}
 \frac{\min\{d,q\}}{2h}<k\le\frac{\min\{d,q\}}h,\qquad \frac kd\le\frac1h,\qquad \frac kq\le\frac1h.
\end{equation}
These inequalities give $k=o(d)$ and $\ln k=(\min\{1,\kappa\}+o(1))\ln n$, with $r=\Theta(\ln n)$.

For the target filter radius $R_i^*=q/(f\sqrt{M_i})$, the definition in \eqref{eq:schedule-intervals} gives $R_i=\lfloor R_i^*\rfloor$. The minimum of $R_i^*$ occurs at $i=1$, and
\begin{equation}
\label{eq:schedule-radius-growth}
 R_1^*\ge\frac{q}{f\sqrt k}\ge\frac{\sqrt{qh}}f\longrightarrow\infty.
\end{equation}
For all sufficiently large $n$, the definition in \eqref{eq:schedule-intervals} gives, at every level,
\begin{equation}
\label{eq:schedule-filter-widths}
 R_i^*/2\le R_i\le R_i^*,\qquad R_i^*\le W_i\le3R_i^*.
\end{equation}
Hence every $R_i\ge1$. For the width check $3W_i\le q$, we use $1\le R_i\le q/f$ to obtain
\begin{equation}
\label{eq:schedule-no-wrap}
 3W_i=6R_i+3\le6q/f+3\le9q/f<q,
\end{equation}
because $f\ge32$ by \eqref{eq:schedule-scales}. This verifies the filter-interval checks in Section~\ref{sec:parameter-computation}. The bound $3W_i\le q$ also implies $\ell_i\ge\log_2 3>1$, $6R_i+1=3W_i-2<q$, and $3R_i<q/2$. The last two inequalities ensure that the enlarged boxes used by bucket enumeration in Section~\ref{sec:bucket-implementation} contain distinct centered residues.

Since $\overline E_e=n^{o(1)}$, \eqref{eq:schedule-intervals} gives $B\to\infty$ and $Q=O(\sqrt q)=o(q)$. Thus $B\ge1$ and $Q<q$ for sufficiently large $n$, verifying the leaf-interval checks in Section~\ref{sec:parameter-computation}. The same definition yields
\begin{equation}
\label{eq:schedule-leaf-width}
 \begin{gathered}
 \ln Q=\tfrac12\ln q-\ln\overline E_e+O(1)=\left(\tfrac12+o(1)\right)\ln q, \qquad
 QE_e\le\sqrt q+\overline E_e\le\tfrac32\sqrt q
 \end{gathered}
\end{equation}
for sufficiently large $n$. In particular $QE_e\le q/2$. The filter widths $W_i\le3R_i^*$ give $W_iE_s\le3qE_s/(f\sqrt{M_i})\le3q/(32h)<q/2$. Thus the geometry needed for enumeration and signal evaluation is
\begin{equation}
\label{eq:schedule-geometry}
 6R_i+1<q,\qquad 3R_i<q/2,\qquad QE_e\le q/2,\qquad W_iE_s\le q/2.
\end{equation}
If either of $E_s, E_e$ is zero, the corresponding product $W_iE_s$ or $QE_e$ is zero, so the corresponding inequality holds automatically; the conventions $\overline E_s,\overline E_e\ge1$ still define $f$ and $B$.

\emph{Estimating the reciprocal sum $H$.} Put $\alpha=\log_2 f$; \eqref{eq:schedule-scales} gives $\log_2(32h)\le\alpha=(\ln n)^{o(1)}$. Taking logarithms in \eqref{eq:schedule-filter-widths} gives $\ell_i=\alpha+\frac12\log_2M_i+O(1)$. Thus $\ell_{\min}:=\min_i\ell_i\to\infty$ and $\ell_{\max}:=\max_i\ell_i=O(\ln n)$. Since $M_i=2^{r-i}$, put $j=r-i$, which ranges over $0,\ldots,r-1$. The reciprocal error is
\begin{equation}
\label{eq:schedule-reciprocal-error}
 \left|\frac1{\ell_i}-\frac1{\alpha+j/2}\right|
 =\frac{|\ell_i-(\alpha+j/2)|}{\ell_i(\alpha+j/2)}
 =O\!\left((\alpha+j/2)^{-2}\right).
\end{equation}
The numerator is $O(1)$, and each denominator factor has order $\alpha+j/2$ because $\alpha\to\infty$. Summing \eqref{eq:schedule-reciprocal-error} compares $H$ with $\sum_{j=0}^{r-1}(\alpha+j/2)^{-1}$. The total error is $O(1/\alpha)$, since the sum of $(\alpha+j/2)^{-2}$ is at most $\alpha^{-2}+2/\alpha$ by comparison with its integral. For the positive decreasing function $(\alpha+x/2)^{-1}$, the integral over $[0,r]$ is a lower bound on the sum. The terms with $1\le j\le r-1$ are each bounded by the integral over $[j-1,j]$, so their sum is at most the integral over $[0,r-1]$, and hence over $[0,r]$. Adding the $j=0$ term $1/\alpha$ therefore bounds the full sum by $1/\alpha$ plus the integral over $[0,r]$. Hence
\begin{equation}
\label{eq:schedule-harmonic-integral}
 H=\sum_{j=0}^{r-1}\frac1{\alpha+j/2}+O(1/\alpha)=2\ln\!\left(\frac{\alpha+r/2}{\alpha}\right)+O(1/\alpha)=(2+o(1))\ln\ln q.
\end{equation}
Indeed $r=\Theta(\ln n)$ and $\alpha=(\ln n)^{o(1)}$, so $\ln(\alpha+r/2)=\ln\ln n+O(1)$ whereas $\ln\alpha=o(\ln\ln n)=o(\ln\ln q)$.

\emph{Exact capacities and positive row counts.} The interval checks give $1<\ell_i<\log_2q$. For a candidate integer $l$ in \eqref{eq:schedule-capacities}, we need the capacities $\lfloor l/\ell_i\rfloor$. Computing this floor from an approximate logarithm could change the row count near an integer boundary. Instead, the capacity is the largest integer $j\in[0,l]$ satisfying $j\ell_i\le l$. Exponentiating in base two gives the equivalent comparisons
\begin{equation}
\label{eq:schedule-exact-comparison}
 j\ell_i\le l\quad\Longleftrightarrow\quad(q/W_i)^j\le2^l\quad\Longleftrightarrow\quad q^j\le2^lW_i^j.
\end{equation}
The final inequality uses only integer powers. Use binary search over $j$ to compute each capacity for a fixed $l$, and over $l$ to find the smallest value whose total capacity is at least $n$. The integer bound $U_{\max}=n\operatorname{bitlen}(q-1)=n\lceil\log_2q\rceil$ is an upper search endpoint: since $\ell_i\le\lceil\log_2q\rceil$, at $l=U_{\max}$ even one capacity is at least $n$.

Since $\ell_i>1$, increasing $l$ by one raises each capacity by at most one. Minimality of $L$ in \eqref{eq:schedule-capacities} gives $\sum_i\lfloor(L-1)/\ell_i\rfloor\le n-1$, so $n\le\sum_i b_i\le n+r-1$. This proves $0\le\Delta\le r-1$.

To quantify the effect of integer row counts, we compare $L$ with $n/H$. By \eqref{eq:schedule-capacities}, $n\le\sum_i b_i\le LH$, so $L\ge n/H$. For an upper bound, set $l_*=\lceil n/H+\ell_{\max}\rceil$. The extra $\ell_{\max}$ allows at least one more row at every level, compensating for the loss of less than one from taking a floor: since $\ell_{\max}\ge\ell_i$,
\begin{equation}
\label{eq:schedule-capacity-upper}
 \left\lfloor\frac{l_*}{\ell_i}\right\rfloor
 \ge\frac{l_*}{\ell_i}-1
 \ge\frac n{H\ell_i}.
\end{equation}
Summing \eqref{eq:schedule-capacity-upper} gives total capacity at least $n$, so $l_*$ is successful in \eqref{eq:schedule-capacities}. Thus $L\le l_*\le n/H+\ell_{\max}+1$.

For the actual row counts, $b_i=\lfloor L/\ell_i\rfloor$ in \eqref{eq:schedule-capacities} and the rule $n_i\in\{b_i-1,b_i\}$ give $L/\ell_i-2\le n_i\le L/\ell_i$. Multiplying by $\ell_i$ and combining with the bounds on $L$ proves
\begin{equation}
\label{eq:schedule-rounding}
 L-2\ell_i\le n_i\ell_i\le L,\qquad \frac nH\le L\le\frac nH+\ell_{\max}+1.
\end{equation}
Since $\ell_{\max}=O(\ln n)$, these bounds and \eqref{eq:schedule-harmonic-integral} give
\begin{equation}
\label{eq:schedule-balanced-size}
 L=\frac nH+O(\ln n)=\left(\tfrac12+o(1)\right)\frac n{\ln\ln q}.
\end{equation}
Every capacity satisfies $b_i\ge L/\ell_{\max}-1\to\infty$, so $n_i\ge b_i-1>0$ for sufficiently large $n$, verifying the row-count check in Section~\ref{sec:parameter-computation}. The construction gives $\sum_i n_i=n$. Applying \eqref{eq:schedule-rounding} yields
\begin{equation}
\label{eq:schedule-balance-loss}
 0\le\log_2(N\rho_i)=L-n_i\ell_i\le2\ell_{\max}=o(L),\qquad \frac{n_i}{L}\le\frac1{\ell_{\min}}=o(1).
\end{equation}
This proves both overhead bounds in \eqref{eq:resource-conditions} and $0<p_i\le1$. Since $k\le n/h$ is polynomial and $L=\Theta(n/\ln\ln q)$, we also have $\log n=o(L)$, $k=N^{o(1)}$, and $2k\le N$ for sufficiently large $n$.

\emph{Linear-dependency bounds throughout the parameter range.} Recall $D_0=d-1-2k$ from \eqref{eq:moment-finite-conditions}. By \eqref{eq:schedule-arity-bounds}, $D_0\ge d(1-2/h)-1$. The lower bound $d\ge Cn/\ln\ln q$ in \eqref{eq:schedule-uniform-range} and the value of $L$ in \eqref{eq:schedule-balanced-size} give $D_0/L\ge2C-o(1)$ and hence $D_0\ge1$ for sufficiently large $n$. Since $k\le\min\{d,q\}/h\le\min\{n,q\}$, we have $\ln k\le(\min\{1,1/\kappa\}+o(1))\ln q$. Together with \eqref{eq:schedule-leaf-width}, this gives
\begin{equation}
\label{eq:schedule-closure-margin}
 \frac{D_0\ln Q-L\ln k}{L\ln q}\ge C-\min\{1,1/\kappa\}-o(1).
\end{equation}
\eqref{eq:schedule-closure-margin} gives $D_0\ln Q\ge L\ln k+(C-\min\{1,1/\kappa\}-o(1))L\ln q$. To prove the target bound $D_0\ln Q\ge\ln(4kN^{r+1})$, the margin beyond $L\ln k$ must also cover $L\ln2+\ln(4k)$:
\begin{equation}
\label{eq:schedule-closure-remainder}
 \ln(4kN^{r+1})=L\ln k+L\ln2+\ln(4k),\qquad
 \frac{L\ln2+\ln(4k)}{L\ln q}=o(1).
\end{equation}
Since $C>\min\{1,1/\kappa\}$, the margin exceeds these additional terms for sufficiently large $n$. Exponentiating the resulting inequality and using $\varepsilon$ from \eqref{eq:moment-finite-conditions} gives
\begin{equation}
\label{eq:schedule-finite-closure}
 4kN^{r+1}\le Q^{D_0},\qquad \varepsilon=4kN^rQ^{-D_0}\le N^{-1}.
\end{equation}
Together with $2k\le N$ and $D_0\ge1$ established above, this verifies \eqref{eq:moment-finite-conditions}.

\emph{Bounding the bias loss.} To prove the signal bound $|\Gamma|=N^{-o(1)}$ in \eqref{eq:schedule-exponent}, we show that $-\ln|\Gamma|=o(L)$. We first show that the Fourier coefficients in \eqref{eq:ideal-character-product} are positive and bound their logarithmic losses. These losses add with weights $k$ and $M_i$ to give $-\ln|\Gamma|$.

If $U_W$ is uniform on the centered interval of odd width $W$, symmetry gives $\chi_W(t)=\E[\cos(2\pi tU_W/q)]$ and $\E[U_W^2]=(W^2-1)/12$. Hence $\cos x\ge1-x^2/2$ implies
\begin{equation}
\label{eq:schedule-character-lower}
 1\ge\chi_W(t)\ge1-\frac{\pi^2(W^2-1)t^2}{6q^2}\ge1-\frac{\pi^2W^2t^2}{6q^2}.
\end{equation}
When $W|t|\le q/2$, the last subtracted quantity is at most $\pi^2/24<1/2$. Positivity and $-\ln(1-x)\le2x$ for $0\le x\le1/2$ therefore give the complete bound
\begin{equation}
\label{eq:schedule-character-loss}
 0<\chi_W(t)\le1,\qquad 0\le-\ln\chi_W(t)\le\frac{\pi^2W^2t^2}{3q^2}.
\end{equation}
At $t=0$, the Fourier coefficient is one and its logarithmic loss is zero. Applying \eqref{eq:schedule-character-loss} to each factor in \eqref{eq:ideal-character-product}, with $QE_e\le q/2$ and $W_iE_s\le q/2$ from \eqref{eq:schedule-geometry}, proves $G>0$. Since $|\Gamma|=G$, summing the logarithmic losses yields
\begin{equation}
\label{eq:schedule-product-loss}
 0\le-\ln G\le\frac{\pi^2}{3q^2}\left(kQ^2\sum_{j=2}^de_j^2+\sum_{i=1}^rM_iW_i^2\|s^{(i)}\|_2^2\right).
\end{equation}
By \eqref{eq:schedule-leaf-width} and \eqref{eq:schedule-filter-widths}, $Q^2E_e^2\le9q/4$ and $M_iW_i^2\le9q^2/f^2$ for sufficiently large $n$. The coordinate bounds give $\sum_{j=2}^d e_j^2\le dE_e^2$ and $\|s^{(i)}\|_2^2\le n_iE_s^2$. Substituting these inequalities into \eqref{eq:schedule-product-loss} bounds the error contribution by $O(dk/q)$ and the secret contribution by $O(E_s^2\sum_i n_i/f^2)=O(nE_s^2/f^2)$, using $\sum_i n_i=n$. Finally, $d\le n$ from \eqref{eq:schedule-uniform-range}, $k/q\le1/h$ from \eqref{eq:schedule-arity-bounds}, and $E_s/f\le1/(32h)$ from \eqref{eq:schedule-scales} give $-\ln G=O(n/h)$. Together with \eqref{eq:schedule-balanced-size}, this gives $(-\ln G)/L=O(\ln\ln q/h)=o(1)$. Thus
\begin{equation}
\label{eq:schedule-signal}
 0\le-\ln G=O\!\left(\frac{dk}{q}+\frac{nE_s^2}{f^2}\right)=O(n/h)=o(L),\qquad |\Gamma|=G=N^{-o(1)}.
\end{equation}
% Here $-\ln G=-\ln|\Gamma|$ is the bias loss.
% If $E_e=0$ or $E_s=0$, the respective product in \eqref{eq:ideal-character-product} contributes zero to this loss; if both vanish, $G=1$ exactly.
Together with the bounds on $L$ and $k$, this verifies \eqref{eq:schedule-exponent}.

\emph{Bit lengths, encoding lengths, and termination.} Computing the ceilings of the rational bounds, bit lengths, powers of two, and integer divisions uses polynomial bit time. The floors in \eqref{eq:schedule-intervals} are computed exactly by binary search: $B$ is the largest integer $b\in[0,q]$ with $4\overline E_e^2b^2\le q$, and $R_i$ is the largest integer $u\in[0,q]$ with $f^2M_i u^2\le q^2$. These comparisons use integers with $O(\log n)$ bits in the stated regimes, so the searches take polynomial bit time. The search bound for row capacities is $U_{\max}=O(n\log q)$; capacity comparisons have $j,l\le U_{\max}$, so their operands in \eqref{eq:schedule-exact-comparison} have $O(n(\log q)^2)$ bits. The two nested binary searches make $O(r(\log U_{\max})^2)$ such comparisons, with powers obtained by integer multiplication. Thus the capacity search takes polynomial bit time. When the parameter checks pass, $n_i\le n$, and the numerator and denominator of each subsampling probability have $O(L+n_i\log q)$ bits. The integer $N$ has $L+1$ bits, despite its exponential value.

For \eqref{eq:decision-integers}, $\log_2K=L+O(\log n)$ and $\log_2T=\lceil3L/4\rceil$. The precision parameter $P$ can be obtained by comparing integers or by subtracting this latter exponent from $\operatorname{bitlen}(4K-1)$ and taking the maximum with one. Therefore $P=L/4+O(\log n)$. There are $q$ histogram counters $c_t$, each at most $K$, and exponents have $O(\log q)$ bits. Fixed-point approximations of unit complex numbers can be represented at common denominator $2^{P+3}$ with $O(P+1)$ numerator bits; the weighted sum $\widetilde Z=\sum_tc_t\widetilde\omega_t$, with nonnegative integer weights totaling at most $K$, then has $O(P+\log K)$ numerator bits. Squared-magnitude comparisons use integers of polynomial bit length. \Cref{sec:paper-resources} constructs these approximations and bounds the cost of list construction and processing the root entries.

For the bounded-secret sample range of the main theorem, $m\ge Cn/\ln\ln q$ implies $d=\min\{m,n\}\in[Cn/\ln\ln q,n]$ for sufficiently large $n$; for the arbitrary-secret range, $m\ge n+Cn/\ln\ln q$ similarly gives $m>n$ and $d=\min\{m-n,n\}$ in the same interval. Thus the parameter checks succeed for sufficiently large inputs throughout both ranges. The retained positions depend only on the integer dimensions $m,n$, and the parameter choice and its checks depend only on $n,q,d,E_s,E_e$. The sample-size inequalities involving $C$ and $\ln\ln q$ are used only in the analysis.
\end{proof}

This verifies the conditions assumed in Sections~\ref{sec:paper-moments} and~\ref{sec:paper-resources}, and supplies the signal and running-time exponent used to assemble the main theorem in Section~\ref{sec:paper-reductions}.

\begin{samepage}
\section{AI Acknowledgments}\

AI was used in the production of this work. Building on the second-moment method developed in our earlier work on $k$-trees over integers~\cite{LV24}, we adapted this technique to the current setting, found ways to deal with the greater dependencies here, and essentially reduced the analysis to bounding the covariances among a set of indicator variables. OpenAI's ChatGPT assisted with refining the required moment and covariance bounds and the parameter choices. This involved sustained interaction and verification over several weeks and passing over multiple earlier proof approaches that were needlessly complex, until we arrived at the current proof.

ChatGPT was also used for writing throughout the paper, including preparing all the figures and tables. Anthropic's Claude was used for verifying correctness, in addition to our own verification. The authors take responsibility for all content in this paper.
\end{samepage}

% \clearpage
\phantomsection
\label{sec:references-start}
\bibliographystyle{alphaurl}
\bibliography{references}

@string{springer =              "Springer"}

@string{springer_berlin_heidelberg = "Springer Berlin Heidelberg, Germany"}

@string{springer_cham =         "Springer, Cham, Switzerland"}

@string{jcrypto =               "Journal of Cryptology"}

@string{dcc =                   "Designs, Codes and Cryptography"}

@string{mylncs =                "Lecture Notes in Computer Science"}

@string{asiacrypt25name1 =      asiacryptname # "~2025, Part~I"}

@string{asiacrypt25ed =         "Goichiro Hanaoka and Bo-Yin Yang"}

@string{asiacrypt25vol1 =       "16245"}

@string{asiacrypt25addr =       "Melbourne, VIC, Australia"}

@string{asiacrypt25month =      dec # "~8--12,"}

@string{asiacrypt25pub =        springer_singapore}

@string{cryptoaddr =            "Santa Barbara, CA, USA"}

@string{crypto02name =          cryptoname # "~2002"}

@string{crypto02ed =            "Moti Yung"}

@string{crypto02vol =           "2442"}

@string{crypto02month =         aug # "~18--22,"}

@string{crypto02pub =           springer_berlin_heidelberg}

@string{crypto13name1 =         cryptoname # "~2013, Part~I"}

@string{crypto13ed =            "Ran Canetti and Juan A. Garay"}

@string{crypto13vol1 =          "8042"}

@string{crypto13month =         aug # "~18--22,"}

@string{crypto13pub =           springer_berlin_heidelberg}

@string{crypto15name1 =         cryptoname # "~2015, Part~I"}

@string{crypto15ed =            "Rosario Gennaro and Matthew J. B. Robshaw"}

@string{crypto15vol1 =          "9215"}

@string{crypto15month =         aug # "~16--20,"}

@string{crypto15pub =           springer_berlin_heidelberg}

@string{crypto25name1 =         cryptoname # "~2025, Part~I"}

@string{crypto25ed =            "Yael Tauman Kalai and Seny F. Kamara"}

@string{crypto25vol1 =          "16000"}

@string{crypto25month =         aug # "~17--21,"}

@string{crypto25pub =           springer_cham}

@string{icalppub =              springer_berlin_heidelberg}

@string{icalp11name1 =          "ICALP 2011: 38th " # icalpname # ", Part~I"}

@string{icalp11ed =             "Luca Aceto and Monika Henzinger and Jiri Sgall"}

@string{icalp11vol1 =           "6755"}

@string{icalp11addr =           "Zurich, Switzerland"}

@string{icalp11month =          jul # "~4--8,"}

@string{indocrypt20name =       indocryptname # "~2020: 21st " # indocryptsuffix}

@string{indocrypt20ed =         "Karthikeyan Bhargavan and Elisabeth Oswald and Manoj Prabhakaran"}

@string{indocrypt20vol =        "12578"}

@string{indocrypt20pub =        springer_cham}

@string{stocpub =               "{ACM} Press"}

@string{stoc96name =            "28th " # stocname}

@string{stoc96addr =            "Philadelphia, PA, USA"}

@string{stoc96month =           may # "~22--24,"}

@string{stoc05name =            "37th " # stocname}

@string{stoc05ed =              "Harold N. Gabow and Ronald Fagin"}

@string{stoc05addr =            "Baltimore, MD, USA"}

@string{stoc05month =           may # "~22--24,"}

@Misc{HGH26,
  author =       {Xuyuan Han and Yiming Gao and Honggang Hu},
  title =        {A Provable Subexponential-Time Algorithm for {LWE} from {$n+o(n)$} Samples via {W}agner-Style Gaussian Sampling},
  year =         2026,
  howpublished = {Cryptology {ePrint} Archive, Paper 2026/2261},
  url =          {https://eprint.iacr.org/2026/2261},
  note =         {Uploaded 29 September 2026},
}

@Misc{CJWZ26,
  author =       {Zhili Chen and Kaijie Jiang and Anyu Wang and Chuanqi Zhang},
  title =        {A Subexponential Algorithm for Binary {LWE} with Sublinear Samples},
  year =         2026,
  howpublished = {Cryptology {ePrint} Archive, Paper 2026/2289},
  url =          {https://eprint.iacr.org/2026/2289},
  note =         {Uploaded 1 October 2026},
}

@InProceedings{LV24,
  author =       "Haoxing Lin and
                  Prashant Nalini Vasudevan",
  title =        {On {W}agner's {$k$}-Tree Algorithm Over Integers},
  pages =        {481--513},
  editor =       asiacrypt25ed,
  booktitle =    asiacrypt25name1,
  volume =       asiacrypt25vol1,
  address =      asiacrypt25addr,
  month =        asiacrypt25month,
  publisher =    asiacrypt25pub,
  series =       mylncs,
  year =         2025,
  doi =          "10.1007/978-981-95-5018-0_16",
  note = {Full version: \url{https://eprint.iacr.org/2024/1612}},
}

@InProceedings{DEL25,
  author =       "L{\'e}o Ducas and
                  Lynn Engelberts and
                  Johanna Loyer",
  title =        {Wagner's Algorithm Provably Runs in Subexponential Time
               for {$\mathrm{SIS}^{\infty}$}},
  pages =        {353--384},
  editor =       crypto25ed,
  booktitle =    crypto25name1,
  volume =       crypto25vol1,
  address =      cryptoaddr,
  month =        crypto25month,
  publisher =    crypto25pub,
  series =       mylncs,
  year =         2025,
  doi =          "10.1007/978-3-032-01855-7_12",
  note = {Full version: \url{https://eprint.iacr.org/2025/575}; correction: \url{https://doi.org/10.1007/978-3-032-01855-7_23}, p.~C1},
}

@InProceedings{Wag02,
  author =       "David Wagner",
  title =        {A Generalized Birthday Problem},
  pages =        {288--304},
  editor =       crypto02ed,
  booktitle =    crypto02name,
  volume =       crypto02vol,
  address =      cryptoaddr,
  month =        crypto02month,
  publisher =    crypto02pub,
  series =       mylncs,
  year =         2002,
  doi =          "10.1007/3-540-45708-9_19",
}

@InProceedings{KF15,
  author =       "Paul Kirchner and
                  Pierre-Alain Fouque",
  title =        {An Improved {BKW} Algorithm for {LWE} with Applications
               to Cryptography and Lattices},
  pages =        {43--62},
  editor =       crypto15ed,
  booktitle =    crypto15name1,
  volume =       crypto15vol1,
  address =      cryptoaddr,
  month =        crypto15month,
  publisher =    crypto15pub,
  series =       mylncs,
  year =         2015,
  doi =          "10.1007/978-3-662-47989-6_3",
}

@Article{HKM18,
  author =       "Gottfried Herold and
                  Elena Kirshanova and
                  Alexander May",
  title =        {On the Asymptotic Complexity of Solving {LWE}},
  pages =        {55--83},
  volume =       86,
  publisher =    springer,
  year =         2018,
  journal =      dcc,
  number =       1,
  doi =          "10.1007/s10623-016-0326-0",
}

@Misc{JKL24,
  author =       "Antoine Joux and
                  Hunter Kippen and
                  Julian Loss",
  title =        {A Concrete Analysis of {W}agner's {$k$}-List Algorithm over {$\mathbb{Z}_p$}},
  year =         2024,
  howpublished = "Cryptology {ePrint} Archive, Report 2024/282",
  url =          "https://eprint.iacr.org/2024/282",
}

@InProceedings{Ajtai96,
  author =       "Mikl{\'o}s Ajtai",
  title =        {Generating Hard Instances of Lattice Problems},
  pages =        {99--108},
  booktitle =    stoc96name,
  address =      stoc96addr,
  month =        stoc96month,
  publisher =    stocpub,
  year =         1996,
  doi =          "10.1145/237814.237838",
}

@Article{MinderSinclair12,
  author =       "Lorenz Minder and
                  Alistair Sinclair",
  title =        {The Extended {$k$}-Tree Algorithm},
  pages =        {349--382},
  volume =       25,
  month =        apr,
  publisher =    springer,
  year =         2012,
  journal =      jcrypto,
  number =       2,
  doi =          "10.1007/s00145-011-9097-y",
}

@InProceedings{MP13,
  author =       "Daniele Micciancio and
                  Chris Peikert",
  title =        {Hardness of {SIS} and {LWE} with Small Parameters},
  pages =        {21--39},
  editor =       crypto13ed,
  booktitle =    crypto13name1,
  volume =       crypto13vol1,
  address =      cryptoaddr,
  month =        crypto13month,
  publisher =    crypto13pub,
  series =       mylncs,
  year =         2013,
  doi =          "10.1007/978-3-642-40041-4_2",
}

@InProceedings{AG11,
  author =       "Sanjeev Arora and
                  Rong Ge",
  title =        {New Algorithms for Learning in Presence of Errors},
  pages =        {403--415},
  editor =       icalp11ed,
  booktitle =    icalp11name1,
  volume =       icalp11vol1,
  address =      icalp11addr,
  month =        icalp11month,
  publisher =    icalppub,
  series =       mylncs,
  year =         2011,
  doi =          "10.1007/978-3-642-22006-7_34",
}

@InProceedings{Reg05,
  author =       "Oded Regev",
  title =        {On Lattices, Learning with Errors, Random Linear Codes, and Cryptography},
  pages =        {84--93},
  editor =       stoc05ed,
  booktitle =    stoc05name,
  address =      stoc05addr,
  month =        stoc05month,
  publisher =    stocpub,
  year =         2005,
  doi =          "10.1145/1060590.1060603",
}

@article{BKW03,
  author    = {Avrim Blum and Adam Kalai and Hal Wasserman},
  title     = {Noise-Tolerant Learning, the Parity Problem, and the
               Statistical Query Model},
  journal   = {Journal of the {ACM}},
  volume    = {50},
  number    = {4},
  pages     = {506--519},
  year      = {2003},
  doi       = {10.1145/792538.792543},
}

@book{BrentZimmermann10,
  author    = {Richard P. Brent and Paul Zimmermann},
  title     = {Modern Computer Arithmetic},
  publisher = {Cambridge University Press},
  year      = {2010},
  url       = {https://members.loria.fr/PZimmermann/mca/mca-cup-0.5.9.pdf},
}

@article{AR05,
  author  = {Dorit Aharonov and Oded Regev},
  title   = {Lattice Problems in {NP} {$\cap$} co{NP}},
  journal = {Journal of the {ACM}},
  volume  = {52},
  number  = {5},
  pages   = {749--765},
  year    = {2005},
  doi     = {10.1145/1089023.1089025},
}

@inproceedings{GMSW21,
  author    = {Qian Guo and Erik M{\aa}rtensson and Stankovski Wagner, Paul},
  title     = {On the Sample Complexity of Solving {LWE} Using {BKW}-Style Algorithms},
  booktitle = {2021 {IEEE} International Symposium on Information Theory ({ISIT})},
  pages     = {2405--2410},
  publisher = {{IEEE}},
  year      = {2021},
  doi       = {10.1109/ISIT45174.2021.9518190},
}

@inproceedings{ACPS09,
  author    = {Benny Applebaum and David Cash and Chris Peikert and Amit Sahai},
  title     = {Fast Cryptographic Primitives and Circular-Secure Encryption Based on Hard Learning Problems},
  booktitle = {Advances in Cryptology -- {CRYPTO} 2009},
  series    = {Lecture Notes in Computer Science},
  volume    = {5677},
  pages     = {595--618},
  publisher = {Springer},
  year      = {2009},
  doi       = {10.1007/978-3-642-03356-8_35},
}

@inproceedings{Lyu05,
  author    = {Vadim Lyubashevsky},
  title     = {The Parity Problem in the Presence of Noise, Decoding Random Linear Codes, and the Subset Sum Problem},
  booktitle = {Approximation, Randomization and Combinatorial Optimization. Algorithms and Techniques ({APPROX}/{RANDOM} 2005)},
  series    = {Lecture Notes in Computer Science},
  volume    = {3624},
  pages     = {378--389},
  publisher = {Springer},
  year      = {2005},
  doi       = {10.1007/11538462_32},
}

@inproceedings{ACKS21,
  author    = {Divesh Aggarwal and Yanlin Chen and Rajendra Kumar and Yixin Shen},
  title     = {Improved (Provable) Algorithms for the Shortest Vector Problem via Bounded Distance Decoding},
  editor    = {Markus Bl{\"a}ser and Benjamin Monmege},
  booktitle = {38th International Symposium on Theoretical Aspects of Computer Science ({STACS} 2021)},
  series    = {Leibniz International Proceedings in Informatics (LIPIcs)},
  volume    = {187},
  pages     = {4:1--4:20},
  publisher = {Schloss Dagstuhl -- Leibniz-Zentrum f{\"u}r Informatik},
  address   = {Dagstuhl, Germany},
  year      = {2021},
  doi       = {10.4230/LIPIcs.STACS.2021.4},
}

@inproceedings{ADRS15,
  author    = {Divesh Aggarwal and Daniel Dadush and Oded Regev and Noah Stephens-Davidowitz},
  title     = {Solving the Shortest Vector Problem in {$2^n$} Time Using Discrete {G}aussian Sampling: Extended Abstract},
  booktitle = {Proceedings of the Forty-Seventh Annual {ACM} Symposium on Theory of Computing ({STOC} 2015)},
  pages     = {733--742},
  publisher = {{ACM}},
  year      = {2015},
  doi       = {10.1145/2746539.2746606},
  note      = {Full version: \url{https://arxiv.org/abs/1412.7994}},
}

@inproceedings{ADS15,
  author    = {Divesh Aggarwal and Daniel Dadush and Noah Stephens-Davidowitz},
  title     = {Solving the Closest Vector Problem in {$2^n$} Time -- The Discrete {G}aussian Strikes Again!},
  booktitle = {2015 {IEEE} 56th Annual Symposium on Foundations of Computer Science ({FOCS})},
  pages     = {563--582},
  publisher = {{IEEE} Computer Society},
  year      = {2015},
  doi       = {10.1109/FOCS.2015.41},
  note      = {Full version: \url{https://arxiv.org/abs/1504.01995}},
}

@inproceedings{AFFP14,
  author    = {Martin R. Albrecht and Jean-Charles Faug{\`e}re and Robert Fitzpatrick and Ludovic Perret},
  title     = {Lazy Modulus Switching for the {BKW} Algorithm on {LWE}},
  editor    = {Hugo Krawczyk},
  booktitle = {Public-Key Cryptography -- {PKC} 2014},
  series    = mylncs,
  volume    = {8383},
  pages     = {429--445},
  publisher = springer_berlin_heidelberg,
  year      = {2014},
  doi       = {10.1007/978-3-642-54631-0_25},
}

@inproceedings{ALS21,
  author    = {Divesh Aggarwal and Zeyong Li and Noah Stephens-Davidowitz},
  title     = {A {$2^{n/2}$}-Time Algorithm for {$\sqrt{n}$}-{SVP} and {$\sqrt{n}$}-{Hermite SVP}, and an Improved Time-Approximation Tradeoff for {(H)SVP}},
  editor    = {Anne Canteaut and Fran{\c c}ois-Xavier Standaert},
  booktitle = {Advances in Cryptology -- {EUROCRYPT} 2021, Part~I},
  series    = mylncs,
  volume    = {12696},
  pages     = {467--497},
  publisher = springer_cham,
  year      = {2021},
  doi       = {10.1007/978-3-030-77870-5_17},
}

@inproceedings{AS18,
  author    = {Divesh Aggarwal and Noah Stephens-Davidowitz},
  title     = {Just Take the Average! An Embarrassingly Simple {$2^n$}-Time Algorithm for {SVP} (and {CVP})},
  editor    = {Raimund Seidel},
  booktitle = {1st Symposium on Simplicity in Algorithms ({SOSA} 2018)},
  series    = {Open Access Series in Informatics (OASIcs)},
  volume    = {61},
  pages     = {12:1--12:19},
  publisher = {Schloss Dagstuhl -- Leibniz-Zentrum f{\"u}r Informatik},
  address   = {Dagstuhl, Germany},
  year      = {2018},
  doi       = {10.4230/OASIcs.SOSA.2018.12},
}

@inproceedings{BGJMSW20,
  author    = {Alessandro Budroni and Qian Guo and Thomas Johansson and Erik M{\aa}rtensson and Stankovski Wagner, Paul},
  title     = {Making the {BKW} Algorithm Practical for {LWE}},
  editor    = indocrypt20ed,
  booktitle = indocrypt20name,
  series    = mylncs,
  volume    = indocrypt20vol,
  pages     = {417--439},
  publisher = indocrypt20pub,
  year      = {2020},
  doi       = {10.1007/978-3-030-65277-7_19},
}

@inproceedings{BLP13,
  author    = {Zvika Brakerski and Adeline Langlois and Chris Peikert and Oded Regev and Damien Stehl{\'e}},
  title     = {Classical Hardness of Learning with Errors},
  booktitle = {Proceedings of the Forty-Fifth Annual {ACM} Symposium on Theory of Computing ({STOC} 2013)},
  pages     = {575--584},
  publisher = {{ACM}},
  year      = {2013},
  doi       = {10.1145/2488608.2488680},
  note      = {Full version: \url{https://arxiv.org/abs/1306.0281}},
}

@inproceedings{GJMS17,
  author    = {Qian Guo and Thomas Johansson and Erik M{\aa}rtensson and Paul Stankovski},
  title     = {Coded-{BKW} with Sieving},
  editor    = {Tsuyoshi Takagi and Thomas Peyrin},
  booktitle = {Advances in Cryptology -- {ASIACRYPT} 2017, Part~I},
  series    = mylncs,
  volume    = {10624},
  pages     = {323--346},
  publisher = springer_cham,
  year      = {2017},
  doi       = {10.1007/978-3-319-70694-8_12},
}

@inproceedings{LF06,
  author    = {{\'E}ric Levieil and Pierre-Alain Fouque},
  title     = {An Improved {LPN} Algorithm},
  editor    = {Roberto De Prisco and Moti Yung},
  booktitle = {Security and Cryptography for Networks ({SCN} 2006)},
  series    = mylncs,
  volume    = {4116},
  pages     = {348--359},
  publisher = springer_berlin_heidelberg,
  year      = {2006},
  doi       = {10.1007/11832072_24},
}

@article{MR07,
  author    = {Daniele Micciancio and Oded Regev},
  title     = {Worst-Case to Average-Case Reductions Based on {G}aussian Measures},
  journal   = {{SIAM} Journal on Computing},
  volume    = {37},
  number    = {1},
  pages     = {267--302},
  publisher = {Society for Industrial and Applied Mathematics},
  year      = {2007},
  doi       = {10.1137/S0097539705447360},
}

@inproceedings{WBLW25,
  author    = {Yu Wei and Lei Bi and Xianhui Lu and Kunpeng Wang},
  title     = {Memory-Efficient {BKW} Algorithm for Solving the {LWE} Problem},
  editor    = {Tibor Jager and Jiaxin Pan},
  booktitle = {Public-Key Cryptography -- {PKC} 2025, Part~II},
  series    = mylncs,
  volume    = {15675},
  pages     = {331--362},
  publisher = springer_cham,
  year      = {2025},
  doi       = {10.1007/978-3-031-91823-0_11},
  note      = {Full version: \url{https://eprint.iacr.org/2025/266}},
}

% \clearpage
\appendix
\section{Standard reductions with a fixed number of original samples}
\label{app:reduction-proofs}

We verify the forms of the standard reductions used in \Cref{sec:paper-reductions}. The points requiring care are the fixed secret-error pair, reuse of the original samples, and the explicit failure bounds.

\subsection{Bounded-secret search on the same samples}
\label{app:bounded-search}

Regev's search-to-decision reduction~\cite[\href{https://cims.nyu.edu/~regev/papers/qcrypto.pdf}{full version}, Lemma~4.2]{Reg05} is stated for independent samples with errors drawn from one distribution and a decision procedure with exponentially small error for every field secret. Here the decision guarantee is $3n^{-4}$ for bounded secrets and every fixed error vector, and all guesses must use the same original samples. The same coordinate perturbation applies; we record its marginal distributions and the resulting union bound.

\begin{proof}[Proof of \Cref{prop:bounded-search}]
Fix a secret-error pair satisfying the stated bounds. Write $A,b$ for the matrix and right-hand-side vector of the first $d$ samples, and $e$ for their error vector. For each coordinate $j=1,\ldots,n$ in turn, run a decision call for every guess $\sigma\in\mathcal S$ in increasing order, with $\mathcal S$ as in \Cref{prop:bounded-search}. For a call $(j,\sigma)$, let $u_j$ be the $j$th standard basis vector in $\Fq^n$ and draw a fresh uniform $w\in\Fq^d$. The perturbed input is
\begin{equation}
\label{eq:reduction-perturbation}
 \widehat A=A+u_jw^\top,\qquad
 \widehat b=b+\sigma w=\widehat A^\top s+e+(\sigma-s_j)w.
\end{equation}
For every fixed $w$, each possible $\widehat A$ is produced by exactly one original matrix, $A=\widehat A-u_jw^\top$. All these original matrices are equally likely, so $\widehat A$ is uniform with the same distribution for every $w$. Thus $\widehat A$ and $w$ are independent uniform variables.

If $\sigma=s_j\bmod q$, \eqref{eq:reduction-perturbation} gives $\widehat b=\widehat A^\top s+e$, so the input has the same fixed pair $(s,e)$. Otherwise, $\sigma-s_j$ is nonzero modulo the prime $q$, so multiplying by it permutes the vectors in $\Fq^d$. Conditional on $\widehat A$, $w$ is still uniform, so $(\sigma-s_j)w$ is uniform. Adding the fixed vector $\widehat A^\top s+e$ preserves this uniformity, giving a uniform $\widehat b$ conditional on $\widehat A$. Each call receives only the perturbed input and the public bounds, with independent fresh tree randomness.

If any parameter check in Section~\ref{sec:parameter-computation} fails, report failure before the calls. For sufficiently large $n$, $Cn/\ln\ln q\le d\le n$, so these checks pass by \Cref{lem:schedule-properties}. Also, $E_s\le n^{o(1)}$, so $2\lfloor E_s\rfloor<q$ for sufficiently large $n$. The guesses are therefore distinct modulo $q$ and include each true coordinate exactly once, including when $E_s=0$.

By \Cref{prop:bounded-input-decision}, each call $(j,\sigma)$ has answer error probability at most $3n^{-4}$, averaged over the original matrix, its perturbation vector, and its tree randomness. If every answer is correct, search recovers every coordinate. A union bound over all $M_{\mathrm{calls}}$ calls therefore gives failure probability at most $3M_{\mathrm{calls}}n^{-4}=n^{-3+o(1)}$. This argument neither conditions on earlier answers nor requires independence between calls.
\end{proof}

\subsection{Normal form from the reserved samples}
\label{app:fixed-suffix-normal-form}

The normal-form lemma of Applebaum--Cash--Peikert--Sahai~\cite[Section~3.1, Lemma~2]{ACPS09} is stated for independent LWE samples. Its preprocessing collects linearly independent columns, discarding samples whose columns are linearly dependent on those already collected, before transforming fresh samples. We instead reserve exactly $n$ fixed sample positions and allow failure when that block is singular. We also verify the transformation for every fixed error vector, so the guarantees hold even for error distributions with correlated coordinates.

\begin{proof}[Proof of \Cref{prop:fixed-suffix-normal-form}]
Let $A_{\mathrm{pre}}\in\Fq^{n\times(m-n)}$ be the matrix of the first $m-n$ equations and $A_B\in\Fq^{n\times n}$ the matrix of the final $n$ equations. Write $b_{\mathrm{pre}},b_B$ and $e_{\mathrm{pre}},e_B$ for their right-hand-side and error vectors. Test $A_B$ for invertibility once and report failure if it is singular. Otherwise output
\begin{equation}
\label{eq:reduction-normal-form}
 \begin{gathered}
 A^{\mathrm{nf}}=A_B^{-1}A_{\mathrm{pre}},\qquad
 b^{\mathrm{nf}}=b_{\mathrm{pre}}-(A^{\mathrm{nf}})^\top b_B,\\
 s^{\mathrm{nf}}=-e_B,\qquad e^{\mathrm{nf}}=e_{\mathrm{pre}}.
 \end{gathered}
\end{equation}
Only $A^{\mathrm{nf}}$ and $b^{\mathrm{nf}}$ are computed; the transformed secret and errors specify their LWE representation.

Fix an original pair $(s,e)$ and an invertible matrix of reserved samples $A_B$. Multiplication by $A_B^{-1}$ carries the independent uniform matrix $A_{\mathrm{pre}}$ bijectively to uniform $A^{\mathrm{nf}}$. Substituting $b_{\mathrm{pre}}=A_{\mathrm{pre}}^\top s+e_{\mathrm{pre}}$ and $b_B=A_B^\top s+e_B$ gives
\begin{equation}
\label{eq:reduction-normal-form-law}
 b^{\mathrm{nf}}=e_{\mathrm{pre}}-(A^{\mathrm{nf}})^\top e_B
   =(A^{\mathrm{nf}})^\top s^{\mathrm{nf}}+e^{\mathrm{nf}}.
\end{equation}
The transformed secret-error pair $(-e_B,e_{\mathrm{pre}})$ is fixed, has dimensions $n$ and $m-n$, and has both coordinate bounds at most $E$. In the uniform case, condition further on $A^{\mathrm{nf}},b_B$: the right-hand-side vector $b_{\mathrm{pre}}$ remains independent uniform, and hence so does $b^{\mathrm{nf}}$. Since $b_B+s^{\mathrm{nf}}=A_B^\top s$, a recovered transformed secret gives
\begin{equation}
\label{eq:reduction-recovery}
 s=A_B^{-\top}(b_B+s^{\mathrm{nf}})\pmod q.
\end{equation}
The invertibility test, transformation, and recovery use Gaussian elimination and field arithmetic, taking time polynomial in the input length.

To bound the only transformation failure, reveal the columns of $A_B$ in order. Conditional on the first $j$ columns being linearly independent, the next lies in their span with probability $q^{j-n}$. Thus
\begin{equation}
\label{eq:reduction-suffix-singularity}
 \Prb[A_B\text{ singular}]
 \le\sum_{j=0}^{n-1}q^{j-n}=\frac{1-q^{-n}}{q-1}\le\frac1{q-1}.
\end{equation}
\end{proof}

\begin{proof}[Proof of \Cref{cor:arbitrary-secret}]
For $m\le n$, define arbitrary-secret decision to output \textsc{uniform} and search to report failure. Otherwise test the reserved matrix $A_B$ once; on singularity, decision returns \textsc{uniform} and search reports failure. On invertibility, apply \Cref{prop:fixed-suffix-normal-form} and use the first $d=\min\{m-n,n\}$ transformed equations for bounded-secret decision or search with $E_s=E_e=E$. Computing only these $d$ columns in \eqref{eq:reduction-normal-form} is equivalent to transforming the whole prefix and then discarding the unused equations. Search perturbs only these transformed equations, keeping the reserved samples and $b_B$ unchanged for recovery by \eqref{eq:reduction-recovery}.

For sufficiently large $n$, the sample assumption gives $m>n$ and $Cn/\ln\ln q\le d\le n$. The bound $E\le\exp((\ln n)^{o(1)})$ also implies $E\le n^{o(1)}$. Conditional on each fixed invertible reserved matrix, \Cref{prop:bounded-input-decision,prop:bounded-search} therefore apply to the fixed transformed secret-error pair. Average their conditional bounds and add \eqref{eq:reduction-suffix-singularity} once for the entire computation to obtain \eqref{eq:reduction-arbitrary-errors}. Uniform input remains uniform, and a singular reserved matrix yields the correct decision output \textsc{uniform}, so the stated decision bound also holds in that case. Both bounds tend to zero for every $\kappa>0$, including $q<n$ and $E=0$.
\end{proof}

\clearpage
\section{Concurrent work}
\label{app:concurrent-work}

% Independent concurrent works of Han, Gao, and Hu~\cite{HGH26} and Chen, Jiang, Wang, and Zhang~\cite{CJWZ26} were uploaded to the IACR ePrint Archive on 29 September and 1 October 2026, respectively, while we were completing the final revision of this manuscript. An earlier version containing our results and their proofs had already been completed by 11 September 2026, so we consider our results as independent and concurrent work.\footnote{Our repository records such a version in commit \href{https://github.com/haoxingl/Provable-Wagners-for-SIS/commit/343f965f0316ac16fbebac302dfb4ab800c8e1a4}{\texttt{343f965}}, dated 11 September 2026.}

Independent concurrent works by Han, Gao, and Hu~\cite{HGH26} and Chen, Jiang, Wang, and Zhang~\cite{CJWZ26} contain results very similar to ours, and we discuss these similarities and differences here.

An important different between our work and these is the approach we use in our analysis, which we consider our most important contribution. The concurrent works use a different analytical framework from ours. Both develop the Gaussian--Wagner framework of Ducas, Engelberts, and Loyer~\cite{DEL25}, introducing new auxiliary superlattices that support the Gaussian sampling accuracy needed for LWE. Our approach extends the second-moment analysis of Lin and Vasudevan~\cite{LV24}: we control the Fourier statistic directly for a reusable $k$-tree, including the dependence created by the shared public matrix. 

Han--Gao--Hu address arbitrary secrets with $n+o(n)$ samples, whereas Chen--Jiang--Wang--Zhang address binary and, more generally, narrow secrets and errors with sublinear samples. Taken together, they provide Gaussian--Wagner analogues of our two main sample regimes, with several differences in scope and guarantees.

One difference is the modulus range. For polynomial prime moduli $q=n^{\kappa+o(1)}$, both papers state their main specializations for fixed $\kappa>1/2$. Our results allow every fixed $\kappa>0$, including moduli that grow more slowly than $\sqrt n$.

For arbitrary secrets, Han--Gao--Hu obtain a wider error range at the cost of a larger running time and more samples. Their theorem covers independent discrete Gaussian errors whose width can grow as $\exp((\ln n)^{a+o(1)})$ for any fixed $0<a<1$, beyond the range stated in our theorem, with a corresponding increase in the running-time exponent. Our theorem is formulated for every fixed bounded secret-error pair, allowing correlated errors; their headline theorem uses independent Gaussian errors.

For narrow secrets, Chen--Jiang--Wang--Zhang state a fixed bound on the average squared size of the secret and errors, whereas our theorem permits secret coordinates bounded by $\exp((\ln n)^{o(1)})$ and error coordinates bounded by $n^{o(1)}$. The two promises cover different inputs: a norm bound permits a few much larger coordinates, while our bounds permit all coordinates to grow. With $\Theta(n/\ln\ln n)$ samples, their general sampler also appears to accommodate the polylogarithmic special case after a change of parameters, so that difference in stated scope should not be viewed as a limitation of their framework. A more substantial distinction is running time: at this sample scale, their proved bound has leading constant $8$ in the base-two exponent, compared with $1/2$ here. They also give a broader sample--time tradeoff for their stated norm promise, obtaining subexponential time with any $m=\omega(n/\ln n)$, $m=o(n)$, at the cost of a larger running time as the sample count decreases.

% Finally, both papers' proved LWE results concern decision. Han--Gao--Hu discuss search under an assumed reduction that preserves the number of samples; our theorem proves both decision and search with the stated number of original samples. Their resource bounds are in expectation, whereas ours gives a bound on bit time with high probability and on bit space in the worst case.

\end{document}